\colorlet{darkblue}{blue!50!black}
\colorlet{darkgreen}{green!50!black}
\crefname{section}{Sect.}{Sect.}
\Crefname{section}{Section}{Sections}
\spnewtheorem*{remark*}{Remark}{\itshape}{}
\newenvironment{Definition}{\definition\upshape}{\enddefinition}
\newenvironment{Example}{\example}{\hfill$\lrcorner$\endexample}
\newcommand{\IntroParagraph}[1]{\bigskip\noindent\textbf{#1.}}
\renewcommand{\phi}{\varphi}
\renewcommand{\theta}{\vartheta}
\renewcommand{\emptyset}{\varnothing}
\renewcommand{\epsilon}{\varepsilon}
\DeclareMathOperator{\lfp}{\mathbf{lfp}}
\DeclareMathOperator{\gfp}{\mathbf{gfp}}
\newcommand{\Inf}{\bigsqcap}
\newcommand{\Sup}{\bigsqcup}
\newcommand{\Bool}{\mathbb{B}}
\newcommand{\Nat}{\mathbb{N}}
\newcommand{\Ninf}{{\mathbb N}^{\infty}}
\newcommand{\Sorp}{\mathrm{Sorp}}
\newcommand{\Sinf}{{\mathbb S}^{\infty}}
\newcommand{\Trop}{\mathbb{T}}
\newcommand{\Vit}{\mathbb{V}}
\newcommand{\Luk}{\mathbb{L}}
\newcommand{\NN}{\mathbb{N}}
\newcommand{\RR}{\mathbb{R}}
\newcommand*{\XX}{{\bm X}}
\newcommand*{\EE}{\mathcal{E}}
\newcommand*{\Max}{\mathsf{Maximals}}
\newcommand*{\absorbs}{\succeq}
\renewcommand{\AA}{{\bm A}}
\renewcommand*{\root}{\varepsilon}
\newcommand*{\yield}{\mathsf{yd}}
\newcommand*{\mon}{\mathsf{mon}}
\newcommand*{\var}{\mathsf{var}}
\newcommand*{\Trees}{\mathcal{T}}
\newcommand*{\varcount}[2]{|{#1}|_{#2}}
\newcommand*{\moncount}[3]{|{#1}|_{#2,#3}}
\newcommand{\cut}[2]{\scalebox{1}[1.3]{$\|$}\vphantom{|}_{#1}^{#2}}
\newcommand{\scut}[2]{\|_{#1}^{#2}} %small cut (when used as index)
\newcommand{\bb}{\tup b}
\newcommand{\zero}{\tup 0}
\newcommand{\one}{\tup 1}
\newcommand*{\co}{\colon}
\newcommand*{\tup}[1]{\mathbf{#1}}
\newcommand{\prefixneq}{\sqsubset}
\newcommand{\bigmid}{\;\big|\;}
\newcommand{\Bigmid}{\;\Big|\;}
\newcommand{\biggmid}{\;\bigg|\;}
\newcommand{\eqInfo}[1]{\overset{\clap{\scriptsize #1}}{=}}
\newcommand{\leInfo}[1]{\overset{\clap{\scriptsize #1}}{\le}}
\newcommand{\geInfo}[1]{\overset{\clap{\scriptsize #1}}{\ge}}
\newcommand{\eqNote}{\eqInfo{!}}
\newcommand{\leNote}{\leInfo{!}}
\newcommand{\eqStar}{\eqInfo{$(*)$}}
\newcommand{\leStar}{\leInfo{$(*)$}}
\newcommand{\geStar}{\geInfo{$(*)$}}
\newcommand{\eqLabel}[1]{\eqInfo{$(#1)$}}
\newcommand{\leLabel}[1]{\leInfo{$(#1)$}}
\newcommand{\geLabel}[1]{\geInfo{$(#1)$}}
\newcommand\eqInfo*[1]{\;\eqInfo{#1}\;}
\newcommand\leInfo*[1]{\;\leInfo{#1}\;}
\newcommand\geInfo*[1]{\;\geInfo{#1}\;}
\newcommand\IffInfo*[1]{\;\IffInfo{#1}\;}
\newcommand\eqNote*{\;\eqNote\;}
\newcommand\leNote*{\;\leNote\;}
\newcommand\eqQuestion*{\;\eqQuestion\;}
\newcommand\eqStar*{\;\eqStar\;}
\newcommand\leStar*{\;\leStar\;}
\newcommand\geStar*{\;\geStar\;}
\newcommand\eqLabel*[1]{\eqInfo*{$(#1)$}}
\newcommand\leLabel*[1]{\leInfo*{$(#1)$}}
\newcommand\geLabel*[1]{\geInfo*{$(#1)$}}
\def\CSum{%
  \let\CSum@subarg\relax
  \let\CSum@suparg\relax
  \CSum@testsub}
\def\CSum@sub_#1{%
  \def\CSum@subarg{#1}%
  \ifx\CSum@suparg\relax
   \expandafter\CSum@testsup
  \else
  \expandafter\CSum@doit
  \fi}
\def\CSum@sup^#1{%
  \def\CSum@suparg{#1}%
  \CSum@testsub@b
}
\def\CSum@testsup{%
  \let\CSum@suparg\@empty
  \@ifnextchar^\CSum@sup\CSum@testsub}
\def\CSum@testsup@b{%
  \let\CSum@suparg\@empty
  \@ifnextchar^\CSum@sup\CSum@doit}
\def\CSum@testsub{%
  \@ifnextchar_\CSum@sub\CSum@testsup@b}
\def\CSum@testsub@b{%
  \@ifnextchar_\CSum@sub\CSum@doit}
\def\CSum@doit{%
\;\sum_{\mathclap{\CSum@subarg}}^{\mathclap{\CSum@suparg}}\,}
\def\RSum{%
  \let\RSum@subarg\relax
  \let\RSum@suparg\relax
  \RSum@testsub}
\def\RSum@sub_#1{%
  \def\RSum@subarg{#1}%
  \ifx\RSum@suparg\relax
   \expandafter\RSum@testsup
  \else
  \expandafter\RSum@doit
  \fi}
\def\RSum@sup^#1{%
  \def\RSum@suparg{#1}%
  \RSum@testsub@b
}
\def\RSum@testsup{%
  \let\RSum@suparg\@empty
  \@ifnextchar^\RSum@sup\RSum@testsub}
\def\RSum@testsup@b{%
  \let\RSum@suparg\@empty
  \@ifnextchar^\RSum@sup\RSum@doit}
\def\RSum@testsub{%
  \@ifnextchar_\RSum@sub\RSum@testsup@b}
\def\RSum@testsub@b{%
  \@ifnextchar_\RSum@sub\RSum@doit}
\def\RSum@doit{%
\smashoperator[r]{\sum_{\RSum@subarg}^{\RSum@suparg}}\,}
\def\LSum{%
  \let\LSum@subarg\relax
  \let\LSum@suparg\relax
  \LSum@testsub}
\def\LSum@sub_#1{%
  \def\LSum@subarg{#1}%
  \ifx\LSum@suparg\relax
   \expandafter\LSum@testsup
  \else
  \expandafter\LSum@doit
  \fi}
\def\LSum@sup^#1{%
  \def\LSum@suparg{#1}%
  \LSum@testsub@b
}
\def\LSum@testsup{%
  \let\LSum@suparg\@empty
  \@ifnextchar^\LSum@sup\LSum@testsub}
\def\LSum@testsup@b{%
  \let\LSum@suparg\@empty
  \@ifnextchar^\LSum@sup\LSum@doit}
\def\LSum@testsub{%
  \@ifnextchar_\LSum@sub\LSum@testsup@b}
\def\LSum@testsub@b{%
  \@ifnextchar_\LSum@sub\LSum@doit}
\def\LSum@doit{%
\;\smashoperator[l]{\sum_{\LSum@subarg}^{\LSum@suparg}}}
\def\CProd{%
  \let\CProd@subarg\relax
  \let\CProd@suparg\relax
  \CProd@testsub}
\def\CProd@sub_#1{%
  \def\CProd@subarg{#1}%
  \ifx\CProd@suparg\relax
   \expandafter\CProd@testsup
  \else
  \expandafter\CProd@doit
  \fi}
\def\CProd@sup^#1{%
  \def\CProd@suparg{#1}%
  \CProd@testsub@b
}
\def\CProd@testsup{%
  \let\CProd@suparg\@empty
  \@ifnextchar^\CProd@sup\CProd@testsub}
\def\CProd@testsup@b{%
  \let\CProd@suparg\@empty
  \@ifnextchar^\CProd@sup\CProd@doit}
\def\CProd@testsub{%
  \@ifnextchar_\CProd@sub\CProd@testsup@b}
\def\CProd@testsub@b{%
  \@ifnextchar_\CProd@sub\CProd@doit}
\def\CProd@doit{%
\;\prod_{\mathclap{\CProd@subarg}}^{\mathclap{\CProd@suparg}}\,}
\def\RProd{%
  \let\RProd@subarg\relax
  \let\RProd@suparg\relax
  \RProd@testsub}
\def\RProd@sub_#1{%
  \def\RProd@subarg{#1}%
  \ifx\RProd@suparg\relax
   \expandafter\RProd@testsup
  \else
  \expandafter\RProd@doit
  \fi}
\def\RProd@sup^#1{%
  \def\RProd@suparg{#1}%
  \RProd@testsub@b
}
\def\RProd@testsup{%
  \let\RProd@suparg\@empty
  \@ifnextchar^\RProd@sup\RProd@testsub}
\def\RProd@testsup@b{%
  \let\RProd@suparg\@empty
  \@ifnextchar^\RProd@sup\RProd@doit}
\def\RProd@testsub{%
  \@ifnextchar_\RProd@sub\RProd@testsup@b}
\def\RProd@testsub@b{%
  \@ifnextchar_\RProd@sub\RProd@doit}
\def\RProd@doit{%
\smashoperator[r]{\prod_{\RProd@subarg}^{\RProd@suparg}}\,}
\def\LProd{%
  \let\LProd@subarg\relax
  \let\LProd@suparg\relax
  \LProd@testsub}
\def\LProd@sub_#1{%
  \def\LProd@subarg{#1}%
  \ifx\LProd@suparg\relax
   \expandafter\LProd@testsup
  \else
  \expandafter\LProd@doit
  \fi}
\def\LProd@sup^#1{%
  \def\LProd@suparg{#1}%
  \LProd@testsub@b
}
\def\LProd@testsup{%
  \let\LProd@suparg\@empty
  \@ifnextchar^\LProd@sup\LProd@testsub}
\def\LProd@testsup@b{%
  \let\LProd@suparg\@empty
  \@ifnextchar^\LProd@sup\LProd@doit}
\def\LProd@testsub{%
  \@ifnextchar_\LProd@sub\LProd@testsup@b}
\def\LProd@testsub@b{%
  \@ifnextchar_\LProd@sub\LProd@doit}
\def\LProd@doit{%
\;\smashoperator[l]{\prod_{\LProd@subarg}^{\LProd@suparg}}}
\newcommand{\Rplus}{+_\RR}
\newcommand{\vvv}[3]{{\scriptsize\begin{pmatrix}#1 \\ #2 \\ #3\end{pmatrix}}}
\newcommand{\Fmaps}{\xmapsto{\!F\!}}
\newcommand{\Fmapsback}{\,\begin{tikzpicture}[baseline]
    \draw [|/.tip={Bar[width=.8ex,round]}] (0,1.2ex)
    edge [|->,out=0,in=0,looseness=3]
    node [above,yshift=1pt,font=\scriptsize] {$\!F$}
    (0,-.5ex);
\end{tikzpicture}}
\tikzset{
    short/.style={ shorten >=#1, shorten <=#1 },
    vertex/.style={draw=black,circle,minimum size=4mm,inner sep=0pt},
    arr/.style={->,>=stealth',short=2pt},
    annot/.style={font=\scriptsize,above},
}
\begin{document}

\title{Computing Least and Greatest Fixed Points\\ in Absorptive Semirings}

\author{Matthias Naaf}
\authorrunning{M. Naaf}
\institute{RWTH Aachen University, Germany\\\email{naaf@logic.rwth-aachen.de}\\}

\maketitle

%%%%%%%%%%%%%%%%%%%%%%%%%%%%%%%%
%%  ABSTRACT
%%%%%%%%%%%%%%%%%%%%%%%%%%%%%%%%

\begin{abstract}
We present two methods to algorithmically compute both least and greatest solutions
of polynomial equation systems over absorptive semirings (with certain completeness and continuity assumptions), such as the tropical semiring.
Both methods require a polynomial number of semiring operations,
including semiring addition, multiplication and an infinitary power operation.

Our main result is a closed-form solution for least and greatest fixed points based on the fixed-point iteration.
The proof builds on the notion of (possibly infinite) derivation trees;
a careful analysis of the shape of these trees allows us to collapse
the fixed-point iteration to a linear number of steps.
The second method is an iterative symbolic computation in the semiring of generalized absorptive polynomials, largely based on results on Kleene algebras.
\end{abstract}

\keywords{Fixed-Point Computation, Absorptive Semirings, Semiring Provenance}

%%%%%%%%%%%%%%%%%%%%%%%%%%%%%%%%
%%  BODY
%%%%%%%%%%%%%%%%%%%%%%%%%%%%%%%%

\section{Introduction}

A recent line of research on semiring provenance analysis for databases \cite{GreenKarTan07,DeutchMilRoyTan14,GreenTan17}, logic \cite{GraedelTan17,DannertGraNaaTan21} and games \cite{GraedelTan20} has identified the class of absorptive, commutative semirings as an appropriate domain for provenance semantics of fixed-point logics \cite{DannertGraNaaTan21} and games with fixed-point semantics, such as Büchi or parity games.
The underlying idea is to replace the Boolean evaluation of formulae by computations in certain semirings.
From this point of view, a formula is essentially a polynomial expression over some semiring, and fixed-point formulae evaluate to least or greatest solutions of polynomial equation systems.
To guarantee the existence and meaningfulness (when interpreted as provenance information) of these fixed points, one assumes that the semiring is equipped with a natural order that is a complete lattice (for the existence) and that the semiring is absorptive, that is, $1 + a = 1$ for all elements $a$.
Absorption guarantees a duality of the semiring operations in the sense that addition is increasing, with least element $0$, while multiplication is decreasing, with greatest element $1$, and it is this property that leads to meaningful provenance information of greatest fixed points \cite{DannertGraNaaTan21}.

This raises the question how one can (efficiently) compute least and greatest solutions of polynomial equation systems over such semirings.
The textbook approach is the fixed-point iteration: start by setting all indeterminates to the smallest (or greatest) semiring value, then repeatedly evaluate the equations to obtain new values for all indeterminates.
In the Boolean setting, this terminates in at most $n$ steps on $n$ indeterminates (due to monotonicity),
but we are also interested in larger and especially infinite semirings such as the tropical semiring\footnote{We use $\Rplus$ for the addition on $\RR$ to distinguish it from the semiring operation $+$.} $\Trop = (\RR_{\ge 0} \cup \{\infty\}, \min, \Rplus, \infty, 0)$.
Several techniques have been developed to compute least solutions. For $\omega$-continuous semirings (where suprema exist and are compatible with the semiring operations),
Hopkins and Kozen \cite{Kleene} have defined a faster iteration scheme based on differentials and more recently, Esparza, Kiefer and Luttenberger \cite{Newton} have used this idea to generalize Newton's method to $\omega$-continuous semirings.
This works surprisingly well for a wide variety of semirings (in fact, their results for idempotent semirings subsume our result for least fixed points).
Gondran and Minoux \cite{GondranMinoux08} use quasi-inverses of elements and matrices to compute least solutions of linear systems and univariate polynomial equations over dioids.
This applies to absorptive semirings, where elements have the trivial quasi-inverse $a^* = 1$ and hence quasi-inverses of matrices always exist.

Our goal is to complement the results in \cite{Kleene,Newton} by also computing \emph{greatest} solutions, as our motivation stems from semiring provenance where both least and greatest fixed points are considered.
To this end, we work with absorptive, \emph{fully} continuous semirings (requiring continuity for both suprema and infima).

\IntroParagraph{Example}
Consider the following graph whose edges are annotated by cost values in the tropical semiring.
A natural example of a greatest fixed point is the minimal cost of an infinite path.
This corresponds to the greatest solution of the equation system given on the right, where each node is represented by an indeterminate and costs appear as coefficients (notice that the right-hand sides are indeed polynomial expressions in terms of the semiring operations).

\begin{center}
\begin{tikzpicture}[node distance=1.3cm, baseline=(b.north)]
\node [baseline,vertex] (a) {$a$};
\node [baseline,vertex,right of=a] (b) {$b$};
\node [baseline,vertex,right of=b] (c) {$c$};
\path [arr]
    (a) edge [loop above] node [annot] {$1$} ()
    (c) edge [loop above] node [annot] {$0$} ()
    (b) edge node [annot] {$1$} (a)
    (b) edge node [annot] {$20$} (c);
\end{tikzpicture}
\hspace{1cm}
$\begin{aligned}
X_a &= 1 \Rplus X_a \\
X_b &= \min(1 \Rplus X_a, 20 \Rplus X_c) \\
X_c &= 0 \Rplus X_c
\end{aligned}$
\end{center}

When we speak of \emph{least} or \emph{greatest} solutions, we always refer to the natural order of the semiring.
In the case of the tropical semiring, this is the inverse of the standard order, so $\infty <_\Trop 20 <_\Trop 1 <_\Trop 0$.
While the least solution of the above system is trivially $X_a = X_b = X_c = \infty$,
the fixed-point iteration for the greatest solution is infinite:
\[
    \vvv 0 0 0 \mapsto \vvv 1 1 0 \mapsto \vvv 2 2 0 \mapsto \vvv 3 3 0 \mapsto \cdots \mapsto
    \vvv {20} {20} 0 \mapsto \vvv {21} {20} 0  \mapsto \vvv {22} {20} 0 \mapsto \vvv {23} {20} 0
    \mapsto \cdots
\]
and converges to the greatest solution: $X_a = \infty$, $X_b = 20$ and $X_c = 0$.

\IntroParagraph{Main Result}
The essential idea to compute such solutions is that greatest fixed points are composed of two parts: a cyclic part that is repeated indefinitely (the loop at $a$ or $c$) and a reachability part to get to the cycle (the edges from $b$).
As both parts can consist of at most $n$ nodes, all information we need is already present after $n$ steps of the fixed-point iteration; we can use this information to abbreviate the iteration.
The formal proof of this observation is based on (infinite) derivation trees, inspired by the derivation trees in the analysis of Newton's method \cite{Newton} and infinite strategy trees in \cite{DannertGraNaaTan21}.
We show that these trees provide an alternative description of the fixed-point iteration; a careful analysis of the shape of the derivation trees then leads to our main result:

\begin{theorem}
Let $F$ be the operator induced by a polynomial equation system in $n$ indeterminates over an absorptive, fully-continuous, commutative semiring.
We can compute in a polynomial number of semiring operations:
\begin{itemize}
\item the least solution: $F^n(\zero)$,
\item the greatest solution: $F^n( \, F^n(\one)^\infty \, )$.
\end{itemize}
\end{theorem}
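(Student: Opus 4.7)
My plan is to reinterpret the Kleene iterates via derivation trees and then combine pigeonhole over the $n$ indeterminates with absorption to collapse the iteration.

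First I set up a correspondence between Kleene iterates and derivation trees: for the system $X_i = P_i(\XX)$, a derivation tree rooted at $X_i$ chooses a monomial of $P_i$ and recurses at each variable factor, with coefficient leaves contributing their values. A straightforward induction shows that $F^k(\zero)_i$ equals the semiring sum of values over all \emph{complete} trees of depth at most $k$, while $F^k(\one)_i$ equals the sum over depth-$\leq k$ trees in which as-yet-unexpanded leaves are labeled by variables and contribute $\one$. Using the $\omega$-continuity of $F$ this identifies $\lfp(F)_i$ with the sum over all finite complete trees; using full continuity in the dual direction, it identifies $\gfp(F)_i$ with the sum over all (possibly infinite) trees, interpreted through infima of truncations.

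For the least fixed point, the identity $\lfp(F) = F^n(\zero)$ follows from a pigeonhole-and-contraction argument. If a complete tree $T$ has depth $>n$, then some root-to-leaf path revisits a variable $X_i$ at nodes $u$ above $v$; replacing the subtree rooted at $u$ by the subtree rooted at $v$ yields a strictly shorter complete tree $T'$ with $\mathrm{val}(T) = \mathrm{val}(T') \cdot c$, where $c$ collects the extra factors on the cycle between $u$ and $v$. Since multiplication is decreasing in an absorptive semiring, $\mathrm{val}(T) + \mathrm{val}(T') = \mathrm{val}(T')$, so $T$ is absorbed by $T'$; iterating, every complete tree is absorbed by one of depth $\leq n$, and the two sums agree.

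The greatest fixed point case is the main work. The idea is to decompose an arbitrary derivation tree into a \emph{reachability prefix} of depth $\leq n$ together with, at each open leaf labeled $X_j$ of this prefix, a continuation that either terminates or iterates through "loops" of depth $\leq n$ returning to $X_j$. The existence of such a decomposition combines König's lemma with pigeonhole: on any infinite path the first repeated variable occurs within the first $n+1$ positions, and iterating the observation extracts successive loops. A single loop from $X_j$ back to $X_j$, with its finite-tree side-branches, has a value that after summation is subsumed by $F^n(\one)_j$; repeating the loop produces $F^n(\one)_j^\infty$ at that leaf; and prepending the reachability prefix corresponds to one more application of $F^n$, yielding the formula $\gfp(F) = F^n(F^n(\one)^\infty)$.

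I expect the main obstacle to be turning this informal decomposition into two genuine inequalities of semiring sums. The direction $\gfp(F) \leq F^n(F^n(\one)^\infty)$ should follow from $\gfp(F) \leq F^n(\one)$ componentwise together with the fact that $\gfp(F)$ is a fixed point of $F^n$, after which one applies $F^n$ to both sides. The reverse inequality requires the combinatorial decomposition above, with distributivity expanding $F^n(F^n(\one)^\infty)$ as a sum of tree-assemblies and absorption ensuring that each such assembly is dominated by the value of an honest infinite derivation tree and hence contributes only to $\gfp(F)$. Full continuity is used throughout to justify swapping the semiring sums with the infima defining $(\cdot)^\infty$ and the truncation limits defining $\gfp$. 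Once these identities hold, the complexity claim is immediate: each formula uses $\OO(n)$ evaluations of $F$ and at most $n$ applications of $(\cdot)^\infty$, all polynomial in the size of the system.
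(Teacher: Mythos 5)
Your overall architecture---derivation trees as an alternative description of the Kleene iterates, pigeonhole-plus-contraction for the least solution, and a decomposition into a depth-$n$ reachability prefix followed by repeated loops for the greatest solution---is exactly the route the paper takes, and your least-fixed-point argument is essentially correct. The genuine gap is in how you split the greatest-fixed-point identity into two inequalities: you have the difficulty backwards, and the argument you give for the direction you call easy does not work. You claim $\gfp(F) \le F^n(F^n(\one)^\infty)$ follows from $\gfp(F) \le F^n(\one)$ plus monotonicity of $F^n$ applied to the fixed-point equation; for that you would need $\gfp(F) \le F^n(\one)^\infty$, which does not follow, since absorption forces $b^\infty \le b$, so applying $(\cdot)^\infty$ moves $F^n(\one)$ \emph{down}, possibly strictly below $\gfp(F)$. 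Concretely, in the running example over $\Trop$ one has $\gfp(F)_{X_b} = 20$ but $(F^3(\one)^\infty)_{X_b} = \infty$, and $20 \not\le_\Trop \infty$. The inequality $\gfp(F) \le F^n(F^n(\one)^\infty)$ is precisely the hard direction---it asserts that collapsing the tail of the iteration into an infinitary power loses nothing---and it is exactly where your prefix-plus-loops decomposition of an \emph{arbitrary} infinite tree must be deployed: every infinite derivation tree $T$ is rebuilt as a tree $T'$ whose subtrees at depth $n$ are deterministic and use only monomials occurring infinitely often in $T$, without increasing any monomial count, so that $\yield(T) \le \yield(T' \cut n \bb)$ with $\bb = F^n(\one)^\infty$.

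The converse inequality $F^n(F^n(\one)^\infty) \le \gfp(F)$ is the comparatively easy one: expand by distributivity, dominate each factor $\yield(S \cut n \one)^\infty$ by $\yield(S')^\infty \le \yield(S')$ for a deterministic tree $S'$, and glue the prefix to these trees to obtain an honest infinite derivation tree. Two further points you wave at but should not underestimate: exchanging the infimum defining $\gfp$ with the infinite sum over trees is not a generic continuity fact (infima do not commute with arbitrary suprema in these semirings); the paper first transfers everything to $\Sinf[\AA]$ via the universal property, where infima of descending chains of polynomials can be computed monomial-wise, and it needs a separate puzzle lemma to realize an infimum of truncations of \emph{different} trees as the yield of a single infinite tree.
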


Here, $a^\infty$ is the \emph{infinitary power} operation $a^\infty \coloneqq \Inf_{n \in \NN} a^n$ which is well-defined (and usually easy to compute) in the absorptive semirings we consider.
For instance, in the tropical semiring we have $0^\infty = 0$ and $a^\infty = \infty$ for $a \neq 0$.

\IntroParagraph{Symbolic Approach}
Our second approach is a technique to eliminate indeterminates one by one, based on the work of Hopkins and Kozen on Kleene algebras \cite{Kleene}.
We apply their symbolic approach to the semiring $\Sinf[\XX]$ of generalized absorptive polynomials, which is perhaps the most relevant semiring for provenance analysis with fixed points, and extend it to include greatest solutions.

\IntroParagraph{Outline}
This paper is structured as follows:
Section 2 introduces the problem setting, in particular the relevant class of semirings, as well as derivation trees.
Section 3 establishes the connection between derivation trees and the fixed-point iteration, and Section 4 builds on this concept to prove our main result.
The symbolic approach for absorptive polynomials is discussed in Section 5.
There are two appendices: Appendix A contains few auxiliary proofs and Appendix B discusses an axiomatization of the infinitary power operation.

\section{Preliminaries}

This section introduces polynomial equation systems, gives an overview on the semirings we are interested in and introduces the notion of derivation trees.

\subsection{Polynomial Equation Systems}

Throughout the paper, we always fix a finite set $\XX = \{ X_1,\dots,X_\ell\}$ of $\ell$ pairwise different indeterminates.
A monomial over $\XX$ is a product of powers of indeterminates, represented as mapping $m \co \XX \to \Nat$ assigning exponents to the indeterminates.

We use bold symbols to denote tuples: $\tup a = (a_1, \dots, a_\ell)$.
In particular, $\tup 0 = (0, \dots, 0)$ and $\tup 1 = (1, \dots, 1)$.
To simplify the presentation, we often avoid numbered indices.
Given $\XX = \{ X_1, \dots, X_\ell \}$, we instead index tuples by these indeterminates.
That is, for a tuple $\tup a = (a_1,\dots,a_\ell)$ and an indeterminate $X \in \XX$, we write $\tup a_X$ for the entry $a_i$ such that $X_i = X$.

\begin{Definition}\label{defPolynomial}
A \emph{polynomial} $P$ over a semiring $(K,+,\cdot,0,1)$ and indeterminates $\XX$ is a finite formal sum of the form $P = \sum_{i=1}^k c_i \cdot m_i$, where the $m_i$ are pairwise different monomials over $\XX$ and $c_i \in K \setminus \{0\}$ are arbitrary coefficients.

Abusing notation, we write $m \in P$ if there is an $i$ with $m = m_i$, and $c \cdot m \in P$ if additionally $c = c_i$.
We may write $P(X_1,\dots,X_\ell)$ to make the indeterminates explicit.
Then, $P(a_1,\dots,a_\ell) \in K$ is the semiring value obtained by instantiating each indeterminate $X_i$ by $a_i \in K$ and evaluating the resulting expression in $K$.
\end{Definition}

\begin{Definition}\label{defEquationSystem}
A \emph{polynomial equation system} $\EE$ over a semiring $K$ and indeterminates $\XX = \{X_1,\dots,X_\ell\}$ is a family of equations $\EE \co \big(X_i = P_i(X_1,\dots,X_\ell)\big)_{1 \le i \le \ell}$
with polynomials $P_i$ over $\XX$ and $K$.

We associate with $\EE$ the operator $F_\EE \co K^\ell \to K^\ell$ defined by $F_\EE(a_1,\dots,a_\ell)_X = P_X(a_1,\dots,a_\ell)$, for $X \in \XX$.
The least (greatest) solution to $\EE$ is thus the least (greatest) fixed point of $F_\EE$.
We drop the index if $\EE$ is clear from the context.
\end{Definition}

Notice that these are quadratic systems, with the number of equations equal to the number of indeterminates.
We recall the example from the introduction in the tropical semiring (where semiring addition is $\min$ and semiring multiplication is $\Rplus$).
Using $\XX = \{X_a,X_b,X_c\}$, we refer to the polynomial equation system as $(X = P_X)_{X \in \XX}$.
For example, $P_{X_b}$ is the polynomial $\min(1 \Rplus X_a, 20 \Rplus X_c)$ consisting of the two coefficient-monomial pairs $1 \Rplus X_a$ and $20 \Rplus X_c$.

\subsection{Semirings}

\begin{Definition}
A (commutative) \emph{semiring} is an algebraic structure
$(K,+,\cdot,0,1)$, with $0\neq1$,  such that $(K,+,0)$
and $(K,\cdot,1)$ are commutative monoids, $\cdot$
distributes over $+$, and $0\cdot a=a\cdot 0=0$.
It is \emph{idempotent} if $a+a = a$ and \emph{absorptive} if $1+a=1$, for all $a \in K$.

In an idempotent semiring $K$, the \emph{natural order} $\le_K$ is the partial order
with $a \le_K b$ if $a + b = b$, for $a,b \in K$.
We drop the index if $K$ is clear from the context.
\end{Definition}

All semirings considered in this paper are commutative and absorptive (except for $\Ninf$ below).
Absorption (also called \emph{0-closed} or \emph{bounded} \cite{Mohri02})
implies idempotence and is equivalent to $1$ being the $\le_K$-maximal element and to multiplication being decreasing, i.e., $ab \le_K a$ for all $a,b \in K$ (dually to increasing addition).

To guarantee the existence of fixed points, we further require that the natural order is a complete lattice so that suprema $\Sup$ and infima $\Inf$ always exist (with respect to $\le_K$).
In addition, we make a continuity assumption stating that the semiring operations commute with the lattice operations on chains (a chain is a totally ordered set).
This is crucial for most of our proofs, but does not seem to be a strong restriction in practice: all natural examples of complete-lattice semirings we are aware of are in fact also fully continuous (a notable exception are binary relations with union and composition, but the latter is not commutative).

\begin{Definition}
An idempotent semiring $K$ is \emph{fully continuous} if $\le_K$ is a complete lattice and for all $a \in K$, all nonempty chains $C \subseteq K$ and $\circ \in \{+,\cdot\}$,
\[
    \Sup (a \circ C) = a \circ \Sup C
    \quad \text{and} \quad
    \Inf (a \circ C) = a \circ \Inf C.    
\]
A homomorphism $h \co K_1 \to K_2$ on fully-continuous semirings is \emph{fully continuous} if $h(\Sup C) = \Sup h(C)$ and $h(\Inf C) = \Inf h(C)$, for all nonempty chains $C \subseteq K_1$.
\end{Definition}

\begin{remark}\label{remarkContinuitySets}
We note that requiring suprema and infima of chains in fact suffices to guarantee the existence of fixed points.
However, in idempotent semirings this already implies the existence of arbitrary suprema and infima \cite{DannertGraNaaTan21} and is thus equivalent to our assumption of a complete lattice.
For continuity, the situation is more complicated:
compatibility of $+$ and $\cdot$ with suprema of chains implies compatibility with arbitrary suprema (see \cref{Appendix:Proofs}), but the same does, in general, not hold for infima.
\end{remark}

Following the above remark, we see that fully-continuous semirings are similar to quantales.
Indeed, since multiplication is compatible with suprema of arbitrary sets, every absorptive, fully-continuous semiring $K$ induces a quantale $(K, \Sup, \cdot)$ with the top element $1$ as unit.
The main difference to quantales is that we additionally require compatibility of semiring operations with infima (but only of chains).
Another related concept is that of topological dioids in \cite{GondranMinoux08} which requires that both operations are compatible with suprema of countable chains.

Since multiplication is decreasing in absorptive semirings,
powers of an element $a$ form a descending chain $1 \ge a \ge a^2 \ge \dots$ whose infimum we denote by $a^\infty$.

\begin{Definition}\label{defInfpow}
In an absorptive, fully-continuous semiring $K$, the \emph{infinitary power} operation is defined by
$a^\infty \coloneqq \Inf_{n \in \Nat} a^n$, for each $a \in K$,
and $\tup a^\infty \coloneqq (a_1^\infty,\dots,a_\ell^\infty)$ for tuples $\tup a \in K^\ell$.
\end{Definition}

Using continuity of multiplication, one can easily verify the properties $(ab)^\infty = a^\infty b^\infty$, $(a^n)^\infty = a^\infty$ and $(a+b)^\infty = a^\infty + b^\infty$ (see \cite{DannertGraNaaTan19} for details).
We remark that it is usually quite easy to compute the infinitary power.
One can further define infinite sum and product operations on families $(a_i)_{i \in I}$ over $K$ with arbitrary index set $I$.
Summation is simply defined as supremum, products can be defined as infimum over finite subproducts (see \cite[Appendix]{DannertGraNaaTan19}).
Here we only need infinite products over finite domain $\{a_i \mid i \in I\}$ (as the polynomials we consider have finitely many coefficients), which are commutative, associative and commute with fully-continuous  homomorphisms and the infinitary power.

Fully-continuous homomorphisms further preserve fixed points of monotone functions, in particular least and greatest solutions of polynomial systems:

\begin{lemma}\label{lemHomoPreservesSolutions}
Let $h \co K_1 \to K_2$ be a fully-continuous homomorphism on absorptive, fully-continuous semirings.
Let $\EE \co (X_i = P_i)_{1 \le i \le n}$ be a polynomial equation system over $K_1$.
Let $h(\EE) \co (X_i = h(P_i))_{1 \le i \le n}$ result from $\EE$ by applying $h$ to all coefficients.
Then, $\lfp(F_{h(\EE)}) = h(\lfp(F_\EE))$ and $\gfp(F_{h(\EE)}) = h(\gfp(F_\EE))$.
\end{lemma}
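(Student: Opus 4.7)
The plan is to reduce the claim to the Kleene-style characterization of least and greatest fixed points as chain suprema and infima of iterates, and then invoke full continuity of $h$. I would proceed in three steps: show that $h$ intertwines the two operators; characterize the fixed points as limits of the iteration; and transfer the characterization across $h$.

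For the intertwining step, I would establish the componentwise identity $h \circ F_\EE = F_{h(\EE)} \circ h$ on $K_1^\ell$. Since each component of $F_\EE$ is a polynomial $P_i = \sum_j c_{ij} m_{ij}$, a finite sum of finite products of coefficients and indeterminate values, this reduces to the fact that semiring homomorphisms preserve finite sums and finite products; the crucial point is that in $h(\EE)$ the coefficients are by construction $h(c_{ij})$, so $h(P_i(\tup a)) = h(P_i)(h(\tup a))$. A routine induction on $n$ then yields $h(F_\EE^n(\zero)) = F_{h(\EE)}^n(\zero)$ and $h(F_\EE^n(\one)) = F_{h(\EE)}^n(\one)$, using also $h(0) = 0$ and $h(1) = 1$.

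For the characterization step, I would show that $\lfp(F_\EE) = \Sup_{n \in \Nat} F_\EE^n(\zero)$ and $\gfp(F_\EE) = \Inf_{n \in \Nat} F_\EE^n(\one)$. Full continuity of the semiring makes each polynomial operator preserve suprema and infima of nonempty chains (componentwise on $K_1^\ell$, using that products of two ascending chains form an ascending chain whose supremum is the product of the suprema, and dually for infima of descending chains). Monotonicity of $F_\EE$ makes the iterate sequences $(F_\EE^n(\zero))_n$ ascending and $(F_\EE^n(\one))_n$ descending, so the standard Kleene and dual Kleene arguments identify their supremum and infimum with $\lfp(F_\EE)$ and $\gfp(F_\EE)$. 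The same reasoning applies over $K_2$ to $F_{h(\EE)}$.

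The final step is then a direct calculation using full continuity of $h$:
\[
h(\lfp(F_\EE)) = h\!\left(\Sup_n F_\EE^n(\zero)\right) = \Sup_n h(F_\EE^n(\zero)) = \Sup_n F_{h(\EE)}^n(\zero) = \lfp(F_{h(\EE)}),
\]
and symmetrically for $\gfp$ using $\Inf$. The main delicate point is the characterization step, where full continuity is essential: without it the iteration need not converge in $\omega$ steps, as the introduction's example already illustrates for the greatest fixed point. Once this Kleene-style characterization is in place, the remainder is mere bookkeeping.
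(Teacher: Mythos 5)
Your proposal is correct and follows essentially the same route as the paper: Kleene's fixed-point theorem to express $\lfp$ and $\gfp$ as the supremum/infimum of the iteration chains, an induction showing $h(F_\EE^n(\zero)) = F_{h(\EE)}^n(\zero)$ (and dually for $\one$) using that $h$ is a homomorphism and the operators are polynomial, and full continuity of $h$ to push it through the limit. The paper's proof is just a compressed version of the same argument, leaving the intertwining induction and the chain-continuity of $F_\EE$ as remarks.
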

\begin{proof}
Recall that the semiring operations are fully continuous in $K_1$ and $K_2$.
Hence $F_\EE$ and $F_{h(\EE)}$ are fully continuous as well, and by Kleene's fixed-point Theorem and the continuity of $h$, we get
\[
    h(\lfp(F_\EE)) =
    h\Big( \Inf_{n \in \NN} F_{\EE}^n(\tup 0) \Big) =
    \Inf_{n \in \NN} h\big(F_{\EE}^n(\tup 0) \big) \eqStar
    \Inf_{n \in \NN} \big(F_{h(\EE)}^n(\tup 0) \big) =
    \lfp(F_{h(\EE)}),
\]
where $(*)$ is easy to see by induction, since $h$ is a homomorphism and the operators are defined by polynomials. The proof for greatest solutions is symmetric.
\qed
\end{proof}

\subsubsection*{Examples.}

Some examples of absorptive, fully-continuous semirings are:
\begin{itemize}
\item The \emph{Boolean semiring} $\Bool=(\{0,1\},\vee,\wedge,0,1)$ is the habitat of
logical truth.
\item $\Trop=(\mathbb{R}_{\ge0}^{\infty},\min,\Rplus,\infty,0)$
is the \emph{tropical} semiring used for cost computations.
\item The \emph{Viterbi} semiring $\Vit=([0,1],\max,\cdot,0,1)$
is isomorphic to $\Trop$ and can be used to model confidence scores.
\item The \emph{\L ukasiewicz} semiring $\Luk = ([0,1], \max, \star, 0, 1)$ with $a \star b = \max(0, a+b-1)$, used in many-valued logics.
\item The \emph{min-max} semiring on a totally ordered set $(A,\leq)$
with least element $a$ and greatest element $b$
is the semiring $(A,\max,\min,a,b)$.
\item The semiring of \emph{generalized absorptive polynomials} $\Sinf[\XX]$, defined below.
\end{itemize}

We write $\Ninf$ for the semiring of natural numbers extended by a special element $\infty$ (with $n \cdot \infty = n + \infty = \infty$, for $n \neq 0$).
It is neither absorptive nor idempotent (but fully continuous w.r.t\ the standard order on natural numbers).

\subsubsection*{Absorptive Polynomials.}

The most important absorptive, fully-continuous semiring, both from a provenance perspective and for our proofs, is the semiring of \emph{(generalized\,\footnote{$\Sinf[\XX]$ generalizes the semiring $\Sorp(\XX)$ of absorptive polynomials in \cite{DeutchMilRoyTan14} by adding the exponent $\infty$ (which is needed for \emph{fully}-continuous homomorphisms in \cref{thmSinfUniversal}). We only use $\Sinf[\XX]$ in this paper and hence drop \emph{generalized} in the following.}\!) absorptive polynomials} $\Sinf[\XX]$.
We briefly summarize its definition and key properties from \cite{DannertGraNaaTan21}.
Given a finite set $\XX$ of indeterminates, a (generalized) monomial over $\XX$ is a mapping $m \co \XX \to \Ninf$ (here we also allow the exponent $\infty$),
multiplication adds exponents and the neutral element is $1 \colon X  \mapsto 0$.
We say that a monomial $m_1$ \emph{absorbs} $m_2$, denoted $m_1 \absorbs m_2$, if $m_1(X) \le m_2(X)$ for all $X \in \XX$ (notice that absorption is the \emph{inverse} of the pointwise order on the exponents).
In order to mimic the algebraic property of absorption, polynomials are \emph{antichains} of monomials (which are always finite). Addition and multiplication are defined as usual, but we drop monomials that are absorbed after each operation.
For example, $(XY^2 + X^2Y) \cdot X^\infty = X^\infty Y^2 + X^\infty Y = X^\infty Y$.

\begin{Definition}
The semiring $(\Sinf[\XX], +, \cdot, 0, 1)$ of \emph{(generalized) absorptive polynomials} consists of all antichains of monomials (w.r.t.\ absorption). We write $0$ for the empty antichain and $1$ for the antichain $\{1\}$.
Given $P,Q \in \Sinf[X]$, define
\[
    P + Q = \Max(P \cup Q),
    \quad
    P \cdot Q = \Max\{ m_1 \cdot m_2 \mid m_1 \in P, m_2 \in Q \},
\]
where $\Max(M)$ denotes the set of $\absorbs$-maximal monomials in $M$.
\end{Definition}

This semiring is fully continuous, with $\Sup S = \sum S = \Max(\bigcup S)$ for sets $S$, and absorptive.
Moreover, $\Sinf[\XX]$ is the most general such semiring, as made explicit in the following universal property.
Together with \cref{lemHomoPreservesSolutions}, this is a fruitful tool to simplify reasoning about all absorptive, fully-continuous semirings.

\begin{theorem}[universal property, \cite{DannertGraNaaTan21}]\label{thmSinfUniversal}
Every mapping $h \co \XX \to K$ into an absorptive, fully-continuous semiring $K$
uniquely extends to a fully-continuous semiring homomorphism $h \co \Sinf[\XX] \to K$
(by means of polynomial evaluation).
\end{theorem}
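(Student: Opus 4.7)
The plan is a standard universal-property argument in three steps: define the extension, show uniqueness, and verify the homomorphism axioms including full continuity.

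First I would define the extension of $h$ to monomials by $h(m) := \prod_{X \in \XX} h(X)^{m(X)}$, interpreting $h(X)^\infty$ via the infinitary power in $K$ (well-defined by \cref{defInfpow}), and extend linearly to polynomials by $h(P) := \sum_{m \in P} h(m)$, a finite sum since antichains of monomials are finite. For uniqueness, the point is that $\Sinf[\XX]$ is generated by $\XX$ under the semiring operations together with infima of chains: the monomial $X^\infty$ equals $\Inf_{n \in \NN} X^n$ in $\Sinf[\XX]$, since a polynomial $P$ lies below every $X^n$ iff each of its monomials has $X$-exponent at least $n$, making the singleton $\{X^\infty\}$ the greatest such $P$. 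Any fully-continuous homomorphism must preserve this infimum, forcing $h(X^\infty) = h(X)^\infty$ and hence the definition above.

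The key step in verifying that $h$ is a semiring homomorphism is to show that the $\Max$ operation in $\Sinf[\XX]$ is compatible with addition in $K$: if $m_1 \absorbs m_2$, then the exponents of $m_1$ are pointwise smaller, and since multiplication is decreasing in absorptive $K$, we obtain $h(m_1) \ge_K h(m_2)$, hence $h(m_1) + h(m_2) = h(m_1)$. This immediately yields $h(\Max(M)) = \sum_{m \in M} h(m)$ for any finite set $M$ of monomials, from which both $h(P + Q) = h(P) + h(Q)$ and, via distributivity, $h(P \cdot Q) = h(P) h(Q)$ follow. Multiplicativity on monomials uses the identities $a^m a^n = a^{m+n}$, $a^n a^\infty = a^\infty$, and $a^\infty a^\infty = a^\infty$ in $K$, all consequences of the continuity of $\cdot$.

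The main obstacle is verifying full continuity of the extension. The supremum direction is reasonably direct: using $\Sup C = \Max(\bigcup C)$ for chains $C \subseteq \Sinf[\XX]$ together with the fact that every finite subset of $\bigcup C$ lies in some $P \in C$, one shows $h(\Sup C) = \sum_{m \in \bigcup C} h(m) = \Sup_{P \in C} h(P)$, collapsing absorbed monomials via the argument above. For infima one must first describe $\Inf C$ concretely, for instance via \enquote{compatible} sequences $(m_P)_{P \in C}$ with $m_P \in P$ and $m_P \absorbs m_{P'}$ whenever $P \ge P'$, whose limit monomials (exponents taken as suprema in $\Ninf$) form $\Inf C$ after applying $\Max$. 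Continuity of multiplication in $K$ combined with the identity $a^\infty = \Inf_n a^n$ then yields $h(\Inf C) = \Inf h(C)$, completing the proof.
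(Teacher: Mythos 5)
This theorem is not proved in the paper at all: it is imported verbatim from \cite{DannertGraNaaTan21}, so there is no in-paper argument to compare yours against. Judged on its own, your sketch has the right architecture, and most of it is sound: the definition of $h$ by evaluation, the uniqueness argument via $X^\infty = \Inf_{n} X^n$ in $\Sinf[\XX]$ (which is correct --- $P \le X^n$ for all $n$ forces every monomial of $P$ to have $X$-exponent $\infty$, and $\{X^\infty\}$ is the greatest such antichain), the compatibility of $\Max$ with evaluation via $m_1 \absorbs m_2 \Rightarrow h(m_1) \ge h(m_2)$, and the supremum half of continuity using the finiteness of antichains.

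The genuine gap is in the infimum half. The easy inequality $h(\Inf C) \le \Inf h(C)$ is just monotonicity; the substance is $\Inf_{P \in C} \sum_{m \in P} h(m) \le h(\Inf C)$, and your claim that this follows from ``continuity of multiplication combined with $a^\infty = \Inf_n a^n$'' only covers the case of a \emph{single} descending chain of monomials. What you actually need is (i) the structural description of $\Inf C$ as a supremum of infima of descending monomial chains --- this is exactly \cref{lemSinfInfima}, which the paper also imports from the same reference and which itself requires proof --- and, more importantly, (ii) an argument that the infimum of the chain of \emph{finite sums} $h(P) = \sum_{m \in P} h(m)$ is dominated by a supremum of infima taken along individual monomial chains. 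Since both the set of monomials and their images change with $P$, this does not follow from continuity of $\cdot$ alone: one needs a K\H{o}nig-style extraction showing that finitely many descending monomial chains eventually cover the tails of all $P \in C$, together with the identity $\Inf_n(a_n + b_n) = \Inf_n a_n + \Inf_n b_n$ for descending chains (which uses full continuity of $+$, not of $\cdot$). As written, the $\ge$ direction of $h(\Inf C) = \Inf h(C)$ --- the only part of the theorem that is actually hard --- is asserted rather than proved. (A minor further point: full continuity is required for arbitrary nonempty chains, not only $\omega$-chains; this is harmless here because $\Sinf[\XX]$ is countable, so every descending chain has a coinitial $\omega$-subchain, but the reduction should be stated.)
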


For our technical results, we also need the following observations based on \cite{DannertGraNaaTan19}
(statements (1) and (3) in fact hold in all absorptive, fully-continuous semirings, but the proof is much simpler in $\Sinf[\XX]$, see \cref{Appendix:Proofs}).

\begin{lemma}[\cite{DannertGraNaaTan19}]\label{lemSinfExtra}
Let $S \subseteq \Sinf[\XX]$ and $P \in \Sinf[\XX]$. Then,
\begin{enumerate}
\item $P \cdot \sum S = \sum \{P \cdot Q \mid Q \in S\}$, and
\item $(\sum S)^\infty = \sum \{Q^\infty \mid Q \in S\}$, and
\item $h(\sum S) = \sum h(S)$, if $h \co \Sinf[\XX] \to K$ is a fully-continuous homomorphism.
\end{enumerate}
\end{lemma}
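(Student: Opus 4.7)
The plan is to prove all three claims by direct computation on the explicit description of $\Sinf[\XX]$: polynomials are finite $\absorbs$-antichains of monomials, $\sum S = \Max(\bigcup S)$ is itself a finite antichain, and the semiring operations have the explicit $\Max$-formulas from the definition. I will abbreviate $\widetilde S \coloneqq \bigcup_{Q \in S} Q$, the multiset of all monomials occurring in some $Q \in S$, so that $\sum S = \Max(\widetilde S)$.

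For (1), I expand $P \cdot \sum S = \Max\{m_1 m_2 \mid m_1 \in P,\ m_2 \in \Max(\widetilde S)\}$ directly from the definition of multiplication in $\Sinf[\XX]$. The central observation is that absorption is preserved by multiplication, i.e.\ $m \absorbs m'$ implies $m_1 m \absorbs m_1 m'$; hence a non-maximal $m'$ in $\widetilde S$ can never give rise to a maximal product on the outside, and the inner $\Max$ can be dropped. The resulting expression $\Max\{m_1 m_2 \mid m_1 \in P,\ m_2 \in \widetilde S\}$ then regroups as $\Max \bigcup_{Q \in S} (P \cdot Q) = \sum\{P \cdot Q \mid Q \in S\}$, where the same absorption argument justifies collapsing the $\Max$ of each individual product $P \cdot Q$ into the outer union.

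For (2), the inequality $(\sum S)^\infty \ge \sum\{Q^\infty \mid Q \in S\}$ is immediate from monotonicity of $a \mapsto a^\infty$: each $Q \le \sum S$ gives $Q^n \le (\sum S)^n$ by monotonicity of multiplication, so $Q^\infty \le (\sum S)^\infty$ by monotonicity of $\Inf$. For the other direction, I set $T \coloneqq \sum S$, a \emph{finite} antichain, and iterate the already-verified identity $(a+b)^\infty = a^\infty + b^\infty$ over its elements to obtain $T^\infty = \sum_{t \in T} t^\infty$. Each $t \in T$ lies in $\widetilde S$, so I pick $Q_t \in S$ with $t \in Q_t$; since $\{t\} + Q_t = \Max(Q_t) = Q_t$, we have $t \le Q_t$ in $\Sinf[\XX]$, hence $t^\infty \le Q_t^\infty$, and summing over $T$ yields $T^\infty \le \sum_{t \in T} Q_t^\infty \le \sum\{Q^\infty \mid Q \in S\}$.

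For (3), finiteness of $T$ lets me compute $h(\sum S) = h\bigl(\sum_{t \in T} t\bigr) = \sum_{t \in T} h(t)$ as an ordinary homomorphic image of a finite sum. On the other side, using that the semiring sum coincides with the supremum in any idempotent semiring, $\sum h(S) = \Sup_{Q \in S} \sum_{m \in Q} h(m) = \Sup_{m \in \widetilde S} h(m)$. The inclusion $T \subseteq \widetilde S$ gives one inequality; for the other, every $m \in \widetilde S$ is absorbed by some $t \in T$, i.e.\ $m \le t$ in $\Sinf[\XX]$, and since any semiring homomorphism between idempotent semirings preserves the natural order, $h(m) \le h(t)$. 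I expect the main obstacle to be part (2): it is the only step genuinely beyond lattice bookkeeping, requiring both the careful use of the identity $(a+b)^\infty = a^\infty + b^\infty$ to distribute the infinitary power across the finite antichain $T$, and the translation of the set-theoretic membership $t \in Q_t$ into the polynomial inequality $t \le Q_t$ needed for monotonicity.
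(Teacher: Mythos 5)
Your proof is correct and follows essentially the same route as the paper's: both arguments rest on the observation that $\sum S = \Max(\bigcup S)$ is a finite antichain each of whose monomials lies in some $Q \in S$, with the reverse inequalities supplied by monotonicity. The only difference is that the paper simply cites \cite{DannertGraNaaTan19} for statement (2), whereas you prove it by iterating $(a+b)^\infty = a^\infty + b^\infty$ over the finite antichain and using $t \le Q_t$ --- a correct argument in the same spirit as the paper's treatment of (1) and (3).
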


\begin{lemma}[\cite{DannertGraNaaTan19}]\label{lemSinfInfima}
Let $(P_i)_{i \in \Nat}$ be a descending $\omega$-chain with $P_i \in \Sinf[\XX]$.
Then,
\[
    \Inf_{i \in \Nat} P_i = \Sup \Big\{ \Inf_{i \in \Nat} m_i \Bigmid
    \begin{array}{l}
        \text{$(m_i)_{i \in \Nat}$ is a descending $\omega$-chain } \\
        \text{of monomials with $m_i \in P_i$}
    \end{array}
    \Big\}.
\]
\end{lemma}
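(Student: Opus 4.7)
I plan to prove the equality by establishing the two inequalities. The direction $\ge$ is routine; the direction $\le$ is the substantive step and rests on K\"onig's lemma applied to a tree of finite descending chains of monomials.

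For $\ge$, any descending chain $(m_i)_{i \in \Nat}$ of monomials with $m_i \in P_i$ satisfies $\{m_i\} \le P_i$ in the natural order of $\Sinf[\XX]$, because $m_i$ already appears in the antichain $P_i$ and hence $\Max(\{m_i\} \cup P_i) = P_i$. Consequently $\Inf_i m_i \le m_j \le P_j$ for every $j$, giving $\Inf_i m_i \le \Inf_j P_j$; passing to the supremum over all such chains preserves this bound.

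For $\le$, I intend to use the characterization that $Q \le Q'$ in $\Sinf[\XX]$ iff every monomial of $Q$ is absorbed by some monomial of $Q'$. It therefore suffices to fix an arbitrary monomial $m$ of $\Inf_i P_i$ and exhibit a descending chain $(m_i)_{i \in \Nat}$ with $m_i \in P_i$, $m_i \absorbs m$, and $\Inf_i m_i \absorbs m$. Such a chain will come from K\"onig's lemma applied to the tree $\mathcal T$ whose nodes at depth $k$ are the finite descending chains $m_0 \ge \cdots \ge m_k$ with $m_i \in P_i$ and $m_i \absorbs m$, equipped with the obvious one-step-extension edges. The tree is finitely branching because each $P_i$ is a finite antichain. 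To see that $\mathcal T$ is inhabited at every depth, I build such chains by \emph{backward} induction: pick any $m_k \in P_k$ with $m_k \absorbs m$ (such a monomial exists since $m \le \Inf_j P_j \le P_k$), and given $m_{i+1}$, use $P_{i+1} \le P_i$ to extract some $m_i \in P_i$ with $m_i \absorbs m_{i+1}$; by transitivity $m_i \absorbs m$ as well.

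K\"onig's lemma then supplies an infinite branch $m_0 \ge m_1 \ge \cdots$; its infimum in $\Sinf[\XX]$ is the monomial whose $X$-exponent is $\sup_i m_i(X)$, which is at most $m(X)$ since each $m_i \absorbs m$. Thus $\Inf_i m_i \absorbs m$, finishing the argument. The main obstacle is the backward construction that inhabits $\mathcal T$ at every depth; once that is in place, K\"onig's lemma yields the required chain and everything else reduces to formal manipulation within the antichain structure of $\Sinf[\XX]$.
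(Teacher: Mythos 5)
Your proof is correct. The paper itself defers to the cited reference for this lemma rather than proving it, but your argument -- the routine $\ge$ direction, plus K\"onig's lemma applied to the finitely branching tree of finite descending monomial chains lying above a fixed monomial $m$ of $\Inf_{i} P_i$ -- is exactly the standard argument for this statement: finiteness of the antichains $P_i$ gives finite branching, the backward-induction step inhabiting every depth (via $\{m\} \le \Inf_j P_j \le P_k$ and $P_{i+1}\le P_i$) is the one genuinely non-trivial point and you handle it correctly, and the identification of the infimum of a descending monomial chain as the monomial with exponents $\sup_i m_i(X)$ closes the argument.
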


To clearly distinguish between indeterminates in polynomial equation systems and absorptive polynomials, we often use the indeterminate set $\AA = \{ A_1,\dots,A_k \}$ for the latter, in particular when we use values from $\Sinf[\AA]$ as coefficients.

\subsection{Derivation Trees}

Inspired by the analysis of Newton's method \cite{Newton}, we use derivation trees to describe the behaviour of polynomial equation systems.
For the intuition behind this notion, think of a polynomial system as a formal grammar:
The indeterminates are the nonterminal symbols, coefficients the terminal symbols, and each monomial in $P_X$ gives rise to a production rule for $X$.
We essentially consider derivation trees of this grammar in the usual sense, except that we ignore the order of children (we use commutative semirings) and allow infinite derivations.

While our notion of derivation trees is conceptually identical to \cite{Newton}, we should note that the definition of the yield labeling is slightly different.

\begin{Definition}\label{defTree}
A \emph{derivation tree} $T = (V, E, \var, \yield)$ over a semiring $K$ and indeterminates $\XX$ is a (possibly infinite) tree $(V,E)$ with node labelings
$\var \co V \to \XX$
and
$\yield \co V \to K$, the \emph{yield} of $v$.
We say that $T$ is \emph{from $X$} if for the root $\root$, we have $\var(\root) = X$.
For convenience, we often write $v \in T$ instead of $v \in V$
and refer to $v$ with $\var(v)=X$ as an occurrence of $X$ in $T$.

We associate with each node the monomial $\mon(v) = \prod_{w \in vE} \var(w)$ composed of its children's indeterminates.
We say that $T$ is \emph{compatible with the system $(X = P_X)_{X \in \XX}$} if for each node,
$\yield(v) \cdot \mon(v) \in P_{\var(v)}$.
The set of all derivation trees from $X$ that are compatible with the system $\EE$ is denoted $\Trees(\EE, X)$.
\end{Definition}

Given an equation system $\EE \co (X = P_X)_{X \in \XX}$ and an indeterminate $X$, a derivation tree $T \in \Trees(\EE, X)$ first chooses from the equation $X = P_X$ a monomial $\mon(\root)$ together with its coefficient $\yield(\root)$.
On the next level, it then makes analogous choices for all indeterminates occurring in $\mon(\root)$, where the exponent specifies how often an indeterminate occurs.
The leaves $v$ of such a derivation tree (if they exist) have $\mon(v) = 1$ and correspond to absolute coefficients in one of the equations.
See \cref{figTreeExample} for an example.
We define the yield of an entire tree as the combined yield of all nodes (recall that we assume $K$ to be absorptive and fully-continuous, so infinite products are well-defined):

\begin{Definition}\label{defYield}
The \emph{yield} of a derivation tree $T = (V, E, \var, \yield)$ over $K$ is the (possibly infinite) product
$\yield(T) = \prod_{v \in V} \yield(v) \in K$.
\end{Definition}

We compare the yields of two trees by counting occurrences of coefficients or, equivalently, of monomials from the polynomial equation system.

\begin{Definition}
Let $\EE$ be an equation system over $\XX$, let $Y \in \XX$ and $m \in P_Y$.
For derivation trees $T \in \Trees(\EE, X)$, we define
\[
    \moncount T m Y = \big|\{ v \in T \mid \text{$\mon(v) = m$, $\var(v) = Y$}\} \big| \in \Ninf
\]
as the number of occurrences of $m \in P_Y$ in $T$.
Notice that we use pairs $(m,Y)$ to unambiguously refer to $m \in P_Y$, as $m$ may also occur in other polynomials of $\EE$.
\end{Definition}

\begin{lemma}[yield comparison]\label{lemYieldComparison}
Given a polynomial system $\EE \co (X = P_X)_{X \in \XX}$ over an absorptive, fully-continuous semiring and trees $T,T' \in \Trees(\EE, X)$,
\begin{itemize}
\item if $\moncount T m Y \ge \moncount {T'} m Y$ for all $m \in P_Y$, $Y \in \XX$, then $\yield(T) \le \yield(T')$,
\item if $|T|_{m,Y} = 0$ implies $|T'|_{m,Y} = 0$ for all $m,Y$, then $\yield(T)^\infty \le \yield(T')^\infty$.
\end{itemize}
\end{lemma}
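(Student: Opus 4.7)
The plan is to rewrite $\yield(T)$ by grouping the nodes according to which coefficient they contribute. Since the equation system $\EE$ has only finitely many coefficient–monomial pairs $(c,m)$ with $c \cdot m \in P_Y$, and each node $v \in T$ contributes exactly one factor $\yield(v) = c_{\mon(v),\var(v)}$ to the product $\prod_v \yield(v)$, commutativity and associativity of infinite products in an absorptive, fully-continuous semiring let me regroup and obtain
\[
    \yield(T) \;=\; \CProd_{Y \in \XX}\CProd_{m \in P_Y} c_{m,Y}^{\moncount{T}{m}{Y}},
\]
where the outer product is finite, and $c^n$ for $n = \infty$ is interpreted as the infinitary power $c^\infty$. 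The same identity holds for $T'$.

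For the first claim, I would use that multiplication in $K$ is decreasing, so for any $c \in K$ and $n, n' \in \Ninf$ with $n \ge n'$ one has $c^n \le c^{n'}$ (the finite case is iterated absorption, and the $\infty$ case follows from $c^\infty = \Inf_n c^n$). The hypothesis $\moncount{T}{m}{Y} \ge \moncount{T'}{m}{Y}$ therefore gives $c_{m,Y}^{\moncount{T}{m}{Y}} \le c_{m,Y}^{\moncount{T'}{m}{Y}}$ factor by factor. Since multiplication is monotone in the natural order, taking the (finite) product over $(m,Y)$ preserves the inequality, yielding $\yield(T) \le \yield(T')$.

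For the second claim, I would apply the $\infty$-power to the regrouped expression. Using $(ab)^\infty = a^\infty b^\infty$ repeatedly across the finite outer product, together with $(c^n)^\infty = c^\infty$ for $n \ge 1$ and $(c^0)^\infty = 1$, I obtain
\[
    \yield(T)^\infty \;=\; \CProd_{\substack{(Y,m):\\ \moncount{T}{m}{Y} \ge 1}} c_{m,Y}^\infty,
\]
and analogously for $T'$. The hypothesis that $\moncount{T}{m}{Y} = 0 \Rightarrow \moncount{T'}{m}{Y} = 0$ says precisely that the index set appearing for $T$ \emph{contains} the one for $T'$. Since multiplication is decreasing, a product with more factors is smaller in the natural order, so $\yield(T)^\infty \le \yield(T')^\infty$.

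The main obstacle is the first step: justifying the regrouping identity for $\yield(T)$ when $T$ is infinite and some $\moncount{T}{m}{Y}$ may be $\infty$. This requires the infinite product over $V$ to commute with partitioning by the finitely many $(m,Y)$ pairs, and the convention $c_{m,Y}^\infty = \Inf_n c_{m,Y}^n$ to agree with taking the product over an infinite class of nodes all contributing the same coefficient. Both facts follow from the commutativity, associativity and continuity properties of infinite products stated after Definition~\ref{defInfpow} (and which are precisely those used in \cite{DannertGraNaaTan19}), so the argument goes through without additional work once this bookkeeping is spelled out.
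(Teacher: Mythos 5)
Your proposal is correct and follows essentially the same route as the paper: regroup the infinite product $\prod_{v\in T}\yield(v)$ into the finite product $\prod_{Y,m}(c_{m,Y})^{\moncount{T}{m}{Y}}$, then compare exponents factor by factor using absorption for the first claim and the identity $(c^n)^\infty = c^\infty$ (for $n\ge 1$) for the second. The only difference is that you spell out the bookkeeping for the regrouping and the $\infty$-exponent convention a bit more explicitly than the paper does, which is fine.
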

\begin{proof}
We use the properties of infinite products to group the yields by monomials.
For each $m \in P_Y$, let $c_{m,Y} \in K$ be its coefficient, such that $c \cdot m \in P_Y$. Then,
\begin{align*}
  \yield(T) =
  \prod_{v \in T} \yield(v) =
  \prod_{\substack{Y \in \XX, \\ m \in P_Y}} \prod_{\substack{v \in T,\\\var(v) = Y,\\\mon(v) = m}} \yield(v) =
  \prod_{\substack{Y \in \XX, \\ m \in P_Y}} (c_{m,Y})^{\moncount T m Y},
\end{align*}
and the same applies to $T'$.
The product on the right is finite and by absorption, larger exponents lead to smaller values,
hence $\yield(T) \le \yield(T')$.
If we apply the infinitary power, we similarly get
$
  \yield(T)^\infty =
  \prod_{m,Y} (c_{m,Y})^{\infty \cdot \moncount T m Y},
$
where $\infty \cdot \moncount T m Y$ is either $\infty$ (if $\moncount T m Y > 0$) or $0$, implying the second statement. \qed
\end{proof}

\section{Derivation Trees and the Fixed-Point Iteration}

As a first step towards our main result, this section shows that we can express least and greatest solutions in terms of the yields of derivation trees.
Notice that a single derivation tree does not correspond to a solution of the equation system, but only to (the derivation of) a single term in the solution.
We thus consider the sum over all derivation trees.

For least solutions, this was already shown (for a slightly different notion of derivation trees) in \cite{Newton}.
Here we are mostly concerned with the proof for greatest solutions, as this is much more involved due to the trees being infinite.

\begin{theorem}\label{thmFixpointsAsTrees}
Let $K$ be an absorptive, fully-continuous semiring.
Let $\EE \co (X = P_X)_{X \in \XX}$ be a polynomial equation system over $K$.
Then for each $X \in \XX$,
\[
\lfp(F_\EE)_X = \sum_{\substack{T \in \Trees(\EE, X),\\T \text{ is finite}}} \yield(T), \qquad
\gfp(F_\EE)_X = \sum_{T \in \Trees(\EE, X)} \yield(T).
\]
\end{theorem}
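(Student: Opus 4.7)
My plan is to prove both equalities by expanding the Kleene-style fixed-point iteration as a sum of derivation-tree yields and then passing to the limit. For the \textbf{least fixed point}, I introduce $\tilde T_{<n}(X) \subseteq \Trees(\EE, X)$ as the set of finite derivation trees of height $< n$. A direct induction on $n$, using distributivity of multiplication over addition (and, where needed, the infinitary distributivity in Lemma~\ref{lemSinfExtra}(1)), establishes
\[
F^n(\zero)_X = \sum_{T \in \tilde T_{<n}(X)} \yield(T).
\]
Since every finite derivation tree has some finite height, the union $\bigcup_n \tilde T_{<n}(X)$ is exactly the set of finite trees in $\Trees(\EE, X)$; Kleene's theorem $\lfp(F)_X = \Sup_n F^n(\zero)_X$ together with continuity of addition then yields the LFP equality.

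For the \textbf{greatest fixed point}, I dualize: let $\hat T_n(X)$ denote the set of derivation trees \emph{truncated} at depth $n$, where nodes at depth exactly $n$ are ``open leaves'' contributing $1$ to the yield, while every node at depth $< n$ is a genuine derivation node. A dual induction gives $F^n(\one)_X = \sum_{T \in \hat T_n(X)} \yield(T)$, and by Kleene's dual theorem for co-continuous operators $\gfp(F)_X = \Inf_n F^n(\one)_X$. The inequality $\sum_T \yield(T) \le \gfp(F)_X$ is the easy half: any $T \in \Trees(\EE, X)$ truncates to some $T' \in \hat T_n(X)$ with $\yield(T) = \yield(T') \cdot (\text{yields of nodes at depth} \ge n)$, so by absorption $\yield(T) \le \yield(T') \le F^n(\one)_X$ for every $n$, and summing gives the claim.

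The reverse inequality $\gfp(F)_X \le \sum_T \yield(T)$ is the main obstacle. My plan is to first reduce to the universal semiring $\Sinf[\AA]$ via Lemma~\ref{lemHomoPreservesSolutions}, by labelling each coefficient appearing in $\EE$ by a distinct indeterminate in $\AA$ and transferring the result along the canonical fully-continuous homomorphism $h \co \Sinf[\AA] \to K$. In $\Sinf[\AA]$, Lemma~\ref{lemSinfInfima} shows that any monomial $m$ in $\gfp(F)_X$ arises as $m = \Inf_n m_n$ for a descending chain $(m_n)$ with $m_n \in F^n(\one)_X$; each $m_n$ is realized by some $T_n \in \hat T_n(X)$ with $\yield(T_n) = m_n$. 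Writing $e(A) = \sup_n m_n(A) \in \Ninf$ for the exponents of $m$, I form
\[
\mathcal C_d = \{\, T \in \hat T_d(X) \bigmid |T|_A \le e(A) \text{ for all } A \in \AA \,\},
\]
which is finite (since $\hat T_d(X)$ is finite: each of the finitely many positions offers finitely many productions from the $P_Y$) and nonempty ($T_d \in \mathcal C_d$). Truncation makes $(\mathcal C_d)_d$ an inverse system of finite nonempty sets, so by K\"onig's lemma it has a coherent sequence $(T_d)_d$ whose componentwise limit is a genuine infinite $T \in \Trees(\EE, X)$ with $|T|_A = \sup_d |T_d|_A \le e(A)$, hence $\yield(T) \ge m$ in the natural order. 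Summing over all such $T$ covers every monomial of $\gfp(F)_X$.

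The main technical challenges I anticipate are threefold: verifying that truncation genuinely restricts $\mathcal C_{d+1}$ into $\mathcal C_d$ (which follows because truncating only removes payloaded nodes, so exponent counts can only decrease); handling the case $e(A) = \infty$, which imposes no constraint and must therefore cause no failure of the compactness argument; and checking that the reduction to $\Sinf[\AA]$ is compatible with the tree-yield interpretation, i.e.\ that $\yield$ computed in $K$ really equals $h$ applied to the $\Sinf[\AA]$-yield, so that the identity transfers back via Lemma~\ref{lemHomoPreservesSolutions}.
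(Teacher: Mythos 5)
Your proposal is correct, and for most of its length it follows the same skeleton as the paper: reduction to $K=\Sinf[\AA]$ via the universal property (\cref{thmSinfUniversal}, \cref{lemHomoPreservesSolutions}), an inductive identity expressing $F^n(\zero)_X$ and $F^n(\one)_X$ as sums of yields of height-$n$ truncations (the paper's \cref{lemTreeIteration}), Kleene's theorem, the easy absorption bound $\yield(T)\le\yield(T\cut n \one)$, and \cref{lemSinfInfima} to rewrite $\Inf_n F^n(\one)_X$ as a supremum of infima of monomial chains. Where you genuinely diverge is the final, hardest step: producing, for each descending chain $(m_n)$ with $m_n\in F^n(\one)_X$, an actual tree $T\in\Trees(\EE,X)$ with $\yield(T)\ge\Inf_n m_n$. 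The paper does this with the \emph{puzzle lemma} (\cref{lemTreesPuzzle}), an explicit construction that locates in a sufficiently deep truncation an $(\ell+1)$-layer free of ``problematic'' (finitely-occurring) coefficients, determinizes it via \cref{lemDeterministicConstruction}, and repeats it forever. You instead run a compactness argument: the sets $\mathcal C_d$ of depth-$d$ truncations whose coefficient counts are bounded by $e(A)=\sup_n m_n(A)$ form an inverse system of nonempty finite sets under truncation (nonemptiness witnessed by the truncated tree realizing $m_d$, and the constraint is vacuous when $e(A)=\infty$), so K\"onig's lemma yields a coherent sequence whose limit is a genuine derivation tree with $|T|_A\le e(A)$, i.e.\ $\yield(T)\absorbs\Inf_n m_n$. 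This is a valid and arguably cleaner replacement; it buys a shorter, non-constructive argument at the cost of relying on finiteness of the truncated-tree sets (finitely many monomials, each of finite degree --- the paper's standing assumption, relaxed only in a remark), whereas the paper's layer-repetition construction additionally produces an eventually deterministic tree, a shape that is reused in the analysis behind \cref{thmClosedSolution} even though \cref{thmFixpointsAsTrees} itself does not need it.
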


We recall that summation is equivalent to supremum (in idempotent semirings).
Here and in the following, we use summation in reminiscence of the general, non-idempotent case (cf.\ \cite{Newton}) and only switch to supremum as needed.
Towards a proof, we first observe that it suffices to prove \cref{thmFixpointsAsTrees} for the most general semiring $K = \Sinf[\AA]$. That is, with the coefficients being absorptive polynomials (not to be confused with the polynomials of the equation system $\EE$).

\begin{claim}
If \cref{thmFixpointsAsTrees} holds for $K = \Sinf[\AA]$, then it also holds for any absorptive, fully-continuous semiring $K$.
\end{claim}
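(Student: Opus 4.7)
The plan is to reduce to the free case via the universal property (\cref{thmSinfUniversal}) and then transport the tree-sum identity along the induced homomorphism, using \cref{lemHomoPreservesSolutions} for the fixed points and \cref{lemSinfExtra} for the arbitrary sums.

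Concretely, given $\EE \co (X = P_X)_{X \in \XX}$ over an absorptive, fully-continuous semiring $K$, I would first introduce a fresh finite set $\AA = \{A_{m,X} \mid X \in \XX, m \in P_X\}$ containing one indeterminate for each coefficient-monomial pair of $\EE$. I then lift $\EE$ to a polynomial equation system $\EE'$ over $\Sinf[\AA]$ having the same monomial skeleton, but with each coefficient $c_{m,X} \in K$ (where $c_{m,X} \cdot m \in P_X$) replaced by the indeterminate $A_{m,X} \in \Sinf[\AA]$. By \cref{thmSinfUniversal}, the map $A_{m,X} \mapsto c_{m,X}$ extends uniquely to a fully-continuous homomorphism $h \co \Sinf[\AA] \to K$, and by construction $h(\EE') = \EE$ in the sense of \cref{lemHomoPreservesSolutions}. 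That lemma then gives
\[
    \lfp(F_\EE)_X = h\bigl(\lfp(F_{\EE'})_X\bigr), \qquad
    \gfp(F_\EE)_X = h\bigl(\gfp(F_{\EE'})_X\bigr).
\]

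Next I would observe that the underlying tree structures are insensitive to the choice of coefficients: since $\EE$ and $\EE'$ share the same monomial skeleton, there is a canonical bijection $T \leftrightarrow T'$ between $\Trees(\EE, X)$ and $\Trees(\EE', X)$ (preserving finiteness) under which a node $v$ with $\yield_{\EE'}(v) = A_{m,X}$ corresponds to a node with $\yield_\EE(v) = c_{m,X} = h(A_{m,X})$. Because $h$ is a semiring homomorphism that, by the remark on infinite products over finite domain in the text, commutes with the infinite products defining the yield, one obtains $\yield_\EE(T) = h\bigl(\yield_{\EE'}(T')\bigr)$ for every such corresponding pair.

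Finally, I would apply $h$ termwise to the assumed tree-sum formulas for $\EE'$: invoking \cref{lemSinfExtra}(3), which states that fully-continuous homomorphisms out of $\Sinf[\AA]$ commute with arbitrary sums, yields
\[
    \lfp(F_\EE)_X
    = h\!\!\sum_{\substack{T' \in \Trees(\EE',X), \\ T' \text{ finite}}} \!\!\yield_{\EE'}(T')
    = \!\!\sum_{\substack{T \in \Trees(\EE,X), \\ T \text{ finite}}} \!\!\yield_{\EE}(T),
\]
and analogously for $\gfp$ over all trees. The only step requiring real care is the identity $\yield_\EE(T) = h(\yield_{\EE'}(T'))$ for infinite $T$; but this is exactly the commutation of fully-continuous homomorphisms with infinite products over a finite set of factors, which is precisely the property of infinite products invoked earlier in \cref{Preliminaries}. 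Everything else is bookkeeping on top of \cref{thmSinfUniversal}, \cref{lemHomoPreservesSolutions}, and \cref{lemSinfExtra}.
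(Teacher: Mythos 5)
Your proposal is correct and follows essentially the same route as the paper: symbolic abstraction of the coefficients into $\Sinf[\AA]$, the universal property to obtain the fully-continuous homomorphism $h$ with $h(\EE')=\EE$, \cref{lemHomoPreservesSolutions} for the fixed points, the yield-preserving bijection between $\Trees(\EE,X)$ and $\Trees(\EE',X)$ via commutation of $h$ with infinite products over finite domain, and \cref{lemSinfExtra}(3) to push $h$ through the sum. No gaps; your explicit attention to the infinite-product step for infinite trees is exactly the point the paper also relies on.
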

\begin{proof}
Let $K$ be an absorptive, fully-continuous semiring.
Given $\EE \co (X = P_X)_{X \in \XX}$ over $K$,
we construct a symbolic abstraction $\EE' \co (X = P'_X)_{X \in \XX}$ over $\Sinf[\AA]$.
To this end, let $P'_X$ result from $P_X$ by replacing all coefficients with pairwise different indeterminates from $\AA$.
Let $h \co \AA \to K$ be the corresponding instantiation of these indeterminates that reverses this process, so that $h(\EE') = \EE$.
By \cref{thmSinfUniversal}, this mapping induces a fully-continuous homomorphism $h \co \Sinf[\AA] \to K$, so by \cref{lemHomoPreservesSolutions}, we have for each $X$,
\[
  \gfp(F_\EE)_X = \gfp(F_{h(\EE')})_X \eqInfo{(\ref{lemHomoPreservesSolutions})} h(\gfp(F_{\EE'}))_X = h\bigg( \RSum_{T' \in \Trees(\EE',X)} \; \yield(T') \bigg).
\]

Notice that the structure of derivation trees $\Trees(\EE,X)$ only depends on the monomials occurring in $\EE$, but not on the coefficients.
Thus, the derivation trees for $\EE$ and $\EE'$
are identical up to the labeling $\yield$.
Given a tree $T \in \Trees(\EE,X)$, it holds in particular that
\[
  \yield(T) =
  \prod_{v \in T} \yield(v) =
  \prod_{v \in T'} h(\yield(v)) =
  h\Big( \prod_{v \in T'} \yield(v) \Big) =
  h\big( \yield(T') \big),
\]
where $T' \in \Trees(\EE',X)$ is the tree corresponding to $T$
(so that only $\yield$ is changed according to the coefficients in $\EE'$).
By using the one-to-one correspondence between trees $T \in \Trees(\EE,X)$ and $T' \in \Trees(\EE',X)$, we can conclude
\[
  \gfp(F_\EE)_X =
  h\bigg( \RSum_{T' \in \Trees(\EE',X)} \yield(T') \bigg) \eqInfo*{(\ref{lemSinfExtra})}
  \RSum_{T' \in \Trees(\EE',X)} h\big(\yield(T')\big) = 
  \RSum_{T \in \Trees(\EE,X)} \yield(T).
\]
The proof for $\lfp(\EE)$ is symmetric. \qed
\end{proof}

For the remaining section, we fix a polynomial equation system $\EE \co (X = P_X)_{X \in \XX}$ over $K=\Sinf[\AA]$ and consider the induced operator $F$.
The proof proceeds by induction on the fixed-point iterations $F^n(\tup 0)$ and $F^n(\tup 1)$, but requires some preparation.
The idea is that if complete derivation trees correspond to the fixed points, their prefixes should correspond to the steps of the iteration.
These prefixes are defined by simply cutting off the derivation trees at a certain depth and assigning a specific yield to the nodes at the cut-off depth
(eventually, we will simply assign $0$ for the least and $1$ for the greatest fixed point).

\begin{Definition}
Let $T = (V, E, \var, \yield) \in \Trees(\EE,X)$, $n \in \Nat$ and $\bb \in K^\ell$.
Let $V_{\le n} \subseteq V$ be the nodes at depth $\le n$.
We define the \emph{$(n,\bb)$-truncation} of $T$ as
\[
  T \cut n \bb \coloneqq (V_{\le n}, \, E \cap V_{\le n}^2, \, \var, \, \yield'),
  \quad
  \yield'(v) = \begin{cases}
    \bb_{\var(v)}, &\text{$v$ at depth $n$,} \\
    \yield(v), &\text{otherwise.}
  \end{cases}
\]
This defines a derivation tree (compatible with $\EE$ except for its leaves) and we define $\mon(v)$ and $\yield(T \cut n {\bb})$ as in \cref{defTree,defYield} (cf.\ \cref{figTreeExample}).
\end{Definition}

\begin{figure}[t]
\centering
\begin{minipage}{.25\linewidth}
\begin{align*}
    X_1 &= a X_1 + b X_2 X_3 \\
    X_2 &= c X_1^2 \\
    X_3 &= d \\
\end{align*}
\end{minipage}
\hspace{.8cm}
\begin{minipage}{.61\linewidth}
\begin{tikzpicture}[baseline,font=\small,node distance=.75cm]
\node [baseline] (1) {$X_1/b$};
\node [below of=1,xshift=-.6cm] (2) {$X_2/c$};
\node [below of=1,xshift=+.6cm] (2b) {$X_3/d$};
\node [below of=2,xshift=-.55cm] (3) {$X_1/a$};
\node [below of=2,xshift=+.55cm] (3b) {$X_1/a$};
\node [below of=3] (4) {$X_1/a$};
\node [below of=3b] (4b) {$X_1/a$};
\path[draw,->]
(1) edge (2) (1) edge (2b)
(2) edge (3) (2) edge (3b)
(3) edge (4) (3b) edge (4b)
;
\draw [densely dotted,thick] (4) to ($(4)-(0,.5cm)$);
\draw [densely dotted,thick] (4b) to ($(4b)-(0,.5cm)$);
\node [anchor=base,yshift=-1.3em] at (1.north -| 3.west) {$T:\strut$};
\end{tikzpicture}
\hspace{.8cm}
\begin{tikzpicture}[baseline,font=\small,node distance=.75cm]
\node [baseline] (1) {$X_1/b$};
\node [below of=1,xshift=-.6cm] (2) {$X_2/c$};
\node [below of=1,xshift=+.6cm] (2b) {$X_3/d$};
\node [below of=2,xshift=-.55cm] (3) {$X_1/e_1$};
\node [below of=2,xshift=+.55cm] (3b) {$X_1/e_1$};
\path[draw,->]
(1) edge (2) (1) edge (2b)
(2) edge (3) (2) edge (3b)
;
\node [anchor=base,yshift=-1.3em] at (1.north -| 3.west) {$T \cut 2 {\tup z}:\strut$};
\end{tikzpicture}
\hfill
\end{minipage}
\caption[A derivation tree and its (2,e)-truncation]{A derivation tree $T$ and its $(2,\tup e)$-truncation for a sample equation system, with node labels $\var(v)/\yield(v)$.
The trees have yield $\yield(T) = a^\infty bcd$ and $\yield(T \cut 2 {\tup e}) = e_1^2 bcd$.}
\label{figTreeExample}
\end{figure}
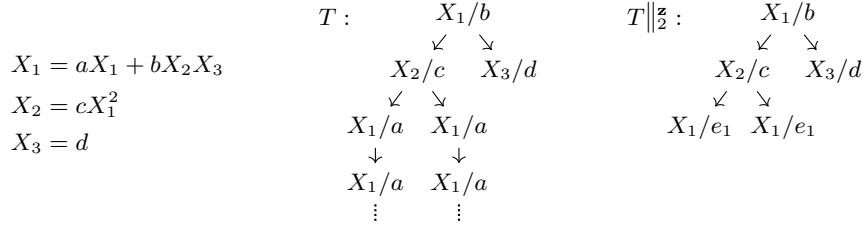

The following, mostly technical lemma establishes the general connection between truncations of derivation trees and the fixed-point iteration.

\begin{lemma}[tree iteration]\label{lemTreeIteration}
Let $K = \Sinf[\AA]$ and $\bb \in K^\ell$. Then,
$F^n(\bb)_X = \sum_{T \in \Trees(\EE,X)} \yield(T \cut n {\bb})$, for all $n \in \Nat$, $X \in \XX$.
\end{lemma}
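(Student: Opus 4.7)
\bigskip\noindent\textbf{Proof plan.} The plan is to proceed by induction on $n$, using the recursive structure of both the fixed-point iteration and derivation trees: a tree truncated at depth $n+1$ is essentially a root together with a family of subtrees truncated at depth $n$.

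\textbf{Base case} ($n=0$). For every $T \in \Trees(\EE,X)$, the $(0,\bb)$-truncation $T\cut{0}{\bb}$ consists solely of the root $\root$ whose yield is redefined as $\bb_{\var(\root)} = \bb_X$, so $\yield(T\cut{0}{\bb}) = \bb_X$. By idempotence, the sum over all such trees equals $\bb_X = F^0(\bb)_X$.

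\textbf{Inductive step.} Assume the identity for $n$. Write $P_X = \sum_{c\cdot m \in P_X} c\cdot m$. Then
\[
    F^{n+1}(\bb)_X = P_X\bigl(F^n(\bb)\bigr) = \sum_{c\cdot m \in P_X} c \cdot \prod_{Y \in \XX} F^n(\bb)_Y^{\,m(Y)}.
\]
Substituting the inductive hypothesis for each $F^n(\bb)_Y$ and repeatedly applying \cref{lemSinfExtra}(1) to distribute multiplication over the (possibly infinite) sums, I obtain
\[
    F^{n+1}(\bb)_X = \sum_{c\cdot m \in P_X} \; \sum_{(T^Y_i)_{Y,i}} \; c \cdot \prod_{Y \in \XX}\prod_{i=1}^{m(Y)} \yield\bigl(T^Y_i \cut{n}{\bb}\bigr),
\]
where the inner sum ranges over all families of trees $T^Y_i \in \Trees(\EE,Y)$ indexed by $Y \in \XX$ and $1 \le i \le m(Y)$.

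\textbf{Matching trees to terms.} Each pair consisting of a choice $c\cdot m \in P_X$ and a family $(T^Y_i)_{Y,i}$ corresponds bijectively to a tree $T \in \Trees(\EE,X)$ whose root has $\yield(\root)=c$, $\mon(\root)=m$, and whose subtrees hanging off the root (ordered in any fixed way consistent with $\mon(\root)$) are exactly the $T^Y_i$; the compatibility condition $\yield(\root)\cdot\mon(\root) \in P_X$ is ensured by the choice $c\cdot m \in P_X$. Under this correspondence, the $(n{+}1,\bb)$-truncation of $T$ is precisely the tree formed from the root of $T$ together with the $(n,\bb)$-truncations of the $T^Y_i$, so
\[
    \yield\bigl(T\cut{n+1}{\bb}\bigr) = c \cdot \prod_{Y \in \XX}\prod_{i=1}^{m(Y)} \yield\bigl(T^Y_i\cut{n}{\bb}\bigr).
\]
Rewriting the double sum above as a sum over $T \in \Trees(\EE,X)$ yields the claim.

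\textbf{Main obstacle.} The only delicate point is the distributivity step: the sums $F^n(\bb)_Y$ produced by the inductive hypothesis range over the (infinite) set $\Trees(\EE,Y)$, and we need to pull them out of a finite product. This is exactly what \cref{lemSinfExtra}(1) provides in $K = \Sinf[\AA]$ (which, as the preceding claim reduces to, is the only case we need), so by iterated application the expansion into a single sum over tuples of trees is justified; the bijection to rooted derivation trees then finishes the induction.
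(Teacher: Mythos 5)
Your proposal is correct and follows essentially the same route as the paper's proof: induction on $n$, expansion of $P_X(F^n(\bb))$ via the inductive hypothesis, distribution of multiplication over the infinite sums using \cref{lemSinfExtra}(1), and the correspondence between choices of $c\cdot m \in P_X$ together with families of subtrees and $(n{+}1,\bb)$-truncated derivation trees. The only cosmetic difference is that you call the correspondence a bijection where the paper speaks of a one-to-one correspondence of product terms; in either case idempotence makes multiplicities irrelevant, so the argument goes through.
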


\begin{proof}
Induction on $n$.
For $n=0$, we trivially have $F^n(\bb)_X = \yield(T \cut 0 \bb) = \bb_X$ for all derivation trees $T \in \Trees(\EE, X)$.
For the induction step, assume that $F^n(\tup b)_X = \sum_{T \in \Trees(\EE,X)} \yield(T \cut n \bb)$ for all $X \in \XX$.
We have to show that
\begin{align*}
  F^{n+1}(\bb)_X = P_X(F^n(\bb)) \eqNote \sum_{T \in \Trees(\EE,X)} \yield(T \cut {n+1} {\bb}).
\end{align*}
To simplify notation, let $\tup a$ be the tuple with $\tup a_X = \sum_{T \in \Trees(\EE,X)} \yield(T \cut n \bb)$.
We can rewrite the left-hand side as follows (recall that $m(X)$ denotes the exponent of $X$ in $m$):
\begin{align*}
  P_X(\tup a) &=
  \sum_{c \cdot m \in P_X} (c \cdot m)(\tup a) =
  \sum_{c \cdot m \in P_X} \Big( c \cdot \prod_{X \in \XX} \tup a_X^{m(X)} \Big) \\ &=
  \sum_{c \cdot m \in P_X} c \cdot \big( \underbrace{\tup a_{X_1} \cdot \tup a_{X_1} \cdots \tup a_{X_1}}_{m(X_1) \text{ times}} \;\cdot\; \dots \;\cdot\; \underbrace{\tup a_{X_\ell} \cdots \tup a_{X_\ell}}_{m(X_\ell) \text{ times}} \big) = \,\dots
\end{align*}
Notice that the unfolded product is finite, since $m$ only has finite exponents.
By \cref{lemSinfExtra}, multiplication distributes over the (infinite) sums $\tup a_{X_1}, \tup a_{X_2},\dots$.
The product $\tup a_{X_1} \cdot \tup a_{X_1} \cdots \tup a_{X_\ell}$ can thus be rewritten as sum:
\begin{align*}
  &= \sum \bigg\{ c \cdot \CProd_{1 \le i \le \ell} \yield(T_{i,1} \cut n \bb) \cdots \yield(T_{i,m(X_i)} \cut n \bb) \biggmid
  \begin{array}{l}
    \text{$c \cdot m \in P_X$ and all possible} \\
    \text{choices of trees $T_{i,j} \in \Trees(\EE,X_i)$}
  \end{array}\bigg\} \\ &=
  \sum \Big\{ \yield(T \cut {n+1} \bb) \bigmid
  \text{$c \cdot m \in P_X$, $T \in \Trees(\EE,X)$ with $\mon(\root) = m$, $\yield(\root) = c$ }\Big\} \\ &=
  \sum \Big\{ \yield(T \cut {n+1} \bb) \bigmid \text{$T \in \Trees(\EE,X)$} \Big\}.
\end{align*}

For these last three steps, recall that derivation trees from $X$ first choose a monomial (and corresponding coefficient) $c \cdot m \in P_X$.
The root $\root$ then has yield $c$ and the children are derivation trees from the indeterminates occurring in $m$.
By commutativity of the infinite product $\yield(T \cut {n+1} \bb)$,
we can group together the yields of the child subtrees,
thereby obtaining a one-to-one correspondence with the product terms in the first step. \qed
\end{proof}

To prove \cref{thmFixpointsAsTrees}, all that is left to do is to consider the supremum of the iteration $F^n(\tup 0)$ and the corresponding tree truncations, and dually the infimum of $F^n(\tup 1)$.
For the infimum, one last obstacle needs to be resolved:
We must show that whenever we pick for each $n$ some $n$-truncation, their infimum can still be realized as yield of an actual (infinite) tree, even if we pick a different tree to truncate for each $n$.
A similar observation has been used for strategy trees of model-checking games in \cite{DannertGraNaaTan21}, where it was called \emph{puzzle lemma} (due to a more involved construction of the infinite tree).

\begin{lemma}[puzzle lemma \cite{DannertGraNaaTan21}]
\label{lemTreesPuzzle}
Let $X \in \XX$ and $K=\Sinf[\AA]$.
Let $(T_n)_{n \in \Nat}$ be a family of trees $T_n \in \Trees(\EE,X)$
such that their yields $\yield(T_n \cut n \one)$ form a descending chain.
Then there is a tree $T' \in \Trees(\EE,X)$ with $\yield(T') \ge \Inf_n \yield(T_n \cut n \one)$.
\end{lemma}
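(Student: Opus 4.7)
The plan is to realise $T'$ as the limit of a consistent sequence of finite truncations, obtained by a König-style compactness argument. The key observation is that at each depth $d \in \Nat$, only finitely many distinct truncations $T \cut d \one$ can arise from trees $T \in \Trees(\EE,X)$: since the polynomials $P_Y$ are finite sums of monomials with finite exponents, the total degree is bounded by some $D \in \Nat$, so each node in such a truncation has at most $D$ children (giving at most $1 + D + \dots + D^d$ nodes in total), and at each node there are only finitely many possible labels $(\var(v), \mon(v), \yield(v))$.

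With this finiteness in hand, I would set up the tree $\mathcal{U}$ whose nodes at level $d$ are the depth-$d$ truncations of derivation trees in $\Trees(\EE,X)$, and where a depth-$(d+1)$ truncation is a child of its own further depth-$d$ truncation. Each $T_n$ contributes the finite path $T_n \cut 0 \one,\, T_n \cut 1 \one,\, \dots,\, T_n \cut n \one$ in $\mathcal{U}$. Restricting $\mathcal{U}$ to the union of these paths gives an infinite, finitely branching subtree, so König's lemma yields an infinite path $t_0, t_1, t_2, \dots$ in which each $t_d$ equals $T_{n_d} \cut d \one$ for some index $n_d \ge d$. The $t_d$ are mutually consistent by construction and glue into a single infinite derivation tree $T' \in \Trees(\EE,X)$ with $T' \cut d \one = t_d$ for every $d$.

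For the yield comparison, I would use that $\yield(T') = \Inf_d \yield(T' \cut d \one)$: truncating $T'$ deeper only introduces additional factors (each $\le 1$ by absorption), producing a descending chain that is cofinal among the finite subproducts defining the infinite product $\yield(T')$. Combined with the single-tree monotonicity $\yield(T_{n_d} \cut d \one) \ge \yield(T_{n_d} \cut {n_d} \one)$, valid because $n_d \ge d$ and deeper truncation of the same tree again only multiplies in additional $\le 1$ factors, this gives
\[
    \yield(T') = \Inf_d \yield(T' \cut d \one) = \Inf_d \yield(T_{n_d} \cut d \one) \ge \Inf_d \yield(T_{n_d} \cut {n_d} \one) \ge \Inf_n \yield(T_n \cut n \one),
\]
which is the desired bound.

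The main delicate point is the finite-branching claim underpinning König's lemma; once it is secured via the explicit degree bound $D$, the rest is a routine application of König's lemma together with an absorption-based chain of (in)equalities. Note that the $T_n$ themselves are not required to form a chain or to be pairwise distinct — only the descending property of their truncated yields enters, and only through the final inequality.
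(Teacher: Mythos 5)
Your argument is correct, but it takes a genuinely different route from the paper's. The paper builds $T'$ constructively: it splits the infimum into per-coefficient exponents $e(m,Y) = \Sup_n \moncount{T_n \scut n \one}{m}{Y}$ (chain splitting), calls $m \in P_Y$ \emph{problematic} when $e(m,Y)$ is finite, locates in some $T_n \cut n \one$ a block of $\ell+1$ consecutive levels free of problematic monomials, and repeats that block forever after determinizing it via \cref{lemDeterministicConstruction}; the resulting tree satisfies $\moncount{T'}{m}{Y} \le e(m,Y)$ for all $m,Y$, and the yield bound follows from \cref{lemYieldComparison}. You instead get $T'$ by compactness: because truncation is by depth and all monomials of $\EE$ have finite total degree, there are only finitely many depth-$d$ truncations, so König's lemma on the tree of truncations produces a coherent sequence $t_d = T_{n_d}\cut d \one$ that glues to an infinite derivation tree, and the estimate follows from $\yield(T') = \Inf_d \yield(T'\cut d \one)$ plus monotonicity of truncation. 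Your version is shorter, avoids the layer/determinization machinery, and in fact never uses the descending-chain hypothesis; what it depends on essentially is that truncation happens at a fixed depth in a uniformly finitely-branching tree --- precisely the simplification this paper makes relative to \cite{DannertGraNaaTan21}, where truncation counts designated nodes and such a compactness argument is not available, which is presumably why the paper imports the more robust construction (which also yields the stronger conclusion $\moncount{T'}{m}{Y} \le e(m,Y)$). Two points are worth spelling out if you write this up: the coherent choice of isomorphisms when gluing the $t_d$ (derivation trees are unordered, so truncations are only determined up to isomorphism, and one must fix compatible embeddings $t_d \hookrightarrow t_{d+1}$ before passing to the direct limit), and the fact that the depth truncations are cofinal among the finite subproducts defining $\yield(T')$, so their infimum really is the infinite product.
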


The proof in our setting is quite similar, so we refer to \cite{DannertGraNaaTan21} or \cref{Appendix:Proofs} for a complete proof.
Essentially, the finite number of indeterminates in $\AA$ allows us to choose a sufficiently large $n$ such that the truncation $T_n \cut n \one$ contains a \enquote{nice} part that we can repeat to obtain the infinite tree $T'$.
Here, \enquote{nice} means that no matter how often we repeat this part, the yield does not fall below $\Inf_n \yield(T_n \cut n \one)$.

With this taken care of, we can prove that the sum of all (finite) derivation trees gives the least and greatest solutions.

\begin{proof}[of \cref{thmFixpointsAsTrees}]
Recall that $F$ is fully continuous, as it is defined by polynomial expressions over a fully-continuous semiring.
By Kleene's fixed-point theorem, we can thus express its least (or greatest) fixed point as supremum of $F^n(\zero)$ (or infimum of $F^n(\one)$) over $n \in \Nat$.
By idempotence, sums coincide with suprema, so for the least solution we immediately obtain: \begin{align*}
  \lfp(F)_X =
  \Sup_{n \in \Nat} F^n(\zero)_X &\eqInfo*{(\ref{lemTreeIteration})}
  \Sup_{n \in \Nat} \Big( \RSum_{T \in \Trees(\EE,X)} \yield(T \cut n \zero) \Big) =
  \sum_{T \in \Trees(\EE,X)} \Big( \Sup_{n \in \Nat} \yield(T \cut n \zero) \Big).
\end{align*}
Now observe that $\yield(T \cut n \zero) = \yield(T)$ if $T$ has height $< n$,
otherwise $\yield(T \cut n \zero) = 0$.
Hence $\Sup_n \yield(T \cut n \zero) = \yield(T)$ if $T$ is finite and $0$ otherwise.

It remains to consider the greatest solution.
We apply \cref{lemSinfInfima} to express the infimum in $\Sinf[\AA]$ as a supremum:
\begin{align*}
  \gfp(F)_X &=
  \Inf_{n \in \Nat} F^n(\one)_X \,\eqInfo{(\ref{lemTreeIteration})}
  \Inf_{n \in \Nat} \Big( \RSum_{T \in \Trees(\EE,X)} \yield(T \cut n \one) \Big) \\ &\eqInfo{(\ref{lemSinfInfima})} \;
  \Sup \Big\{ \Inf_{n \in \Nat} y_n \Bigmid
    \begin{array}{l}
      \text{\small $(y_n)_{n \in \Nat}$ is a descending chain of monomials}\\
      \text{\small with $y_n = \yield(T_n \cut n \one)$ for some $T_n \in \Trees(\EE,X)$}
    \end{array}
  \Big \} \\ &\eqInfo{(\ref{lemTreesPuzzle})} \;
  \Sup \Big\{ \yield(T') \bigmid T' \in \Trees(\EE,X) \Big\} = \RSum_{T \in \Trees(\EE,X)} \yield(T).
\end{align*}
In the last line, we apply the puzzle lemma.
This gives us for each monomial chain $(y_n)_{n \in \Nat}$ an infinite tree $T'$ with $\yield(T') \ge \Inf_n y_n$.
Conversely, each tree $T'$ induces the monomial chain defined by $y_n = \yield(T' \cut n \one)$.
It is easy to see that this chain has infimum $\yield(T')$, so we have equality. \qed
\end{proof}

\section{Closed Form Solution}

This section is devoted to the proof of our main result:

\newcounter{backuptheorem}
\setcounter{backuptheorem}{\value{theorem}}
\setcounter{theorem}{0}
\begin{theorem}\label{thmClosedSolution}
Let $K$ be an absorptive, fully-continuous semiring.
Let $\EE \co (X = P_X)_{X \in \XX}$ be a polynomial equation system over $K$ and $\XX = \{X_1,\dots,X_\ell\}$ with induced operator $F_\EE \co K^\ell \to K^\ell$.
Then,
\[
    \lfp(F_\EE) = F_\EE^\ell(\zero),\quad
    \gfp(F_\EE) = F_\EE^\ell( \, F_\EE^\ell(\one)^\infty \,).
\]
\end{theorem}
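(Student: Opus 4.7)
The plan is to express both sides of each identity as sums over $\Trees(\EE, X)$, using \cref{thmFixpointsAsTrees} together with \cref{lemTreeIteration}, and then compare them in $\Sinf[\AA]$. By the universal property (\cref{thmSinfUniversal,lemHomoPreservesSolutions}), it again suffices to work in $K = \Sinf[\AA]$: every operation involved---including $(\cdot)^\infty$, by \cref{lemSinfExtra}(2)---commutes with fully-continuous homomorphisms.

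For the least fixed point, \cref{lemTreeIteration} with $\bb = \zero$ gives
\[
F_\EE^\ell(\zero)_X = \RSum_{T \in \Trees(\EE, X)} \yield(T \cut \ell \zero) = \RSum_{\substack{T \in \Trees(\EE, X) \\ \text{depth}(T) < \ell}} \yield(T),
\]
since $\yield(T \cut \ell \zero) = 0$ as soon as $T$ has a node at depth $\ell$. Comparing with $\lfp(F_\EE)_X = \sum_{T \text{ finite}} \yield(T)$ from \cref{thmFixpointsAsTrees}, I must show the remaining sum over finite trees of depth $\ge \ell$ is absorbed. This is a pigeonhole-and-contract step: along any root-to-leaf path of length $\ge \ell$, two nodes $u, v$ share an indeterminate (only $\ell$ are available), and replacing the subtree at $u$ by the subtree at $v$ produces a derivation tree with pointwise no greater monomial multiplicities, hence no smaller yield by \cref{lemYieldComparison}(1). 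Iterating terminates at depth $< \ell$.

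For the greatest fixed point, applying \cref{lemTreeIteration} with $\bb = F_\EE^\ell(\one)^\infty$ gives $F_\EE^\ell(F_\EE^\ell(\one)^\infty)_X = \sum_T \yield(T \cut \ell \bb)$; substituting $\bb_Y = \sum_{T'} \yield(T' \cut \ell \one)^\infty$ from a second use of \cref{lemTreeIteration} and distributing via \cref{lemSinfExtra}(1) rewrites this as a sum over pairs of a tree $T \in \Trees(\EE, X)$ and a family $\{T'_v\}_v$ indexed by the depth-$\ell$ nodes $v$ of $T$, with contribution
\[
\yield(T \cut \ell \one) \cdot \CProd_{v:\,\text{depth}(v)=\ell} \yield(T'_v \cut \ell \one)^\infty.
\]
I would show this equals $\gfp(F_\EE)_X = \sum_T \yield(T)$ by two absorption arguments in $\Sinf[\AA]$. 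For $F_\EE^\ell(F_\EE^\ell(\one)^\infty) \le \gfp$, pigeonhole applied to the first $\ell+1$ nodes of any root-to-leaf path of $T'_v$ produces an ancestor-descendant pair sharing an indeterminate; splicing $T'_v$ into itself infinitely often at that repetition and attaching the result at $v$ in $T$ yields a legal infinite element of $\Trees(\EE, X)$ whose yield absorbs the corresponding term, in direct analogy with \cref{lemTreesPuzzle}. For $\gfp \le F_\EE^\ell(F_\EE^\ell(\one)^\infty)$, given $T \in \Trees(\EE, X)$ I first contract all finite subtrees of depth $\ge \ell$ using the $\lfp$ argument; at every remaining depth-$\ell$ node $v$, the subtree $T_v$ is infinite, and pigeonhole along an infinite path of $T_v$ extracts a cycle of depth $\le \ell$ to serve as $T'_v$. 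By \cref{lemYieldComparison}(2), $\yield(T'_v \cut \ell \one)^\infty$ dominates the infinitary contribution of $T_v$, and the product over $v$ realizes a term on the right.

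The main obstacle is this last direction and specifically the interplay of contraction with cycle extraction: one must verify that every coefficient appearing with $\infty$-exponent in $\yield(T)$ is also picked up by some extracted $T'_v$. The remedy exploits \cref{lemYieldComparison}(2), which only cares about the \emph{presence} of a coefficient in $T'_v \cut \ell \one$ rather than its multiplicity, so a single short cycle through the relevant nodes is enough. A secondary check, shared with the converse splicing, is that the infinite tree built by repeated substitution is a genuine member of $\Trees(\EE, X)$; this amounts to the same kind of local compatibility verification as in the proof of \cref{lemTreesPuzzle}.
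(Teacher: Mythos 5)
Your overall strategy is the paper's: reduce to $\Sinf[\AA]$, express both sides as sums over derivation trees via \cref{lemTreeIteration} and \cref{thmFixpointsAsTrees}, and compare termwise using absorption and pigeonhole arguments. The least-fixed-point argument and the direction $F^\ell(F^\ell(\one)^\infty) \le \gfp(F)$ are essentially right (the latter needs the splicing to resolve \emph{all} dangling depth-$\ell$ leaves of $T'_v\cut\ell\one$, i.e.\ a full determinization as in \cref{lemDeterministicExists}, not just one path, but that is a routine completion). The genuine gap is in the hard direction $\gfp(F) \le F^\ell(F^\ell(\one)^\infty)$, and you have also inverted the obstacle: for $\yield(T) \le \yield(\tilde T\cut\ell\one)\cdot\prod_v\yield(T'_v\cut\ell\one)^\infty$ one needs every exponent on the right to be \emph{at most} the corresponding exponent in $\yield(T)$. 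A coefficient with exponent $\infty$ in $\yield(T)$ that is \emph{not} picked up by any $T'_v$ is harmless (it only makes the right-hand term larger). The real danger is the converse: every monomial appearing in some $T'_v\cut\ell\one$ gets promoted to exponent $\infty$ on the right, so it \emph{must} occur infinitely often in $T$ --- and your extraction does not guarantee this.

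Concretely, take $\ell=3$ with $X_1 = aX_2$, $X_2 = bX_3$, $X_3 = cX_1 + c'X_1^2$, and let $T$ be an infinite tree in which the monomial $X_1 \in P_{X_3}$ (coefficient $c$) is used exactly once, at an $X_3$-node at depth $5$, while all other $X_3$-nodes use $c'X_1^2$; then $\yield(T) = a^\infty b^\infty c\,(c')^\infty$. The first $\ell+1$ nodes of the infinite path through that node in the subtree rooted at depth $\ell$ are labelled $X_1,X_2,X_3,X_1$, so your pigeonhole cycle passes through the $c$-node and $\yield(T'_v\cut\ell\one)^\infty$ acquires the factor $c^\infty$. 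Since $c^\infty \le c$ and this can be strict (in $\Trop$ with $c=5$ the term collapses to $\infty = 0_\Trop$ while $\yield(T)$ stays finite), the chosen term fails to dominate $\yield(T)$. This is exactly what the paper's main lemma (\cref{lemTreeConstruction}) is engineered to avoid: the deterministic subtrees are built \emph{only} from monomials occurring infinitely often in $T$, they are planted at the \emph{earliest} occurrences of infinitely-occurring indeterminates, and only \emph{then} is the remaining prefix --- which now contains only finitely-occurring indeterminates, so that ``last occurrences'' exist and the contraction terminates --- flattened to depth $\le\ell$ by the first-to-last-occurrence replacement. Your order of operations (contract finite subtrees first, then extract cycles from whatever sits at depth $\ell$) omits both of these safeguards, so the proof as sketched does not go through without that additional idea.
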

\setcounter{theorem}{\value{backuptheorem}}

Towards the proof, we again fix a polynomial equation system $\EE \co (X = P_X)_{X \in \XX}$ over an absorptive, fully-continuous semiring $K$ with induced operator $F$.
Recall that $\ell$ denotes the number of equations (and indeterminates) of $\EE$.
Our strategy is to prove that we can always find derivation trees of a certain shape, and that the yield of all other derivation trees is absorbed by these trees.

\subsection{Deterministic Derivation Trees}

\begin{Definition}
A derivation tree $T = (V, E, \var, \yield)$ is said to be \emph{deterministic} if $\mon(v)$ depends only on $\var(v)$.
That is, the indeterminate labels of the (unordered) children of a node $v$ are determined by the node's indeterminate label so that the relation $\{ (\var(v), \mon(v)) \mid v \in V \}$ is a function.
\end{Definition}

To reason about $\gfp(F)$, we must reason about and construct infinite derivation trees.
This is straight-forward for deterministic trees.

\begin{lemma}[deterministic construction]\label{lemDeterministicConstruction}
Let $\XX_0 \subseteq \XX$.
For each $X \in \XX_0$, let $m_X$ be a monomial with $m_X \in P_X$ such that all indeterminates occurring in $m_X$ are contained in $\XX_0$.
Then for each $X \in \XX_0$, there is a deterministic tree $T \in \Trees(\EE, X)$ with $\var(v) \in \XX_0$ and $\mon(v) = m_{\var(v)}$ for all nodes $v \in T$.
\end{lemma}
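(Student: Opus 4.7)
The plan is to build $T$ directly by a top-down corecursive construction, using the data $(m_Y)_{Y \in \XX_0}$ as a blueprint for branching. Since the definition in \cref{defTree} explicitly allows infinite trees, there is no well-foundedness worry: we can build the tree level by level and the resulting (possibly infinite) structure is automatically a valid derivation tree.

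Concretely, for each $Y \in \XX_0$ fix a coefficient $c_Y \in K$ with $c_Y \cdot m_Y \in P_Y$. I would take the node set $V$ to consist of finite sequences over the alphabet $\{(Z, i) \mid Z \in \XX_0,\, i \in \NN\}$. Put $\var(\root) = X$ and $\yield(\root) = c_X$. For each node $v$ already in $V$ with $\var(v) = Y$, include as children the sequences $v \cdot (Z, i)$ for every $Z \in \XX_0$ and every $1 \le i \le m_Y(Z)$, with $\var(v \cdot (Z, i)) = Z$ and $\yield(v \cdot (Z, i)) = c_Z$. The extra index $i$ is there purely to distinguish the $m_Y(Z)$ different children carrying the same indeterminate label $Z$. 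By hypothesis $m_Y$ only mentions indeterminates from $\XX_0$, and $c_Z$ is defined for every $Z \in \XX_0$, so the recursion never gets stuck.

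The three required properties are then immediate. For any node $v$ with $\var(v) = Y$, the children are in bijection with $\{(Z, i) \mid Z \in \XX_0,\, 1 \le i \le m_Y(Z)\}$, so
\[
    \mon(v) = \prod_{Z \in \XX_0} Z^{m_Y(Z)} = m_Y = m_{\var(v)},
\]
which simultaneously gives determinism and the requested equality $\mon(v) = m_{\var(v)}$. Compatibility follows from $\yield(v) \cdot \mon(v) = c_Y \cdot m_Y \in P_Y$, so $T \in \Trees(\EE, X)$, and all $\var$-labels lie in $\XX_0$ by construction.

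I do not anticipate a genuine obstacle; the only mildly delicate point is the indexing scheme that turns $m_Y(Z)$ formal copies of an indeterminate into $m_Y(Z)$ distinct children (needed because the tree structure demands distinct children, whereas $\mon(v)$ only records a multiset). Beyond that, the argument uses no structural property of the semiring — the coefficients $c_Y$ exist simply because the monomials $m_Y$ were selected from $P_Y$.
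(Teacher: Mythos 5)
Your construction is correct and is essentially the paper's own argument: the paper only sketches the proof as "repeatedly add to each leaf $v$ child nodes according to $m_{\var(v)}$", and your explicit indexing of children by pairs $(Z,i)$ with $1 \le i \le m_Y(Z)$ just makes that corecursion precise. No gap.
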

\begin{proof}[sketch]
Starting from the root $\var(\root)=X$, define the (possibly infinite) tree $T$ inductively by repeatedly adding to each leaf $v$ child nodes according to $m_{\var(v)}$, always maintaining the desired property for all inner nodes. \qed
\end{proof}

It is easy to see that deterministic trees are uniquely defined by their prefix up to depth $\ell-1$, as at depth $\ell$ each path must either end or start to repeat (recall that there are only $\ell$ different indeterminates).
Moreover, once we consider the infinitary power $\yield(T)^\infty$, it does not matter how often a particular coefficient $c$ occurs in $T$, since $(c^n)^\infty = c^\infty$ for all $n > 0$.
This leads to the following simple but essential observations.

\begin{lemma}\label{lemDeterministicPrefix}
If $T \in \Trees(\EE,X)$ is deterministic, every indeterminate that occurs in $T$ also occurs in the truncation $T \cut {\ell-1} \one$.
It follows that $\yield(T)^\infty = \yield(T \cut \ell \one)^\infty$.
\end{lemma}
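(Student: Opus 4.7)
The plan is to prove the two assertions in turn, with the second following from the first together with the standard regrouping of yields.

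For the first assertion, I would argue that every occurrence of any indeterminate $Z$ in $T$ can be traced back to an occurrence at depth $\le \ell - 1$. I would prove this by strong induction on the depth $d$ of a given occurrence, call the node $v_d$. If $d \le \ell - 1$ there is nothing to show, so assume $d \ge \ell$ and consider the root-to-node path $v_0, v_1, \dots, v_d$. It has $d + 1 \ge \ell + 1$ entries carrying labels from $\XX$, a set of size $\ell$, so the pigeonhole principle produces indices $i < j$ with $\var(v_i) = \var(v_j)$. This is where determinism is essential: $\mon(v)$ depends only on $\var(v)$, so the indeterminate-labeled subtrees rooted at $v_i$ and $v_j$ are isomorphic. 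The descendant $v_d$ of $v_j$ at relative depth $d - j$ therefore has a twin below $v_i$ at absolute depth $i + (d - j) < d$, and the induction hypothesis applied to this shallower occurrence closes the step.

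For the second assertion, I would follow the grouping-by-coefficient idea used in \cref{lemYieldComparison}. Determinism ensures that every node of $T$ labeled $Y$ has $\mon(v) = m_Y$ for a fixed choice $m_Y$, so a single coefficient $c_Y$ (the coefficient of $m_Y$ in $P_Y$) is attached to every such node. Rearranging the infinite product gives $\yield(T) = \prod_{Y \in \XX} c_Y^{N_Y}$ and $\yield(T \cut \ell \one) = \prod_{Y \in \XX} c_Y^{N_Y'}$, where $N_Y \in \Ninf$ counts all occurrences of $Y$ in $T$ and $N_Y'$ counts only those at depth $< \ell$; the depth-$\ell$ nodes of the truncation carry yield $\one_{\var(v)} = 1$ and drop out of the product. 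Applying the infinitary power collapses any positive exponent to $\infty$ and leaves zero exponents at $0$, so both yields reduce to $\prod_{Y : N_Y > 0} c_Y^\infty$ and $\prod_{Y : N_Y' > 0} c_Y^\infty$ respectively, and by the first assertion these index sets coincide.

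The only real obstacle is the pigeonhole step, where I need determinism to upgrade an equality between two labels on a path into a structural identification of the two subtrees they root; after that, both claims drop out mechanically from the formulas for yields that are already in place.
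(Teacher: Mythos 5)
Your proof is correct and follows exactly the argument the paper sketches in the sentence preceding the lemma: a root-to-node path of depth $\ge \ell$ must repeat an indeterminate label, determinism turns that repetition into an isomorphism of the labeled subtrees so every label already appears by depth $\ell-1$, and the infinitary power then erases the difference in multiplicities since $(c^n)^\infty = c^\infty$ for $n>0$. Your strong induction on the depth of an occurrence is a clean way to make the pigeonhole step rigorous; no gaps.
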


\begin{corollary}\label{lemDeterministicExists}
For each $T \in \Trees(\EE,X)$, there is a deterministic tree $T' \in \Trees(\EE, X)$ such that $\yield(T \cut \ell \one)^\infty \le \yield(T')^\infty$.
\end{corollary}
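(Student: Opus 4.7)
The plan is to construct a deterministic tree $T' \in \Trees(\EE,X)$ all of whose monomial/label pairs already occur in the truncation $T \cut \ell \one$, and then invoke the second bullet of \cref{lemYieldComparison} to conclude that $\yield(T \cut \ell \one)^\infty \le \yield(T')^\infty$. Absorption is what makes this route work: under the infinitary power only the \emph{set} of used coefficients matters, not their multiplicities, so $T'$ may be structurally far poorer than $T$ as long as it introduces no fresh coefficient.

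For the construction I would use a pigeonhole on the nested sets $S_0 \subseteq S_1 \subseteq \dots \subseteq S_\ell \subseteq \XX$, where $S_k$ collects all indeterminates labelling some node of $T$ at depth at most $k$. Since this is a chain of $\ell+1$ subsets of the $\ell$-element set $\XX$, two consecutive ones must coincide: $S_k = S_{k+1}$ for some $k < \ell$. For each $Y \in S_k$, pick an occurrence $v_Y \in T$ of $Y$ at depth $\le k$ and set $m_Y := \mon(v_Y) \in P_Y$. The children of $v_Y$ sit at depth $\le k+1$, so every indeterminate of $m_Y$ lies in $S_{k+1} = S_k$. Thus $\XX_0 := S_k$ (which contains the root label $X$) together with $\{m_Y\}_{Y \in \XX_0}$ meets the hypothesis of \cref{lemDeterministicConstruction}, yielding a deterministic $T' \in \Trees(\EE, X)$ with $\var(v) \in \XX_0$ and $\mon(v) = m_{\var(v)}$ at every node. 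Each such pair $(m_{\var(v)}, \var(v))$ is witnessed in $T$ by the node $v_{\var(v)}$, which lies at depth $\le k < \ell$ and is therefore also present in $T \cut \ell \one$; hence $|T \cut \ell \one|_{m,Y} = 0$ implies $|T'|_{m,Y} = 0$, and the second bullet of \cref{lemYieldComparison} delivers the claim.

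The main obstacle I would expect is choosing $\XX_0$ correctly. A naive attempt taking $\XX_0$ to be the set of all labels occurring in $T \cut \ell \one$ fails, since an indeterminate $Y$ whose shallowest occurrence in $T$ sits at depth $\ell - 1$ may have child labels only at depth $\ell$ that appear nowhere else in the truncation, so the induced $m_Y$ would pull in indeterminates outside $\XX_0$. The pigeonhole step is precisely what sidesteps this, by shrinking $\XX_0$ to a label set that is closed under one further descent.
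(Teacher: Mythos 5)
Your proof is correct and follows essentially the same route as the paper's: both reduce to \cref{lemDeterministicConstruction} plus the second bullet of \cref{lemYieldComparison}, with a pigeonhole over the $\ell$ indeterminates and the first $\ell$ levels of $T$ guaranteeing that every chosen monomial $m_Y$ is witnessed by a node of depth $<\ell$ and hence survives (with its $\mon$-label intact) in $T \cut \ell \one$; the paper merely organizes the pigeonhole differently, growing $\XX_0$ greedily from the root level by level rather than stabilizing the level-label sets $S_k$. One pedantic point: a chain of $\ell+1$ subsets of an $\ell$-element set need not have two consecutive equal members in general (the strictly increasing chain starting from $\emptyset$ is a counterexample), so your count implicitly uses that $S_0 \neq \emptyset$ — which does hold here, since $S_0$ contains the root label $X$.
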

\begin{proof}[sketch]
Choose any way to determinize $T \cut \ell \one$ (by \cref{lemDeterministicConstruction}) using only monomials appearing in $T \cut \ell \one$.
This is always possible, as $T \cut \ell \one$ contains at most $\ell$ indeterminates and hence every path must contain a repetition or end in a leaf (cf.\ \cref{Appendix:Proofs}).
The inequality holds by \cref{lemYieldComparison} (yield comparison). \qed
\end{proof}

\subsection{Constructing Simple Trees}

\begin{figure}[t]
\tikzset{sdot/.style={draw=black, fill=black, circle, inner sep=0pt, minimum size=4pt}}
\centering
\subfloat{\label{figTreeConstructionW}\centering
\begin{tikzpicture}[font=\scriptsize,baseline,scale=0.9]
\fill [fill=white] (0,0) -- (1.2,-3) -- (-1.2,-3) -- cycle;
\draw [line join=bevel] (1.2,-3) -- (0,0) -- (-1.2,-3);

\coordinate (X1) at (-.4,-1.6);
\coordinate (X2) at (.45,-1.8);
\coordinate (Y) at (0,-1.4);

\node (0) at (0,0) {\phantom{$X$}};
\node (1) at (X1) {\underline{$X$}};
\node (2) at (Y) {\underline{$X$}};
\node (3) at (X2) {\underline{$Y$}};
\node (4) at (-.2,-2.2) {$X$};
\node (5) at (.7,-2.8) {$X$};
\node (6) at (-.6,-2.7) {$Y$};
\node (7) at (.4,-2.4) {$Y$};
\node (8) at (-.2,-2.75) {$X$};

\path[densely dotted,short=-2pt]
(0) edge (1) (0) edge (2) (0) edge (3)
(1) edge (6) (1) edge (4) (4) edge (8)
(3) edge (7) (3) edge (5)
;

\node [font=\small,anchor=north east] at (-.5,0) {(a)};
\end{tikzpicture}
}
\hfill
\subfloat{\label{figTreeConstructionS}\centering
\begin{tikzpicture}[font=\small,baseline,scale=0.9,xscale=1.1]
\coordinate (X1) at (-.4,-1.6);
\coordinate (X2) at (.45,-1.8);
\coordinate (Y) at (0,-1.4);

\fill [fill=gray!30] ($(X1)+(0,.3)$) -- ({$(X1)-(.6,0)$} |- {0,-3}) -- ({$(X1)+(.0,0)$} |- {0,-3}) -- cycle;
\draw [black] ({$(X1)+(.0,0)$} |- {0,-3}) -- ($(X1)+(0,.3)$) -- ({$(X1)-(.6,0)$} |- {0,-3});

\fill [fill=gray!30] ($(Y)+(0,.3)$) -- ({$(Y)-(.3,0)$} |- {0,-3}) -- ({$(Y)+(.25,0)$} |- {0,-3}) -- cycle;
\draw [black] ({$(Y)+(.25,0)$} |- {0,-3}) -- ($(Y)+(0,.3)$) -- ({$(Y)-(.3,0)$} |- {0,-3});

\fill [fill=gray!30] ($(X2)+(0,.3)$) -- ({$(X2)-(.1,0)$} |- {0,-3}) -- ({$(X2)+(.5,0)$} |- {0,-3}) -- cycle;
\draw [black] ({$(X2)+(.5,0)$} |- {0,-3}) -- ($(X2)+(0,.3)$) -- ({$(X2)-(.1,0)$} |- {0,-3});

\draw [thick, fill=white, line join=bevel] (0,0) -- (-.65,-1.6) -- (-.25,-1.6) -- (-.1,-1.4) -- (.1,-1.4) -- (.25,-1.8) -- (.7,-1.8) -- cycle;

\node [anchor=base,yshift=1pt,font=\scriptsize] at (X1) {$X$};
\node [anchor=base,yshift=1pt,font=\scriptsize] at (Y) {$X$};
\node [anchor=base,yshift=1pt,font=\scriptsize] at (X2) {$Y$};

\node [anchor=south,inner sep=1pt,font=\scriptsize] at (-.63,-3) {$T^\infty_X$};
\node [anchor=south,inner sep=1pt,font=\scriptsize] at (0,-3) {$T^\infty_X$};
\node [anchor=south,inner sep=1pt,font=\scriptsize] at (.62,-3) {$T^\infty_Y$};

\node [font=\small,anchor=north east] at (-.5,0) {(b)};
\end{tikzpicture}
}
\hfill
\subfloat{\label{figTreeConstructionPrefix}\centering
\begin{tikzpicture}[font=\small,baseline,scale=0.9]
% roots of subtrees in prefix
\coordinate (root) at (0,0);
\coordinate (droproot) at (-.35,-.4375);
\coordinate (keeproot) at (-.8,-1.0);
% roots of determinstic subtrees
\coordinate (X2) at (.7,-1.5); % right
\coordinate (X1) at (-.7,-1.5); % left
\coordinate (X3) at (0,-1.5); % center
\tikzset{crossout/.style={preaction={fill=black!70}, pattern=north east lines, pattern color=gray!60}}
\tikzset{crossout/.style={preaction={fill=gray!15}, pattern=north east lines, pattern color=gray!80}}

% det trees
\fill [gray!30] ($(X2)+(0,.3)$) -- ({$(X2)-(.1,0)$} |- {0,-3}) -- ({$(X2)+(.5,0)$} |- {0,-3}) -- cycle;
\draw [black] ({$(X2)+(.5,0)$} |- {0,-3}) -- ($(X2)+(0,.3)$) -- ({$(X2)-(.1,0)$} |- {0,-3});

\fill [gray!30] ($(X1)+(0,.3)$) -- ({$(X1)-(.5,0)$} |- {0,-3}) -- ({$(X1)+(.2,0)$} |- {0,-3}) -- cycle;
\draw [black] ({$(X1)+(.2,0)$} |- {0,-3}) -- ($(X1)+(0,.3)$) -- ({$(X1)-(.5,0)$} |- {0,-3});

\fill [crossout] ($(X3)+(0,.3)$) -- ({$(X3)-(.25,0)$} |- {0,-3}) -- ({$(X3)+(.25,0)$} |- {0,-3}) -- cycle;

% prefix full
\fill [white] (0,0) -- (-1.2,-1.5) -- (1.2,-1.5) -- cycle;

% prefix drop fill
\fill [crossout] (droproot) -- (-1.2,-1.5) -- (0.5,-1.5) -- cycle;

% prefix border right part
\draw [draw=black, thick] (droproot) -- (root) -- (1.2,-1.5) -- (0.5,-1.5) -- cycle;

% prefix keep subtree
\draw [draw=black, thick, fill=white] (keeproot) -- (-1.2,-1.5) -- (-0.4,-1.5) -- cycle;

\node [sdot, label={east:\textcolor{black}{\scriptsize$\!Z$}}] (r) at (droproot) {};
\node [sdot, label={west:\scriptsize $Z\!$}] (v) at (keeproot) {};

\draw [thick,->,>=stealth',shorten <=2pt,shorten >=2pt] (v) edge [out=130, in=190, looseness=1.6] (r);

\node [font=\small,anchor=north east,overlay] at (-1.1,0) {(c)};
\end{tikzpicture}
}
\hfill
\subfloat{\label{figTreeConstructionResult}\centering
\begin{tikzpicture}[font=\small,baseline,scale=0.9]
\coordinate (X1) at (-.3,-1.0);
\coordinate (Y) at (0.4,-1.2);

\fill [fill=gray!30] ($(X1)+(0,.3)$) -- ({$(X1)-(.6,0)$} |- {0,-3}) -- ({$(X1)+(.15,0)$} |- {0,-3}) -- cycle;
\draw [black] ({$(X1)+(.15,0)$} |- {0,-3}) -- ($(X1)+(0,.3)$) -- ({$(X1)-(.6,0)$} |- {0,-3});

\fill [fill=gray!30] ($(Y)+(0,.3)$) -- ({$(Y)-(.25,0)$} |- {0,-3}) -- ({$(Y)+(.5,0)$} |- {0,-3}) -- cycle;
\draw [black] ({$(Y)+(.5,0)$} |- {0,-3}) -- ($(Y)+(0,.3)$) -- ({$(Y)-(.25,0)$} |- {0,-3});

\draw [thick, fill=white, line join=bevel] (0,0) -- (-.65,-1.0) -- (0,-1.0) -- (.1,-1.2) -- (.7,-1.2) -- cycle;

\node [anchor=base,yshift=1pt,font=\scriptsize] at (X1) {$X$};
\node [anchor=base,yshift=1pt,font=\scriptsize] at (Y) {$Y$};

\node [anchor=south,inner sep=1pt,font=\scriptsize] at (-.47,-3) {$T^\infty_X$};
\node [anchor=south,inner sep=1pt,font=\scriptsize] at (.52,-3) {$T^\infty_Y$};

\node [font=\small,anchor=north east] at (-.5,0) {(d)};

\draw[thick,decoration={brace, mirror, amplitude=.15cm}, decorate]
(.9,-1.2) -- (.9,0)
node[pos=0.5,right,anchor=west,align=left,xshift=5pt] {$\ell$};
\end{tikzpicture}
}
\caption{Illustration of the construction steps in the proof of \cref{lemTreeConstruction}.}
\label{figTreeConstruction}
\end{figure}
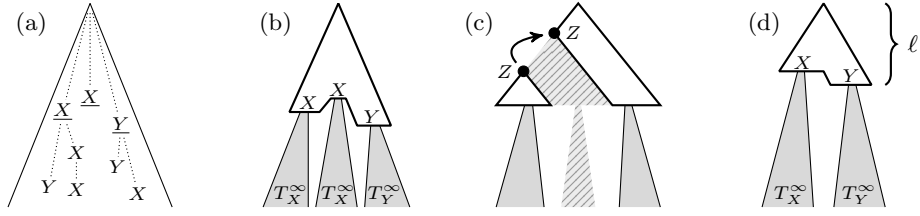

The main insight behind \cref{thmClosedSolution} is that when we sum over the yield of all derivation trees,
it suffices to consider trees of a particular shape corresponding to our intuition from the introduction:
These trees consist of an arbitrary prefix up to (at most) depth $\ell$ (the reachability part), followed by deterministic trees (the cyclic part).
See \cref{figTreeConstructionResult} for an illustration.

\begin{lemma}[main lemma]
\label{lemTreeConstruction}
For each $T \in \Trees(\EE, X)$, there is a derivation tree $T' \in \Trees(\EE,X)$
such that all subtrees rooted at depth $\ell$ in $T'$ are deterministic and use only monomials $m \in P_Y$ (with $Y \in \XX$) that occur infinitely often in $T$.
Moreover, $\moncount {T'} m Y \le \moncount T  m Y$ for all $m \in P_Y, Y \in \XX$.
\end{lemma}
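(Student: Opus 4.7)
The plan is to construct $T'$ in two stages: first attach deterministic infinite subtrees at well-chosen nodes of $T$, then perform a finite sweep of pigeonhole shortcuts to pull every cut point up to depth $\leq \ell$.

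\emph{Deterministic subtrees.} Let $\XX_\infty\subseteq\XX$ collect the indeterminates occurring infinitely often in $T$. Pigeonhole on the finite polynomial $P_Y$ gives, for each $Y\in\XX_\infty$, a monomial $m_Y\in P_Y$ with $\moncount{T}{m_Y}{Y}=\infty$; each of the infinitely many $Y$-occurrences using $m_Y$ contributes children labeled by the variables of $m_Y$, so every such variable is itself in $\XX_\infty$. \Cref{lemDeterministicConstruction} applied to $\XX_0=\XX_\infty$ with the family $\{m_Y\}$ then produces deterministic trees $T^\infty_Y\in\Trees(\EE,Y)$ whose nodes carry only monomials of infinite count in $T$.

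\emph{Attaching.} Form $T^\ast$ from $T$ by locating, on every maximal root-path of $T$, the shallowest $v$ with $\var(v)\in\XX_\infty$ (if any) and replacing the subtree rooted at $v$ with a copy of $T^\infty_{\var(v)}$; root-paths with no such $v$ must be finite (K\"onig's lemma, since $T$ is finitely branching) and are kept unchanged. This removes only finite-count monomial occurrences and adds only infinite-count ones, so $\moncount{T^\ast}{m}{Y}\leq\moncount{T}{m}{Y}$ holds pointwise. The roots of the attached copies form an antichain $A$ of \emph{cut points}, each labeled by some $Y\in\XX_\infty$.

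\emph{Top-down shortcutting (cf.~\cref{figTreeConstruction}).} Sweep depths $d=0,1,\dots,\ell$ in order. At each node $u$ of depth $d$ in the current tree with $\var(u)\notin\XX_\infty$, if some strict ancestor $u'\prefix u$ shares the label $\var(u)$ then replace the subtree at $u'$ by the subtree at $u$. Each such shortcut preserves $\EE$-compatibility (since the labels at the replacement site coincide), deletes nodes, and therefore can only decrease monomial counts; finiteness of each level (by finite branching of $T^\ast$) makes the sweep terminate after finitely many operations. After the sweep, no non-$\XX_\infty$ label is repeated on any root-to-node path within the top $\ell+1$ levels; since the ancestors of any cut carry only non-$\XX_\infty$ labels and $|\XX\setminus\XX_\infty|\leq\ell$, every surviving cut now sits at depth $\leq\ell$. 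Call the resulting tree $T'$.

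Every node of $T'$ at depth $\ell$ is either in $A$ (with deterministic subtree $T^\infty_{\var(v)}$) or lies strictly inside some $T^\infty_{\var(u)}$ attached at a cut $u\in A$ of depth $<\ell$; either way, its subtree is deterministic and uses only monomials of infinite count in $T$. The count bound $\moncount{T'}{m}{Y}\leq\moncount{T}{m}{Y}$ has been preserved throughout. The main technical point is the final sweep: the crucial pigeonhole $|\XX\setminus\XX_\infty|\leq\ell$ ensures that after eliminating label repetitions within the top $\ell+1$ levels, every root-to-cut path can use each non-$\XX_\infty$ label at most once, hence has length $\leq\ell$; formally verifying that the sweep stays well-defined as nodes are rearranged under it is where most of the bookkeeping lives.
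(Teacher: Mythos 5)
Your construction follows the paper's proof step for step: the same pigeonhole choice of infinitely occurring monomials $m_Y$ for $Y \in \XX_\infty$, the same appeal to \cref{lemDeterministicConstruction}, the same grafting of $T^\infty_Y$ at the shallowest $\XX_\infty$-occurrences, and the same idea of shortening the remaining prefix by replacing the subtree at an earlier occurrence of a repeated label $Z \notin \XX_\infty$ with the subtree at a later occurrence. The supporting observations (compatibility is preserved because the labels at the splice point coincide; counts only decrease because replacements delete nodes and the grafted trees only add occurrences of monomials whose count in $T$ is already $\infty$) are all correct.

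The one step that does not work as stated is the \emph{single} top-down sweep over depths $d = 0,\dots,\ell$. A shortcut performed while processing depth $d$ lifts the entire subtree of $u$ by $d - d'$ levels, so nodes formerly at depths $d+1,\dots,2d-d'$ land at depths $d'+1,\dots,d$ --- levels you have already declared finished. Such a node can still carry a label repeated among its (surviving) ancestors, and your pass never revisits it, so the asserted invariant ``after the sweep, no non-$\XX_\infty$ label repeats on any root-to-node path within the top $\ell+1$ levels'' is not established by the procedure you describe; this is more than bookkeeping. The repair is cheap and is exactly what the paper does: drop the depth ordering and simply apply the shortcut \emph{exhaustively}, as long as some path of length $\ge \ell$ from the root avoids the deterministic subtrees (any such path must repeat some $Z \notin \XX_\infty$, since all its labels lie in $\XX\setminus\XX_\infty$). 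Termination then needs its own argument --- your ``finiteness of each level'' does not suffice, since shortcuts can move new nodes \emph{into} a level --- but it follows because the prefix of $T^\ast$ outside the grafted trees is finite (every infinite path of $T$ meets $\XX_\infty$, hence enters a grafted subtree; apply K\H{o}nig's lemma to what remains) and each shortcut strictly decreases the number of its nodes. With that modification your argument is a complete proof of \cref{lemTreeConstruction}.
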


\begin{proof}
Let $\XX_\infty \subseteq \XX$ be the set of indeterminates that occur infinitely often in $T$ (may be empty).
We write $V_X = \{ v \in T \mid \var(v) = X \}$ for the set of nodes labeled $X$.
For each $X \in \XX_\infty$, the set $V_X$ is infinite.
As the polynomial $P_X$ is finite, there must thus be infinitely many $v \in V_X$ with the same monomial $\mon(v)$.
For each $X \in \XX_\infty$, choose such an infinitely often occurring monomial $m_X \in P_X$.
Using \cref{lemDeterministicConstruction}, we obtain for each $X \in \XX_\infty$ a deterministic tree $T^\infty_X \in \Trees(\EE,X)$ such that for all $v \in T^\infty_X$: $\var(v) \in \XX_\infty$ and $\mon(v)$ occurs infinitely often in $T$.

Let $W$ be the set of earliest occurrences of $\XX_\infty$ in $T$ (cf.\ \cref{figTreeConstructionW}). Formally,
$W = \{ v \in T \mid \var(v) \in \XX_\infty$, there is no $v' \prefixneq v$ with $\var(v') \in \XX_\infty \}$.
Now let $S$ be the tree that results from $T$ by replacing the subtree at each $v \in W$ with the tree $T^\infty_{\var(v)}$ (cf.\ \cref{figTreeConstructionS}).
The tree $S$ is almost of the desired shape, but the trees $T^\infty_X$ may be rooted at depth $> \ell$.
To fix this, we consider the prefix up to the subtrees $T^\infty_X$ and eliminate all repetitions of indeterminates within the prefix.
As all indeterminates in the prefix occur only finitely often, we can eliminate repetitions by replacing each first occurrence of an indeterminate $Z$ by a last occurrence of $Z$ within the prefix (cf.\ \cref{figTreeConstructionPrefix}).

More formally, call a path $v_0 v_1 v_2 \dots v_k$ from the root of $S$ \emph{unresolved} if
$k \ge \ell$ and no node on the path is contained in one of the deterministic subtrees $T_X^\infty$.
Given an unresolved path, there must be an indeterminate $Z$ that occurs twice on the path.
Further, $Z \notin \XX_\infty$, as otherwise the nodes labeled $Z$ would lie within one of the deterministic subtrees by construction of $S$ and $W$.
Let $i < j$ be indices such that $v_i$ is the first and $v_j$ the last occurrence of $Z$ on the path, so $\var(v_i) = \var(v_j) = Z$.
Now let $S'$ result from $S$ by replacing the subtree $S_{v_i}$ rooted at $v_i$ with the subtree $S_{v_j}$ rooted at $v_j$, thereby removing at least one occurrence of $Z$ from the tree (\cref{figTreeConstructionPrefix}).

Apply this elimination step exhaustively, until there are no more unresolved paths.
As all indeterminates in the prefix of $S$ occur only finitely often, this process terminates.
Let $T'$ be the resulting tree (notice that $T'$ is not uniquely determined, but this does not affect our argument).
Then $T'$ has the desired shape: when no unresolved path exists, then all nodes at depth $\ell$ (if any) must be contained in one of deterministic subtrees $T_X^\infty$.

Moreover, the elimination step only removes nodes of $S$, but neither adds nodes nor modifies any node labels.
It follows that $\moncount {T'} m Y \le \moncount S m Y$ for all $m \in P_Y, Y \in \XX$.
As the trees $T_X^\infty$ only use monomials that occur infinitely often in $T$, we further have $\moncount S m Y \le \moncount T m Y$ for all $m,Y$, closing the proof.\qed
\end{proof}

\subsection{Proof of the Main Result}

We relate infinite trees of this shape to the expression $F^\ell(\,F^\ell(\one)^\infty \,)$.
The deterministic trees rooted at depth $\ell$ correspond to the inner term $F^\ell(\one)^\infty$, relying on \cref{lemDeterministicPrefix} to ensure that $\ell$ applications of $F$ suffice.
The outer applications of $F$ correspond to the prefix on which we impose no further restrictions (except that it has height at most $\ell$).
The following lemma formalizes this intuition.

\begin{lemma}\label{lemMainProofHardDirection}
Let $\bb$ be the tuple with $\bb_X = \sum_{T \in \Trees(\EE,X)} \yield(T \cut \ell \one)^\infty$ for $X \in \XX$.
For each $T \in \Trees(\EE,X)$, there is a tree $T' \in \Trees(\EE,X)$ such that $\yield(T) \le \yield(T' \cut \ell \bb)$.
\end{lemma}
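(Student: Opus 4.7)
The plan is to apply Lemma~\ref{lemTreeConstruction} to convert $T$ into a tree $T'$ of the required shape, then lower-bound $\yield(T' \cut \ell \bb)$ by replacing each $\bb_{\var(v)}$ with $\yield(T'_v)^\infty$, and finally establish the inequality by comparing coefficient exponents, in the style of Lemma~\ref{lemYieldComparison}.

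First I would invoke the main lemma to obtain $T' \in \Trees(\EE, X)$ such that each subtree $T'_v$ rooted at a depth-$\ell$ node $v$ is deterministic, uses only monomials $m \in P_Y$ with $\moncount T m Y = \infty$, and satisfies $\moncount{T'} m Y \le \moncount T m Y$ for every pair $(m,Y)$. Since each $T'_v$ lies in $\Trees(\EE, \var(v))$, Lemma~\ref{lemDeterministicPrefix} gives $\yield(T'_v)^\infty = \yield(T'_v \cut \ell \one)^\infty$, which is one of the summands defining $\bb_{\var(v)}$; hence $\bb_{\var(v)} \ge \yield(T'_v)^\infty$. Using monotonicity of multiplication, this yields
\[
  \yield(T' \cut \ell \bb) \ge \prod_{v \text{ at depth } < \ell} \yield(v) \cdot \prod_{v \text{ at depth } = \ell} \yield(T'_v)^\infty.
\]

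Next I would expand both sides as finite products over coefficients. Writing $c_{m,Y}$ for the coefficient of $m \in P_Y$ in $\EE$, we have $\yield(T) = \prod_{(m,Y)} c_{m,Y}^{\moncount T m Y}$, and using $(ab)^\infty = a^\infty b^\infty$ together with absorption in the tail, the right-hand side above rewrites as $\prod_{(m,Y)} c_{m,Y}^{e_{m,Y}}$, where, letting $T'_{<\ell}$ denote the restriction of $T'$ to depths strictly below $\ell$, we set $e_{m,Y} = \moncount{T'_{<\ell}} m Y$ if no depth-$\ell$ subtree $T'_v$ uses $(m,Y)$, and $e_{m,Y} = \infty$ otherwise. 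In the first case, $e_{m,Y} = \moncount{T'_{<\ell}} m Y \le \moncount{T'} m Y \le \moncount T m Y$ by the main lemma. In the second case, some $T'_v$ uses $(m,Y)$, and by the main lemma's choice of monomials, $\moncount T m Y = \infty$, so $e_{m,Y} \le \moncount T m Y$ holds trivially. Since higher exponents give smaller values in an absorptive semiring and multiplication is monotone, these exponent inequalities combine to give $\yield(T) \le \yield(T' \cut \ell \bb)$.

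The main obstacle, I expect, is keeping the contributions of the finite prefix and the infinite deterministic tails cleanly separated while passing from $\bb_{\var(v)}$ to the single summand $\yield(T'_v)^\infty$. This substitution discards almost all of the sum defining $\bb_{\var(v)}$ yet still dominates $\yield(T)$ \emph{only} because every monomial appearing in a tail is guaranteed to occur infinitely often in $T$ --- exactly the additional guarantee supplied by Lemma~\ref{lemTreeConstruction} that makes the exponent-comparison step go through.
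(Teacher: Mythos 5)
Your proposal is correct and follows essentially the same route as the paper's proof: apply Lemma~\ref{lemTreeConstruction} to obtain $T'$, bound each $\bb_{\var(v)}$ from below by the single summand $\yield(T'_v \cut \ell \one)^\infty = \yield(T'_v)^\infty$ via Lemma~\ref{lemDeterministicPrefix}, and conclude by the exponent comparison, using that monomials in the deterministic tails occur infinitely often in $T$. The only difference is cosmetic ordering: the paper first proves $\yield(T) \le \yield(T' \cut \ell \one)\cdot\prod_i \yield(S_i)^\infty$ and then passes to $\bb$, whereas you start from $\yield(T' \cut \ell \bb)$ and work downward.
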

\begin{proof}
Let $T \in \Trees(\EE,X)$.
Using \cref{lemTreeConstruction}, we obtain a tree $T'$ of a certain shape:
Let $S_1,\dots,S_k$ be the subtrees of $T'$ rooted at depth $\ell$.
These subtrees are deterministic and all monomials occurring in $S_1,\dots,S_k$ occur infinitely often in $T$
and moreover, $\moncount {T'} m Y \le \moncount T m Y$ for all $m \in P_Y, Y \in \XX$.
We claim that
\[
    \yield(T) \leInfo*{!} \yield(T' \cut \ell \one) \cdot \prod_{i=1}^k \yield(S_i)^\infty
\]
To see this, we expand the definition of $\yield$ and rearrange terms.
Borrowing the notation $c_{m,Y}$ for the coefficient of $m \in P_Y$ from the proof of \cref{lemYieldComparison}, we obtain
\begin{align*}
    \prod_{\substack{Y \in \XX, \\ m \in P_Y}} (c_{m,Y})^{\moncount T m Y} \leInfo*{!}
    \prod_{\substack{Y \in \XX, \\ m \in P_Y}} (c_{m,Y})^{\moncount {T' \scut \ell \one} m Y + \sum_{i=1}^k \infty \cdot \moncount {S_i} m Y},
\end{align*}
To prove the inequality, it suffices to show that $\moncount T m Y \ge \moncount {T' \cut \ell \one} m Y + \sum_{i=1}^k \infty \cdot \moncount {S_i} m Y$ for all $m,Y$.
This holds by construction of $T'$:
If $\moncount {S_i} m Y > 0$ for some $i$, then $m$ occurs infinitely often in $T$ and hence $\moncount T m Y = \infty$.
Otherwise, the right-hand side is equal to $\moncount {T' \cut \ell \one} m Y \le \moncount {T'} m Y \le \moncount T m Y$.
This proves our claim.

Now let $v_1,\dots,v_k \in T'$ be the root nodes of the deterministic subtrees $S_1,\dots,S_k$.
By \cref{lemDeterministicPrefix}, 
$\yield(S_i)^\infty = \yield(S_i \cut \ell \one)^\infty \le \bb_{\var(v_i)}$, and thus
\[
    \yield(T) \le
    \yield(T' \cut \ell \one) \cdot \prod_{i=1}^k \yield(S_i)^\infty \le
    \yield(T' \cut \ell \one) \cdot \prod_{i=1}^k \bb_{\var(v_i)} =
    \yield(T' \cut \ell \bb). \tag*{\qed}
\]
\end{proof}

We are now ready to prove our main result.
The statement on the least solution follows rather directly from our earlier considerations.
For greatest fixed points, the previous lemma already proves the difficult direction.

\begin{proof}[of \cref{thmClosedSolution}]
It suffices to consider the case $K = \Sinf[\AA]$ (so that \cref{lemSinfExtra,lemTreeIteration} apply), as the general statement follows with \cref{lemHomoPreservesSolutions}.
We first consider the least solution.
It is clear by monotonicity of $F$ that $F^\ell(\zero)_X \le \lfp(F)_X$.
By \cref{thmFixpointsAsTrees,lemTreeIteration}, it thus suffices to prove
\[
    \lfp(F)_X = \;
    \CSum_{\substack{T \in \Trees(\EE,X)\\T\text{ finite}}} \, \yield(T) \leNote* \,
    \CSum_{T \in \Trees(\EE,X)} \, \yield(T \cut \ell \zero) =
    F^\ell(\zero)_X.
\]
To this end, let $T \in \Trees(\EE,X)$ be finite and obtain $T'$ by \cref{lemTreeConstruction}.
As $T$ is finite, no monomials can occur infinitely often.
Hence $T'$ has no subtrees rooted at depth $\ell$ and is thus of height $< \ell$.
But then, $\yield(T) \le \yield(T') = \yield(T' \cut \ell \zero) \le F^\ell(\zero)_X$.

\medskip
For the greatest solution, we know that $\gfp(F)_X = \sum_{T \in \Trees(\EE,X)} \yield(T)$.
On the other hand, \cref{lemTreeIteration} (tree iteration) entails
\[
  (F^\ell(\one)_X)^\infty =
  \Big(\sum_{T \in \Trees(\EE,X)} \yield(T \cut \ell \one)\Big)^\infty \eqLabel*{\ref{lemSinfExtra}}
  \RSum_{T \in \Trees(\EE,X)} \, \yield(T \cut \ell \one)^\infty.
\]
Now let $\bb = F^\ell(\one)^\infty$. Applying \cref{lemTreeIteration} again gives
\[
  F^\ell(F^\ell(\one)^\infty)_X =
  F^\ell(\bb)_X =
  \RSum_{T \in \Trees(\EE,X)} \, \yield(T \cut \ell \bb).
\]

The direction $\gfp(F) \le F^\ell(F^\ell(\one)^\infty)$ follows immediately from \cref{lemMainProofHardDirection}.
For the other direction, let $T \in \Trees(\EE,X)$.
Let $v_1,\dots,v_k$ be the nodes at depth $\ell$ in $T$.
By distributivity (\cref{lemSinfExtra}), we get
\begin{align*}
  \yield(T \cut \ell \bb) &= \;
  \yield(T \cut \ell \one) \cdot \CProd_{1 \le i \le k} \bb_{\var(v_i)} \\ \;&\eqInfo{dist.} \;\;
  \yield(T \cut \ell \one) \cdot \sum \Big\{ \RProd_{1 \le i \le k} \yield(S_i \cut \ell \one)^\infty
    \bigmid \text{$S_i \in \Trees(\EE,\var(v_i))$ for all $i$} \Big\} \\ &\leLabel{\ref{lemDeterministicExists}} \;\;
   \yield(T \cut \ell \one) \cdot \sum \Big\{ \RProd_{1 \le i \le k} \yield(S_i')^\infty
    \bigmid \text{$S_i' \in \Trees(\EE,\var(v_i))$ for all $i$} \Big\} \\ &\leInfo{abs.} \;\;
   \yield(T \cut \ell \one) \cdot \sum \Big\{ \RProd_{1 \le i \le k} \yield(S_i')
    \bigmid \text{$S_i' \in \Trees(\EE,\var(v_i))$ for all $i$} \Big\} \\ \;&\eqInfo{dist.}\;\;
  \sum \Big\{ \yield(T) \bigmid \text{$T \in \Trees(\EE,X)$} \Big\} =
  \gfp(F)_X. \tag*{\qed}
\end{align*}
\end{proof}

Using this result, we can compute least and, most importantly, greatest solutions of polynomial equation systems in a polynomial number of semiring operations (including the infinitary power).
Notice that, although the proof relied on $\Sinf[\AA]$, the computation happens only in the semiring we consider.
For instance, recall the example in the tropical semiring from the introduction.

\begin{Example}\label{exTropicalClosed}
We recall $X_a = 1 \Rplus X_a$, $X_b = \min(1 \Rplus X_a, 20 \Rplus X_c)$ and $X_c = 0 \Rplus X_c$ from the introduction.
Notice that the one-element of the tropical semiring is the real value $0$.
Using \cref{thmClosedSolution}, we collapse the infinite fixed-point iteration to
\[
    \vvv 0 0 0 \Fmaps
    \vvv 1 1 0 \Fmaps
    \vvv 2 2 0 \Fmaps
    \vvv 3 3 0 \xmapsto{^\infty}
    \vvv \infty \infty 0 \Fmaps
    \vvv \infty {20} 0 \Fmapsback
\]
and obtain the expected solution.
In this example, one iteration of $F$ would actually suffice (instead of $\ell=3$ iterations), since cycles have length one (see the graph in the introduction).
In general, all $\ell$ steps are required (see below).
\end{Example}

Coming back to our original motivation from semiring provenance, we can thus compute semiring provenance information for Büchi games \cite{BuchiTBA} or fixed-point logics such as the modal $\mu$-calculus $L_\mu$ or least fixed-point logic LFP \cite{DannertGraNaaTan21}.
If we only need to compute a polynomial number of fixed points, such as for alternation-free $L_\mu$, this information might be computable in polynomial time -- depending on the cost of the semiring operations.
In the most general semiring $\Sinf[\XX]$, we cannot assume that semiring operations can be performed efficiently, as each multiplication can in the worst case double the number of monomials (this is not avoidable, as one can easily construct equation systems whose solution is an absorptive polynomial describing all exponentially many paths in a graph).
Even if the solution consists of few monomials, we may get an intermediary blowup when computing $F^n(\one)$, as seen in the following example.
In such cases, the symbolic approach presented in Section \ref{sec:Symbolic} may be preferable.

\begin{Example}\label{exLogicClosed}
Consider the polynomial equation system over $\Sinf[a,b,c]$ shown on the left.
(Assuming familiarity with $L_\mu$ and semiring provenance, this results from evaluating $\nu X.\ \Box X \land \Diamond P$ in a semiring interpretation that uses the indeterminates $a,b,c$ to track whether the atom $P$ holds at vertices $v_1,v_2,v_3$.)\\
\begin{minipage}{.5\linewidth}
\begin{align*}
X_1 &= b \cdot X_2 \\
X_2 &= (b+c) \cdot X_2 X_3 \\
X_3 &= a \cdot X_1
\end{align*}
\end{minipage}
\hfill
\begin{minipage}{.35\linewidth}
\begin{tikzpicture}[node distance=1.3cm,every node/.style={font=\scriptsize}]
\node [baseline,vertex] (a) {$v_1$};
\node [baseline,vertex,right of=a] (b) {$v_2$};
\node [baseline,vertex,right of=b] (c) {$v_3$};
\path [arr]
    (c) edge [bend left] (a)
    (b) edge [loop above]()
    (a) edge (b)
    (b) edge (c);
\end{tikzpicture}
\end{minipage}

\vspace{\baselineskip}\noindent
We apply \cref{thmClosedSolution} to compute the greatest solution:
\[
    \one \Fmaps
    \vvv b {b+c} a \Fmaps
    \vvv {b^2+bc} {ab^2+abc+ac^2} {ab} \Fmaps
    \vvv {ab^3+ab^2c+abc^2} {a^{2}b^{4} + a^{2}b^{3}c + a^{2}b^{2}c^{2} + a^{2}bc^{3}} {ab^2+abc} \xmapsto{\!^\infty\!}
    \vvv {a^\infty b^\infty} {a^\infty b^\infty} {a^\infty b^\infty}
    \Fmapsback
\]
Here we need all $\ell=3$ inner applications of $F$ until $a$ appears in the first entry.
We also see that the intermediate polynomials can become much longer than the solution.
(Interpreting $a^\infty b^\infty$ as provenance information, we see that the formula holds at the given graph precisely if $P$ holds at $v_1$ and $v_2$, and it does not matter if it also holds at $v_3$.)
\end{Example}

\begin{remark}
One can generalize our main result to polynomial equation systems that allow $\infty$ as exponent (similar to absorptive polynomials).
We have chosen to spare the reader from the additional complications that arise from the corresponding derivation trees with infinite degree, in particular infinite products on the semiring level, as these are not relevant for the main ideas.
Alternatively, the symbolic approach of \cref{sec:Symbolic} can be used in this setting.
\end{remark}

\section{Symbolic Computation}
\label{sec:Symbolic}

This section complements our main results by a second approach focused specifically on polynomial equation systems over the semiring $\Sinf[\AA]$.
To this end, we adapt results of Hopkins and Kozen on Kleene algebras \cite{Kleene} since we can view absorptive semirings as a special case of Kleene algebras (by setting $a^* = 1$ for all elements $a$).
These results are based on symbolic derivatives of polynomials (which is also the basis for Newton's method \cite{Newton}) to express least solutions.
Here, we generalize this approach to include the infinitary power operation, eventually leading to a similar statement also for greatest solutions.

\begin{remark}
Extensions of Kleene algebra by an operation similar to infinitary power have already been studied in other contexts.
For instance, Cohen \cite{Cohen00} introduces $\omega$-algebras by axiomatizing a unary operation $a^\omega$, which can be interpreted as infinite repetition in $\omega$-words or infinite iteration in the relational model of program analysis.
However, Cohen's axioms seem too weak for the proofs below; we discuss an alternative axiomatization of $a^\infty$ in \cref{Appendix:Infpow}.
\end{remark}

\subsection{Setting and Derivatives}

It is convenient to slightly reformulate our problem setting:
Instead of a system $\EE \co (X = P_X)_{X \in \XX}$ with polynomials $P_X$ over $\XX$ and coefficients $\Sinf[\AA]$, we now regard $P_X$ as an absorptive polynomial $P_X \in \Sinf[\AA \cup \XX]$ (so we no longer distinguish between indeterminates of the polynomial system and indeterminates occurring in coefficients).
This allows a more uniform treatment when we eliminate indeterminates one by one, and it is easy to see that it does not affect the solutions.

To simplify notation, we write $\Sinf[\AA,X]$ for $\Sinf[\AA \cup \{X\}]$.
Recall that we write $P(X) \in \Sinf[\AA, X]$ to make explicit that $X$ may occur in $P$;
then $P(a) \in \Sinf[\AA]$ denotes the polynomial that results from $P(X)$ by replacing $X$ with $a \in \Sinf[\AA]$.
In the following, we use $\XX$ to denote an arbitrary (finite) indeterminate set.

\begin{Definition}
Let $X \in \XX$ and $P(X) \in \Sinf[\XX]$.
We denote the \emph{partial derivative} of $P$ with respect to $X$ as $P'$ (leaving $X$ implicit) and define it inductively by
\begin{align*}
    &X' = 1, \quad
    Y' = 0 \text{ for $X \neq Y \in \XX$}, \\
    &(PQ)' = P' \cdot Q + P \cdot Q', \quad
    (P+Q)' = P' + Q', \quad
    \big({P^\infty}\big)' = P^\infty \cdot P',
\end{align*}
where $P(X),Q(X) \in \Sinf[\XX]$.
\end{Definition}

\subsection{Solutions in One Dimension}

We first show how least and greatest solutions of a single equation $X = P(X)$ can be expressed using derivatives, following the proof in \cite{Kleene}.
Notice that $\Sinf[\XX]$, and in fact any absorptive semiring, can be regarded as a Kleene algebra in the sense of \cite{Kleene} by setting $a^* = 1$ for all $a \in K$.
Hence, most of the lemmas require no modifications, except for our addition of the infinitary power.

\begin{lemma}\label{lemChainTaylor}
Let $P(X),Q(X) \in \Sinf[\XX]$ and further $a,b,c \in \Sinf[\XX]$. Then,
\begin{enumerate}
\item $P(Q)' = P'(Q) \cdot Q'$, \hfill (chain rule, cf.\ \cite{Kleene})
\item $P(a+b) = P(a) + P'(a+b) \cdot b$ \hfill (Taylor's theorem, cf.\ \cite{Kleene}),
\item $ac \le bc \;\implies\; P(a) c \le P(b) c $ \hfill (cf.\ \cite{Kleene}).
\end{enumerate}
\end{lemma}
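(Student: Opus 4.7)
I would prove all three statements by simultaneous structural induction on the polynomial $P(X) \in \Sinf[\XX]$. The base cases ($P = X$, $P = Y$ for $Y \neq X$, and constants) are immediate from the definitions. The inductive steps for $P_1 + P_2$ and $P_1 \cdot P_2$ follow essentially the Kleene-algebra proofs referenced in \cite{Kleene}; the new case that requires genuine attention is $P = R^\infty$, where I would rely on the properties of the infinitary power recalled after \cref{defInfpow}: idempotence, commutativity, $(x+y)^\infty = x^\infty + y^\infty$, monotonicity of $z \mapsto z^\infty$ in the natural order (which follows from $x \le y \Leftrightarrow x+y=y$ together with $(x+y)^\infty = x^\infty + y^\infty$), absorption ($ab \le a$), and the identity $z^\infty = z^\infty \cdot z$ that one gets from continuity of multiplication with descending chains via $\Inf_n z^{n+1} = z \cdot \Inf_n z^n$.

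\textbf{Chain rule and monotonicity.} For (1), the only new step $P = R^\infty$ is a two-line calculation: $(R(Q)^\infty)' = R(Q)^\infty \cdot R(Q)' = R(Q)^\infty \cdot R'(Q) \cdot Q'$ by the definition of the derivative combined with the IH applied to $R$, and this equals $(R^\infty)'(Q) \cdot Q'$ by definition. For (3), the $P = P_1 P_2$ case uses the IH on $P_2$ to obtain $P_1(a) P_2(a) c \le P_1(a) P_2(b) c$ and then the IH on $P_1$ with $c$ replaced by $P_2(b) c$. The $P = R^\infty$ case requires a nested induction on $n$ showing $R(a)^n c \le R(b)^n c$: in the inductive step one first applies the outer IH on $R$ with multiplicand $R(a)^n c$ in place of $c$ (its premise $a R(a)^n c \le b R(a)^n c$ follows from $ac \le bc$ by monotonicity of multiplication) to pass from $R(a)$ to $R(b)$, then applies the inner IH on $n$ and monotonicity of multiplication to replace $R(a)^n$ by $R(b)^n$. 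Passing to the infimum is legitimate because multiplication is compatible with infima of chains, yielding $R(a)^\infty c = \Inf_n R(a)^n c \le \Inf_n R(b)^n c = R(b)^\infty c$.

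\textbf{Taylor's theorem.} For the product case $P = P_1 P_2$, I would expand both sides using the IHs for $P_1$ and $P_2$: on the right-hand side, re-expanding $P_i(a+b)$ via the IH produces the cross term $P_1'(a+b) P_2'(a+b) b^2$ twice, and the duplicate collapses to a single copy by idempotence $x + x = x$, which is exactly what matches the left-hand side. The new and most delicate case is $P = R^\infty$, where I would establish both directions of $R(a+b)^\infty = R(a)^\infty + R(a+b)^\infty R'(a+b) b$ separately. The direction $\ge$ is essentially free: both summands on the right are bounded above by $R(a+b)^\infty$, the first by monotonicity of $^\infty$ (since $R(a) \le R(a+b)$) and the second by absorption. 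For the reverse direction $\le$, I would apply the IH for Taylor to $R$ to obtain $R(a+b) = R(a) + R'(a+b) b$, use $(x+y)^\infty = x^\infty + y^\infty$ to split $R(a+b)^\infty = R(a)^\infty + (R'(a+b) b)^\infty$, and finally bound the second summand via $(R'(a+b) b)^\infty = (R'(a+b) b)^\infty \cdot R'(a+b) b \le R(a+b)^\infty \cdot R'(a+b) b$, using the identity $z^\infty = z^\infty \cdot z$ together with monotonicity of $^\infty$ and multiplication. I expect this last case to be the main obstacle: it is precisely where the characteristic properties of the infinitary power must be combined in a non-obvious way to recover the classical Taylor identity.
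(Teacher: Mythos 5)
Your proposal is correct and follows essentially the same route as the paper: structural induction on $P$ with the Kleene-algebra cases deferred to the cited work, and the only genuinely new case $P = H^\infty$ handled by the same computations (the chain-rule calculation is identical, and part (3) is the same continuity argument $\Inf_n H(a)^n c \le \Inf_n H(b)^n c$, for which you merely spell out the implicit induction on $n$). The one cosmetic difference is in Taylor's theorem, where the paper invokes the single identity $(x+y)^\infty = x^\infty + (x+y)^\infty \cdot y$ while you establish the two inequalities separately --- but your argument is precisely an inlined proof of that identity, so the substance is the same.
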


\begin{proof}
By structural induction on $P$.
The proof of \cite{Kleene} also applies to $\Sinf[\XX]$, so we only have to consider the case $P=H^\infty$.
For the chain rule, we have
\begin{align*}
    \big(H^\infty(Q)\big)' &=
    \big(H(Q)^\infty\big)' =
    H(Q)^\infty \cdot H(Q)' \\ &= 
    H(Q)^\infty \cdot H'(Q) \cdot H' =
    \big(H(Q)^\infty\big)' \cdot Q' =
    \big(H^\infty(Q)\big)' \cdot Q',
\end{align*}
and for Taylor's theorem,
\begin{align*}
    H^\infty(a+b) &=
    H(a+b)^\infty \\ &=
    \big( H(a) + H'(a+b) \cdot b \big)^\infty \\ &\eqStar
    H(a)^\infty + \big(H(a) + H'(a+b) \cdot b \big)^\infty \cdot H'(a+b) \cdot b \\ &=
    H(a)^\infty + H(a+b)^\infty \cdot H'(a+b) \cdot b \\ &=
    H^\infty(a) + H^\infty(a+b) \cdot H'(a+b) \cdot b \\ &=
    H^\infty(a) + (H^\infty)'(a+b) \cdot b.
\end{align*}
In $(*)$, we use the fact that $(a+b)^\infty = a^\infty + (a+b)^\infty \cdot b$ for elements $a,b$ of any absorptive, fully-continuous semiring.
For (3), we apply full continuity:
\begin{align*}
  H^\infty(a)c =
  \Big( \Inf_{n \in \Nat} H^n(a) \Big) \cdot c =
  \Inf_{n \in \Nat} \big( H^n(a) c \big) \le
  \Inf_{n \in \Nat} \big( H^n(b) c \big) = H^\infty(b)c. \tag*{\qed}
\end{align*}
\end{proof}

For the result about the greatest solution, we need an additional observation about the infinitary power operation: 

\begin{lemma}\label{lemInfpowIndProof}
Let $K$ be an absorptive, fully-continuous semiring.
Then all elements $a,b,c \in K$ satisfy:
\[
    c + ab \ge b \quad\implies\quad
    c + a^\infty b \ge b.
\]
\end{lemma}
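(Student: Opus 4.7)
The plan is to iterate the hypothesis finitely many times and then pass to the limit, using full continuity to identify the infimum with $c + a^\infty b$.

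First I would rewrite the assumption in natural-order form as $b \le c + ab$, and likewise reformulate the goal as $b \le c + a^\infty b$. Then I would prove by induction on $n \ge 1$ the intermediate statement
\[
    b \le c + a^n b.
\]
The base case $n=1$ is exactly the hypothesis. For the induction step, suppose $b \le c + a^n b$. By monotonicity of multiplication, $ab \le ac + a^{n+1}b$, and adding $c$ on the left gives $c + ab \le c + ac + a^{n+1}b$. Absorption now collapses the first two summands: $c + ac = (1+a)c = c$. Combining with the hypothesis $b \le c + ab$ yields $b \le c + a^{n+1}b$, completing the induction.

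Next I would take the infimum over $n$. Since multiplication is decreasing in an absorptive semiring, $(a^n)_{n\in\Nat}$ is a descending chain, hence so are $(a^n b)_{n\in\Nat}$ and $(c + a^n b)_{n\in\Nat}$. Because $b$ is a lower bound of the latter chain, $b \le \Inf_n (c + a^n b)$. Two applications of full continuity (first of $+$ with infima of chains, then of $\cdot$) give
\[
    \Inf_n (c + a^n b) \;=\; c + \Inf_n (a^n b) \;=\; c + \Big(\Inf_n a^n\Big)\, b \;=\; c + a^\infty b,
\]
so $b \le c + a^\infty b$, as required.

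The argument is essentially routine; the only subtlety is to justify the two continuity steps, for which it is important that the relevant family is genuinely a descending chain, so that the full-continuity hypothesis (which only promises compatibility with infima of \emph{chains}) actually applies. This is guaranteed here by the absorptivity of $K$, which forces $(a^n b)_n$ to be descending.
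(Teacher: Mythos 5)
Your proof is correct and follows essentially the same route as the paper: an induction establishing $b \le c + a^n b$ (your step is the paper's chain $c + a^{n+1}b = c + ac + a^{n+1}b = c + a(c+a^n b) \ge c + ab \ge b$ read in the other direction), followed by the same passage to the infimum via full continuity on the descending chain $(c + a^n b)_n$.
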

\begin{proof}
We first show by induction that $c + a^nb \ge b$ for all $n \in \Nat$.
For $n=0$, trivially $c + a^0b = c+b \ge b$, and for $n=1$ this holds by assumption.
The induction step follows from absorption:
\[
  c + a^{n+1}b \;\eqInfo{abs}\; c + ac + a^{n+1}b = c + a(c+a^nb) \ge c + ab \ge b.
\]
The claim then follows by full continuity:
\[
  c + a^\infty b = c + \Big(\Inf_{n \in \Nat} a^n\Big)b = \Inf_{n \in \Nat} (c+a^nb) \ge \Inf_{n \in \Nat} b = b. \tag*{\qed}
\]
\end{proof}

Using the observations in \cref{lemChainTaylor}, Hopkins and Kozen prove that the least solution is $P'(P(0))^* \cdot P(0)$, which in our setting is equal to $P(0)$ and can in fact be derived directly from absorption, without derivatives.
However, using derivatives allows us to also express greatest solutions:

\begin{theorem}\label{thmIterativeOne}
Let $P(X) \in \Sinf[\AA,X]$.
Then $X = P(X)$ has the least solution $P(0)$ and
the greatest solution $P(0) + P'(1)^\infty$ in $\Sinf[\AA]$.
\end{theorem}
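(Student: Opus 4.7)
The least solution: Taylor's theorem (\cref{lemChainTaylor}) with $a=0$, $b=P(0)$ gives $P(P(0)) = P(0) + P'(P(0)) \cdot P(0) = P(0)$, since multiplication is decreasing in absorptive semirings and so $P'(P(0)) \cdot P(0) \le P(0)$ is absorbed by the sum. Thus $P(0)$ is a fixed point; minimality follows from monotonicity of polynomial evaluation, as any fixed point $z$ satisfies $P(0) \le P(z) = z$.

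For the greatest solution, set $y := P(0) + P'(1)^\infty$; I would show $\gfp \le y$ and $y \le \gfp$ separately. For $\gfp \le y$: by full continuity of $\Sinf[\AA]$ and Kleene's theorem, $\gfp = \Inf_n P^n(1)$. I would prove by induction on $n$ that $P^n(1) \le P(0) + P'(1)^n$: the base case $n=1$ is Taylor with $a=0, b=1$; the step applies monotonicity and Taylor at $a=P(0), b=P'(1)^n$, then collapses to $P(0) + P'(1)^{n+1}$ by means of $P(P(0)) = P(0)$, the monotonicity bound $P'(\cdot) \le P'(1)$, and absorption. Taking infima and pulling $P(0)$ out of the descending chain $(P'(1)^n)_n$ by full continuity of $+$ yields $\gfp \le P(0) + P'(1)^\infty = y$.

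For $y \le \gfp$ it suffices to check $y \le P(y)$, so that $y$ is a post-fixed point. Taylor gives $P(y) = P(0) + P'(y) \cdot y$, reducing the task to $P'(y) \cdot y \ge P'(1)^\infty$. Here I would exploit the explicit structure of $\Sinf[\AA,X]$: decompose $P(X) = \sum_i m_i X^{n_i}$ with $m_i \in \Sinf[\AA]$ monomials and $n_i \in \Ninf$. The distributivity $(\sum a_j)^\infty = \sum a_j^\infty$ (\cref{lemSinfExtra}) yields $P'(1)^\infty = \sum_{n_i \ge 1} m_i^\infty$, and since each $m_i^\infty \le P'(1)^\infty \le y$, monotonicity of the $n_i$-th power gives $y^{n_i} \ge m_i^\infty$ and hence $m_i y^{n_i} \ge m_i \cdot m_i^\infty = m_i^\infty$; summing over $i$ with $n_i \ge 1$ yields $P'(y) y \ge P'(1)^\infty$.

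The direction $y \le P(y)$ is the main obstacle. The other steps follow a routine Taylor-plus-absorption pattern reminiscent of the Kleene-algebra argument in \cite{Kleene} (where the star trivially collapses to $1$ in the absorptive case), but only this step genuinely uses the infinitary-power operation, and it relies crucially on the distributivity $(\sum a_j)^\infty = \sum a_j^\infty$ to match each summand $m_i^\infty$ of $P'(1)^\infty$ against the corresponding term $m_i y^{n_i}$ in $P(y)$, which is also why the statement is formulated in $\Sinf[\AA]$ rather than for arbitrary absorptive Kleene algebras.
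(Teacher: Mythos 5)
Your proof is correct, but both halves of the greatest-fixed-point argument take a genuinely different route from the paper's (the least-solution part is essentially the standard absorption observation in both). For the upper bound, the paper shows that $y = P(0) + P'(1)^\infty$ dominates \emph{every} post-fixed point $a \le P(a)$: from $a \le 1$ and Taylor's theorem one gets $P(0) + P'(1)\cdot a \ge a$, and the dedicated induction rule $c + ab \ge b \Rightarrow c + a^\infty b \ge b$ (\cref{lemInfpowIndProof}) then yields $y \ge a$. You instead bound the Kleene iteration directly via $P^n(1) \le P(0) + P'(1)^n$ and pass to the infimum; this is sound (it inlines essentially the same induction that proves \cref{lemInfpowIndProof}) and bounds $\gfp$ rather than all post-fixed points, which suffices. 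For the post-fixed-point inequality $y \le P(y)$ --- which you single out as the main obstacle --- the paper stays purely algebraic: Taylor plus the monotonicity rule $ac \le bc \Rightarrow P(a)c \le P(b)c$ (\cref{lemChainTaylor}(3)) give $P(y) \ge P(0) + P'(1)\cdot P'(1)^\infty = y$ in two lines, an argument valid in any absorptive, fully-continuous semiring (indeed derivable from the $\infty$-axioms of \cref{Appendix:Infpow}). Your termwise matching $m_i y^{n_i} \ge m_i m_i^\infty = m_i^\infty$ via the monomial decomposition and $(\sum_j a_j)^\infty = \sum_j a_j^\infty$ is also correct, but it is tied to the concrete structure of $\Sinf[\AA,X]$, so you trade away the generality and brevity of the algebraic computation. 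In short, your diagnosis of where the difficulty lies is inverted relative to the paper: there, $y \le P(y)$ is the easy direction, and the genuinely new ingredient for greatest solutions is the $\infty$-induction rule used for maximality.
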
 
\begin{proof}
See \cite{Kleene} for a proof of the least solution.
Alternatively, let $a_0 \in \Sinf[\AA]$ be the absolute coefficient of $P$ (i.e., the sum of all monomials not containing $X$) so that $P(0) = a_0$.
Since we have $m(a_0) \le a_0$ for every monomial $m$ containing $X$, it follows that $P(P(0)) = P(a_0) = a_0$ by absorption, so $P(0)$ is the least solution.

For the greatest solution, we first prove that $P(0) + P'(1)^\infty$ is a solution to the inequality $X \le P(X)$:
\begin{align*}
    P\big(P(0) + P'(1)^\infty\big) \eqLabel*{\ref{lemChainTaylor}.2}{}
    &P(0) + P'\big(P(0) + P'(1)^\infty\big) \cdot \big(P(0) + P'(1)^\infty\big) \\ \ge{}\;
    &P(0) + P'(P'(1)^\infty) \cdot P'(1)^\infty \\ \geLabel*{\ref{lemChainTaylor}.3}{}
    &P(0) + P'(1) \cdot P'(1)^\infty = P(0) + P'(1)^\infty.
\end{align*}
We next show that this is the greatest solution to $X \le P(X)$.
To this end, let $a \in \Sinf[\AA]$ be a solution, i.e., $a \le P(a)$.
As $1 \ge a$, we get
\[
    P(0) + P'(1) \cdot a \ge P(0) + P'(a) \cdot a \eqLabel*{\ref{lemChainTaylor}.2} P(a) \ge a.
\]
Using \cref{lemInfpowIndProof}, we can conclude
\[
    P(0) + P'(1)^\infty \geInfo*{abs} P(0) + P'(1)^\infty \cdot a \ge a.
\]
Finally, note that the greatest solution to $X \le P(X)$ is also the greatest solution to $X = P(X)$ by the well-known Knaster-Tarski theorem. \qed
\end{proof}

\subsection{Solutions of Larger Systems}

To solve systems with more than one equation, we eliminate indeterminates one by one, in each step applying \cref{thmIterativeOne}.
The main theoretical underpinning is the uniformity of the solutions in one indeterminate.
In \cite{Kleene}, this uniformity follows from the axiomatic proofs and the fact that instantiations preserve the axioms of Kleene algebra.
Here, we instead appeal to the universal property of $\Sinf[\XX]$ and the fact that fully-continuous semiring homomorphisms preserve least and greatest fixed points.

For the sake of simplicity, we only consider systems of two equations; we can inductively apply the same approach to larger systems.
Moreover, we only state the result for greatest solutions, as least solutions are symmetric.
We use the notation $\gfp(X \mapsto P(X,Y))$ to refer to the greatest solution of the equation $X = P(X,Y)$ using \cref{thmIterativeOne} (where we treat the additional indeterminate $Y$ as a coefficient, i.e., we apply the theorem with $Y \in \AA$).
With this notation, we can formulate the solution of a system in two indeterminates as follows:

\begin{theorem}\label{thmIterativeTwo}
Consider the equation system $\EE \co X=P(X,Y), Y=Q(X,Y)$ with $P,Q \in \Sinf[\AA,X,Y]$.
Let further
\begin{align*}
    H(Y) &= \gfp(X \mapsto P(X,Y)) && \in \Sinf[\AA,Y], \\
    b &= \gfp(Y \mapsto Q(H(Y),Y)) && \in \Sinf[\AA].
\end{align*}
Then $(H(b), b)$ is the greatest solution of $\EE$.
\end{theorem}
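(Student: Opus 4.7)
The plan is to first verify that $(H(b), b)$ is a fixed point of $F_\EE$ and then show it dominates every other fixed point componentwise (i.e.\ in the product of natural orders).

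For the first step, the equality $Q(H(b), b) = b$ is immediate from the definition of $b$ as the greatest fixed point of $G(Y) = Q(H(Y), Y)$. The harder direction is $P(H(b), b) = H(b)$, which I would obtain by exploiting the uniformity of the one-dimensional fixed-point construction. Concretely, by Theorem \ref{thmSinfUniversal} the substitution $Y \mapsto b$ extends to a fully-continuous homomorphism $\sigma_b \co \Sinf[\AA, Y] \to \Sinf[\AA]$, and Lemma \ref{lemHomoPreservesSolutions} then guarantees that $\sigma_b$ sends the greatest fixed point of the single-equation system $X = P(X, Y)$ over $\Sinf[\AA, Y]$ to the greatest fixed point of $X = P(X, b)$ over $\Sinf[\AA]$. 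Hence $H(b) = \sigma_b(H(Y))$ satisfies $P(H(b), b) = H(b)$.

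For the second step, let $(a, c)$ be any fixed point of $F_\EE$. From $a = P(a, c)$ we see that $a$ is a fixed point of $X \mapsto P(X, c)$, so the same uniformity argument (now with $\sigma_c$) yields $a \le H(c)$. To transfer this bound from $H(c)$ to $H(b)$, I would observe that $H$ is monotone in $Y$: both $+$ and $\cdot$ are monotone in the natural order of an idempotent semiring, so polynomial evaluation is monotone; and $a^\infty = \Inf_n a^n$ is the infimum of a descending chain of monotone expressions, hence itself monotone. The closed form from Theorem \ref{thmIterativeOne} then delivers monotonicity of $H$. Combining with monotonicity of $Q$, we obtain $c = Q(a, c) \le Q(H(c), c) = G(c)$, and by Knaster--Tarski (or the defining property of the gfp) this post-fixed-point inequality gives $c \le b$. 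A final application of monotonicity of $H$ yields $a \le H(c) \le H(b)$, completing the proof.

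The main obstacle is the uniformity argument invoked twice above, where one must identify the symbolic object $H(b)$ (originally computed inside $\Sinf[\AA, Y]$) with the greatest fixed point of $X \mapsto P(X, b)$ over $\Sinf[\AA]$. In \cite{Kleene} this is handled via axiomatic derivations that transfer under any interpretation; in our setting, Lemma \ref{lemHomoPreservesSolutions} combined with the universal property (Theorem \ref{thmSinfUniversal}) plays that role in a cleaner, semantic form. A secondary subtlety is that monotonicity of polynomial expressions is not entirely obvious in absorptive semirings (where multiplication is decreasing in the absolute sense), but it does hold in the natural order because $a \le b$ is defined by $a + b = b$, which distributivity propagates through products.
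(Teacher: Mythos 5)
Your proposal is correct and follows essentially the same route as the paper: verify the fixed-point equations via the substitution homomorphism, then bound an arbitrary solution $(a,c)$ by using the universal property and \cref{lemHomoPreservesSolutions} to identify $H(c)$ with $\gfp(X \mapsto P(X,c))$, deduce $c \le b$ from the post-fixed-point property of $G$, and conclude by monotonicity of $H$. The only differences are cosmetic -- you obtain $P(H(b),b)=H(b)$ from gfp-preservation rather than by applying $\sigma_b$ to the identity $P(H(Y),Y)=H(Y)$, and you spell out the monotonicity of polynomial evaluation (including $\infpow$ is not needed here, but the remark is harmless and correct) which the paper leaves implicit.
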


\begin{proof}
It is easy to see that $(H(b),b)$ is a solution:
By definition of $b$, we have $Q(H(b),b) = b$.
By definition of $H$, we further have $P(H(Y),Y) = H(Y)$, and by applying the instantiation $Y \mapsto b$ we get $P(H(b),b) = H(b)$.

To prove that $(H(b),b)$ is the greatest solution, we make use of the universal property.
Let $(c,d)$ be any solution with $c,d \in \Sinf[\AA]$.
We claim that $H(d) = \gfp(X \mapsto P(X,d))$.
To see this, consider the definition of $H(Y)$ and apply the instantiation $Y \mapsto d$.
By the universal property, this instantiation is fully continuous and thus preserves greatest fixed points.

Now, since $(c,d)$ is a solution, we have $P(c,d) = c$, and since $H(d)$ is the greatest solution to $X = P(X,d)$, we must have $H(d) \ge c$.
Then also $Q(H(d),d) \ge Q(c,d) = d$.
Since $b$ is the greatest solution to $Q(H(Y),Y) = Y$ and hence also to $Q(H(Y),Y) \ge Y$, we have $b \ge d$.
Finally, $H(b) \ge H(d) \ge c$, so $(H(b),b)$ is indeed the greatest solution. \qed
\end{proof}

We can apply this second technique to semirings other than $\Sinf[\AA]$ by first performing a symbolic abstraction.
That is, we replace all coefficients by pairwise different indeterminates from $\AA$, then compute the solution and apply the reverse instantiation (which preserves solutions).

\begin{Example}
Recall our example in the tropical semiring $\Trop$.
By replacing coefficients with indeterminates $a,b,c$, we obtain the equation system on the right.
\begin{align*}
X_a &= 1 \Rplus X_a & X_a &= a \cdot X_a \\
X_b &= \min(1 \Rplus X_a, 20 \Rplus X_c) \mathrlap{\qquad\rightsquigarrow} & X_b &= a \cdot X_a + b \cdot X_c \\
X_c &= 0 \Rplus X_c & X_c &= c \cdot X_c
\end{align*}
We solve the system over $\Sinf[X_a,X_b,X_c,a,b,c]$ by the symbolic approach:
\begin{itemize}
\item
$\gfp(X_a \mapsto a \cdot X_a) = 0 + a^\infty = a^\infty$ \hfill (by \cref{thmIterativeOne}),
\item
$\gfp(X_b \mapsto a \cdot a^\infty + b \cdot X_c) = a^\infty + b \cdot X_c$ \hfill (we first instantiate $X_a$ by $a^\infty$)
\item $\gfp(X_c \mapsto c \cdot X_c) = c^\infty$
\end{itemize}
The greatest solution is thus $X_a = a^\infty$, $X_b = a^\infty + b c^\infty$, $X_c = c^\infty$.
Applying the reverse substitution, we get the expected solution $(\infty, 20, 0)$ in $\Trop$.
\end{Example}

Usually, the closed-form solution in \cref{thmClosedSolution} is preferable, as we can work directly in the target semiring.
The symbolic technique is best suited to compute solutions in $\Sinf[\AA]$, which is of interest for semiring provenance analysis.
Compared to the closed-form solution, we need slightly fewer computation steps and can often avoid an intermediate blowup in the size of the polynomials.

\begin{Example}
Recall the equation system $X_1 = b \cdot X_2$, $X_2 = (b+c) \cdot X_2 X_3$, $X_3 = a \cdot X_1$ of \cref{exLogicClosed}. Eliminating indeterminates one by one, we get
\begin{itemize}
\item $\gfp(X_1 \mapsto b \cdot X_2) = b \cdot X_2$
\item $\gfp(X_2 \mapsto (b+c) \cdot X_2 X_3) = \big((b+c)X_3\big)^\infty = b^\infty X_3^\infty + c^\infty X_3^\infty$
\item $\gfp(X_3 \mapsto ab \cdot (b^\infty X_3^\infty + c^\infty X_3^\infty)) = a^\infty b^\infty + a^\infty b^\infty c^\infty = a^\infty b^\infty$
\end{itemize}
and by substituting the results backwards, we obtain $X_1 = X_2 = X_3 = a^\infty b^\infty$.
\end{Example}

\begin{remark}
An attentive reader may have noticed that we have phrased all results in this section for the semiring $\Sinf[\XX]$ of absorptive polynomials, whereas the results in \cite{Kleene} apply to polynomials $K[\XX]$ over an arbitrary Kleene algebra $K$.
The reason for this restrictive choice is that the iterative lifting to multivariate equation systems in \cref{thmIterativeTwo} requires some form of compositionality:
in \cite{Kleene}, $K[X,Y]$ is viewed as $K[X][Y]$.
Unfortunately, the usual notion of polynomials $K[X]$ over an absorptive semiring $K$, as in \cref{defPolynomial}, is itself not absorptive, and it seems not obvious how one would define an absorptive version of $K[X]$ (for instance, if $a,b \in K$ are incomparable we want $(aX + aX^2) + bX^2 = aX + bX^2$ by absorption, but also $aX + (aX^2 + bX^2) = aX + (a+b)X^2$, violating associativity).
As our proofs rely on absorption, they do not apply to $K[X][Y]$.

Since our main result already provides a direct computation in any absorptive, fully-continuous semiring, we have here restricted our interest to $\Sinf[\XX]$.
In fact, it was this limitation that motivated our search for a closed form solution.
\end{remark}

\begin{remark}\label{remarkQuestion}
For least solutions, the one-dimensional solution in \cref{thmIterativeOne} in fact implies the solution in \cref{thmClosedSolution}, as shown in \cite[Prop.~28]{GGS19}.
It seems an interesting question if a similar connection holds for greatest solutions, i.e., can the solution $F^\ell(F^\ell(\one)^\infty)$ for $\ell$ equations be derived from the solution $P(0) + P'(1)^\infty$ of a single equation by algebraic methods, without derivation trees?
\end{remark}

\section{Conclusion}

We have presented two methods to compute least and, most importantly, greatest solutions of polynomial equation systems over absorptive, fully-continuous semirings.
Both methods require only polynomially many applications of the semiring operations and the infinitary power, in terms of the number of equations.

While we assume full continuity mostly to guarantee the existence of both kinds of solutions, absorption is a strong assumption that leads to a particularly simple way of computing solutions.
Our motivation to consider absorptive semirings comes from semiring provenance of fixed-point logics, where our methods can directly be applied to compute provenance information, for example of Büchi games or formulae in fixed-point logics such as $L_\mu$ or LFP.

The first method, and our main result, is a closed-form solution that works in any absorptive, fully-continuous semiring and is as easy as computing the standard fixed-point iteration with an added application of the infinitary power.
To prove the correctness for greatest solutions, we extended the notion of derivation trees used in the analysis of Newton's method \cite{Newton} to infinite trees.
Derivation trees provide an intuitive tool to understand the fixed-point iteration, but require somewhat involved arguments and constructions to properly handle infinite trees.
Our main technical contribution is that it suffices to consider trees of a particular shape resembling the solution term $F^\ell(F^\ell(\one)^\infty)$, intuitively corresponding to a reachability prefix with infinitely repeating deterministic subtrees.
For the second method, we applied results on least solutions over Kleene algebras \cite{Kleene} specifically to the semiring of generalized absorptive polynomials, and extended these results by similar observations for greatest solutions.

Comparing the two proofs, we see that the symbolic approach has a simpler algebraic proof, raising the question whether we can avoid the constructions of infinite trees in our main proof in favor of algebraic arguments.
A further direction for future work is to study systems of nested fixed points over absorptive semirings.
Recently, quasipolynomial-time algorithms have been developed to solve such systems in the Boolean case \cite{ArnoldNiwinskiParys21} or over finite lattices \cite{HausmannSchroder19}.
With the simple computation based on the fixed-point iteration, absorptive semirings might be a candidate to further increase the applicability of these algorithms.

\subsubsection*{Acknowledgements.}

I would like to thank the anonymous reviewers for their helpful comments and for suggesting related concepts, in particular \cite{Cohen00,GGS19,GondranMinoux08} which lead to \cref{remarkQuestion} and \cref{Appendix:Infpow}.

%%%%%%%%%%%%%%%%%%%%%%%%%%%%%%%%
%%  REFERENCES
%%%%%%%%%%%%%%%%%%%%%%%%%%%%%%%%

\bibliographystyle{splncs04}
\renewcommand{\doi}[1]{\url{https://doi.org/#1}} % DOI in \texttt and with proper linebreaks
\renewcommand{\doi}[1]{\href{https://doi.org/#1}{\texttt{doi:#1}}} % even better DOI
\bibliography{fullversion}

%%%%%%%%%%%%%%%%%%%%%%%%%%%%%%%%
%%  APPENDIX
%%%%%%%%%%%%%%%%%%%%%%%%%%%%%%%%

\clearpage
\appendix

% Custom numbering of lemmas in appendix (to clearly distinguish from main body)
\renewcommand{\thelemma}{\thesection\arabic{theorem}}
\renewcommand{\theproposition}{\thesection\arabic{theorem}}
\setcounter{theorem}{0}

\section{Omitted Proofs}
\label{Appendix:Proofs}

This appendix contains proofs that were omitted or only sketched in the main paper.

\subsection{Proofs of Section 2}

In \cref{remarkContinuitySets}, we mentioned the observation in \cite{DannertGraNaaTan21} that the existence of suprema of chains implies the existence of arbitrary surpema (and thus also arbitrary infima so that $\le_K$ is a complete lattice).
This follows from arguments of Markowsky \cite{Markowsky76} on chain-complete posets:
first observe that due to idempotence, the supremum of finitely many elements is simply their sum; suprema of finite sets and of chains then suffice to guarantee suprema of arbitrary sets, as shown in \cite[Corollary 5]{Markowsky76}.
Following Markowsky's arguments, one can further show the following relation between suprema of chains and sets:

\begin{lemma}
Let $K_1,K_2$ be fully-continuous semirings.
If $f \co K_1 \to K_2$ preserves suprema of finite sets and of nonempty chains, then $f$ also preserves suprema of arbitrary sets.
\end{lemma}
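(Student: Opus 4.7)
The plan is to reduce arbitrary suprema to directed suprema using idempotence, and then bridge directed suprema to chain suprema via the classical Markowsky/Iwamura transfinite argument.

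First, I would note that $f$ is automatically monotone: if $a \le_{K_1} b$, then $a + b = b$, so applying $f$ and using finite-supremum preservation yields $f(a) + f(b) = f(b)$, i.e., $f(a) \le_{K_2} f(b)$. This monotonicity is what lets the later arguments interact cleanly with the order.

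Next, given any $S \subseteq K_1$, I would form the directed set $D = \{\sum F \mid F \subseteq S \text{ finite, nonempty}\}$ of finite sums. Directedness is immediate, since $\sum F_1$ and $\sum F_2$ both lie below $\sum(F_1 \cup F_2) \in D$. By idempotence, $\sum F = \Sup F$ for finite $F$, so $S \subseteq D$ and every element of $D$ lies below $\Sup S$; hence $\Sup D = \Sup S$. A symmetric computation on the image side, using finite-supremum preservation by $f$, shows $\Sup f(D) = \Sup f(S)$. Thus the lemma reduces to proving $f(\Sup D) = \Sup f(D)$ for directed $D$.

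This last step is the main obstacle. I would proceed by transfinite induction on $\kappa = |D|$. For finite $D$, a directed finite set contains its own supremum, so both sides equal $f$ of that maximum. For infinite $\kappa$, I would invoke Iwamura's lemma to write $D = \bigcup_{\alpha < \kappa} D_\alpha$ as a well-ordered, ascending union of directed subsets with $|D_\alpha| < \kappa$. Chain-completeness of $K_1$ then produces the chain $C = (\Sup D_\alpha)_{\alpha < \kappa}$, and monotonicity combined with $D = \bigcup_\alpha D_\alpha$ yields $\Sup D = \Sup C$. The induction hypothesis delivers $f(\Sup D_\alpha) = \Sup f(D_\alpha)$ for each $\alpha$, and chain-supremum preservation of $f$ closes the argument:
\[
f(\Sup D) = f(\Sup C) = \Sup_{\alpha < \kappa} f(\Sup D_\alpha) = \Sup_{\alpha < \kappa} \Sup f(D_\alpha) = \Sup f(D).
\]
The hard part will be the Iwamura decomposition and the accompanying set-theoretic bookkeeping; once that machinery is in place, the semiring content of the proof is confined to monotonicity and the reduction to finite sums via idempotence.
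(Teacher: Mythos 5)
Your proposal is correct and follows essentially the same route the paper intends: the paper gives no explicit proof but defers to Markowsky's arguments (idempotence turns finite suprema into sums, and the Iwamura decomposition plus transfinite induction transfers chain-continuity to directed-continuity and hence to arbitrary suprema), which is precisely what you spell out. The only detail worth noting is the empty set, where preservation of $\Sup\emptyset = 0$ is already covered by the finite-set hypothesis.
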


This applies in particular to addition and multiplication with a fixed element, as both preserve addition and hence suprema of finite sets.
It follows that in every fully-continuous semiring $K$, it holds that $a \cdot \Sup S = \Sup aS$ for every $a \in K$ and every set $S \subseteq K$.
In other words, $(K, \Sup, \cdot)$ is a quantale.

\begingroup
\renewcommand{\thelemma}{\ref{lemSinfExtra}}
\addtocounter{theorem}{-1}
\begin{lemma}
Let $S \subseteq \Sinf[\XX]$ and $P \in \Sinf[\XX]$. Then,
\begin{enumerate}
\item $P \cdot \sum S = \sum \{P \cdot Q \mid Q \in S\}$, and
\item $(\sum S)^\infty = \sum \{Q^\infty \mid Q \in S\}$, and
\item $h(\sum S) = \sum h(S)$, if $h \co \Sinf[\XX] \to K$ is a fully-continuous homomorphism.
\end{enumerate}
\end{lemma}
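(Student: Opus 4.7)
For parts (1) and (3), I would invoke the preceding Markowsky-style lemma in this appendix: the operations $Q \mapsto P \cdot Q$ and $Q \mapsto h(Q)$ both preserve sums of \emph{finite} sets (the former by finite distributivity, the latter because $h$ is a semiring homomorphism) and suprema of \emph{nonempty chains} (by full continuity of $K$ and of $h$). By that lemma they therefore also preserve suprema of \emph{arbitrary} sets, and since in idempotent semirings $\sum S = \Sup S$, the identities $P\cdot \sum S = \sum\{P\cdot Q\mid Q\in S\}$ and $h(\sum S) = \sum h(S)$ follow at once.

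Part (2) is the interesting case and is specific to $\Sinf[\XX]$. The direction $(\sum S)^\infty \ge \sum\{Q^\infty\mid Q\in S\}$ is immediate by monotonicity: each $Q\in S$ satisfies $Q \le \sum S$, and $(\cdot)^\infty$ is monotone (an infimum of iterated multiplications, each of which is monotone in absorptive semirings), so $Q^\infty \le (\sum S)^\infty$; summing over $Q$ gives the inequality. For the converse, I would expand $(\sum S)^\infty = \Inf_n (\sum S)^n$ and apply \cref{lemSinfInfima} to rewrite this infimum as
\[
    \Sup\bigl\{\,\Inf_n m_n \;\bigm|\; (m_n)_{n\in\Nat}\text{ is a descending monomial chain with } m_n\in (\sum S)^n\,\bigr\}.
\]
Iterating part (1) yields $(\sum S)^n = \sum\{Q_1\cdots Q_n : Q_i\in S\}$, so every monomial of $(\sum S)^n$ can be written as a product $n_1\cdots n_n$ where each $n_i$ is a monomial of some polynomial in $S$.

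It then remains to show that each such $\Inf_n m_n$ is bounded above by some $Q^\infty$ with $Q\in S$. Let $X^* = \{X\in\XX : m_\infty(X)=\infty\}$ be the "limit support" of the chain. The key step is a pigeonhole argument that uses the finiteness of $\XX$ crucially: if, for arbitrarily large $n$, every factor of $m_n$ contained at least one variable outside $X^*$, then, since $\XX\setminus X^*$ is finite and each of its variables has bounded exponent along the chain, some variable $Y\notin X^*$ would collect at least $n/|\XX\setminus X^*|$ many factors, forcing $m_n(Y)\to\infty$ and contradicting $Y\notin X^*$. Hence for some $n$ there is a factor $n_{n,j}$ of $m_n$ whose support lies entirely in $X^*$; this $n_{n,j}$ is a monomial of some $Q\in S$, and $(n_{n,j})^\infty$ has exponent $\infty$ exactly on $\mathrm{supp}(n_{n,j})\subseteq X^*$ and $0$ elsewhere, so it absorbs $\Inf_n m_n$. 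Thus $\Inf_n m_n \le (n_{n,j})^\infty \le Q^\infty \le \sum\{Q^\infty\mid Q\in S\}$, closing the $\le$ direction. The main obstacle is precisely this combinatorial step, which bridges from a chain of monomial \emph{products} back to a single monomial in a single polynomial of $S$; the degenerate case $X^*=\emptyset$ (where the chain is eventually constant) is dispatched separately by observing that it forces some $Q\in S$ to contain the constant monomial $1$ and hence $\sum S = 1$.
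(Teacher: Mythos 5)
Your proof is correct, but it departs from the paper's argument in both halves. For (1) and (3) the paper does not invoke the Markowsky-style lemma (although it remarks immediately afterwards that this route is available and yields the statements in arbitrary fully-continuous semirings); instead it exploits the specific structure of $\Sinf[\XX]$: since $\sum S = \Max(\bigcup S)$ is a \emph{finite} antichain $m_1 + \dots + m_k$ with each $m_i$ belonging to some $Q_i \in S$, one gets $P \cdot \sum S = Pm_1 + \dots + Pm_k \le PQ_1 + \dots + PQ_k \le \sum\{PQ \mid Q \in S\}$ (and analogously for $h$), the reverse inequality being monotonicity. This same finite-antichain argument also disposes of (2) in one line, via $(\sum S)^\infty = (m_1+\dots+m_k)^\infty = m_1^\infty + \dots + m_k^\infty \le \sum\{Q^\infty \mid Q\in S\}$ using $(a+b)^\infty = a^\infty + b^\infty$ --- the paper simply cites \cite{DannertGraNaaTan19} for (2) rather than spelling this out. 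Your treatment of (2) via \cref{lemSinfInfima} and the pigeonhole on the limit support $X^*$ is a valid but substantially heavier self-contained argument; its combinatorial core (locating a factor supported inside $X^*$) is sound, including the edge cases, though you should note that the step from ``for each $n$ some $Y_n \notin X^*$ has exponent $\ge n/|\XX\setminus X^*|$'' to ``some fixed $Y$ has unbounded exponent'' needs one more appeal to the finiteness of $\XX$ and the monotonicity of exponents along the descending chain. In short: your route for (1) and (3) buys generality beyond $\Sinf[\XX]$ at no real cost, while for (2) the finiteness of antichains gives a much shorter proof than your infimum decomposition.
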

\endgroup
\begin{proof}
We recall that summation is the same as supremum in idempotent semirings.
Statement (2) was shown in \cite{DannertGraNaaTan19} and the other statements follow by the same argument: since $\sum S = \Max (\bigcup S)$, we can write $\sum S$ as finite sum $\sum S = m_1 + \dots + m_k$ where each $m_i$ is contained in some $Q_i \in S$.

For (1), we then have
$P \cdot \sum S = Pm_1 + \dots + P m_k \le P Q_1 + \dots + P Q_k \le \sum \{ P Q \mid Q \in S \}$
and for (3),
$h(\sum S) = h(m_1 + \dots + m_k) = h(m_1) + \dots + h(m_k) \le h(Q_1) + \dots + h(Q_k) \le \sum h(S)$.
The other direction follows by monotonicity in both cases, since $\sum S \ge Q$ for all $Q \in S$.
\qed
\end{proof}

\subsection{Proofs of Section 3}

\begingroup
\renewcommand{\thelemma}{\ref{lemTreesPuzzle}}
\addtocounter{theorem}{-1}
\begin{lemma}[puzzle lemma \cite{DannertGraNaaTan21}]
Let $X \in \XX$ and $K=\Sinf[\AA]$.
Let $(T_n)_{n \in \Nat}$ be a family of trees $T_n \in \Trees(\EE,X)$
such that their yields $\yield(T_n \cut n \one)$ form a descending chain.
Then there is a tree $T' \in \Trees(\EE,X)$ with $\yield(T') \ge \Inf_n \yield(T_n \cut n \one)$.
\end{lemma}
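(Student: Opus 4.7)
The plan is to extract $T'$ from $(T_n)_n$ by a K{\"o}nig-style compactness argument and then verify $\yield(T') \ge c$ by counting coefficient occurrences, where $c \coloneqq \Inf_n \yield(T_n \scut n \one) \in \Sinf[\AA]$. Since the puzzle lemma is applied in the symbolic-abstraction setting (coefficients of $\EE$ are distinct indeterminates from $\AA$), every $\yield(v)$ lies in $\AA \cup \{1\}$ and thus each $\yield(T_n \scut n \one)$ is a single monomial. The descending-chain hypothesis makes the exponents non-decreasing, so $c$ is itself a monomial $c = \prod_{a \in \AA} a^{e_a}$ with $e_a = \sup_n \alpha_n(a) \in \Ninf$, where $\alpha_n(a)$ denotes the $a$-exponent of $\yield(T_n \scut n \one)$. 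Unfolding the natural order of $\Sinf[\AA]$ (in which $x \le y$ iff $y$ absorbs $x$, i.e., has pointwise smaller-or-equal exponents), the target $\yield(T') \ge c$ reduces to showing that, for every $a \in \AA$, $T'$ contains at most $e_a$ nodes whose yield equals $a$ (vacuous when $e_a = \infty$).

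\textbf{K{\"o}nig construction of $T'$.} I build $T'$ inductively as $T' = \bigcup_d U_d$, maintaining at each stage $d$ a finite prefix $U_d$ of depth $d$ together with an infinite index set $\mathcal{N}_d \subseteq \Nat$ such that $T_n \scut d \one = U_d$ for every $n \in \mathcal{N}_d$ with $n \ge d$. The base case is trivial: $U_0$ is the single root labelled $X$ and $\mathcal{N}_0 = \Nat$. In the inductive step, each non-final depth-$d$ leaf $v$ of $U_d$ has label $Y = \var(v)$, and every $T_n$ with $n \in \mathcal{N}_d$, $n > d$ expands the corresponding position by some pair $(c^v_n, m^v_n) \in P_Y$; since $P_Y$ is finite and $U_d$ has finitely many leaves, iterated pigeonhole produces an infinite $\mathcal{N}_{d+1} \subseteq \mathcal{N}_d$ on which all these choices agree simultaneously, defining the common extension $U_{d+1}$. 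The limit $T'$ is a compatible derivation tree in $\Trees(\EE, X)$ because every internal node inherits a choice $(c^v, m^v) \in P_{\var(v)}$. Trees $T_n$ that terminate early with $\mon(v) = 1$ at some position are handled by marking the corresponding leaf of $U_d$ as \emph{final} and excluding it from further extension; if all leaves eventually become final, then $T' = U_d$ is a finite compatible tree and the rest of the argument is unaffected.

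\textbf{Counting bound and main obstacle.} For the verification, fix $a \in \AA$ with $e_a < \infty$ and an arbitrary depth $d$; pick $n \in \mathcal{N}_d$ with $n \ge d$, which exists since $\mathcal{N}_d$ is infinite. From $T' \scut d \one = T_n \scut d \one$, the $a$-count of $T' \scut d \one$ equals the $a$-count of the depth-$<d$ portion of $T_n$, which is in turn bounded by the total $a$-count of $T_n \scut n \one$, namely $\alpha_n(a) \le e_a$. Since this bound holds uniformly in $d$ and the $a$-count of $T'$ is the supremum of the $a$-counts of $T' \scut d \one$, we obtain $\yield(T') \ge c$ as required. The principal delicate point is essentially bookkeeping: executing the simultaneous K{\"o}nig extension across the finitely many leaves of $U_d$ while tolerating finite $T_n$ that terminate early, and verifying that the resulting $T' = \bigcup_d U_d$ really is a valid compatible derivation tree; once these details are pinned down, the counting inequality itself is immediate.
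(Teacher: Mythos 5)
Your proof is correct, and it takes a genuinely different route from the paper's. The paper's argument (following the puzzle lemma of \cite{DannertGraNaaTan21}) splits the infimum into exponents $e(m,Y) = \sup_n e_n(m,Y)$, singles out the finitely many \emph{problematic} monomials (those with finite $e(m,Y)$), decomposes a suitable $T_n \cut n \one$ into $(\ell+1)$-layers to locate a layer free of problematic monomials, and builds $T'$ by repeating that layer forever via the deterministic construction of \cref{lemDeterministicConstruction}. You instead run a K\"onig-style diagonalization: because each $P_Y$ is finite and the monomials of $\EE$ have finite degree, there are only finitely many ways to expand a finite prefix by one level, so nested pigeonholing yields infinite index sets $\mathcal{N}_d$ and a limit tree $T'$ whose depth-$d$ truncation coincides with that of some $T_n$ with $n \ge d$; the inequality then follows from the one-line bound that the number of $a$-labelled nodes in $T' \cut d \one$ is at most $\alpha_n(a) \le e_a$, uniformly in $d$. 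Your route is more elementary and self-contained --- it needs neither the layer decomposition nor the determinization machinery, and it uses the descending-chain hypothesis only to identify $\Inf_n \yield(T_n \cut n \one)$ with the monomial $\prod_{a} a^{e_a}$, not for the construction itself. What the paper's construction buys in exchange is a $T'$ with extra structure (eventually deterministic and periodic), which echoes the shape analysis of Section 4, but that structure is not needed where the lemma is applied. One shared caveat: like the paper's own proof, yours presupposes that each $\yield(T_n \cut n \one)$ is a single monomial, i.e., that the coefficients of $\EE$ are indeterminates from $\AA$; this is exactly the symbolic-abstraction setting in which the lemma is invoked, and since you state the assumption explicitly, it is not a gap.
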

\endgroup

\begin{proof}
We sketch the main steps of the proof, more details can be found in \cite{DannertGraNaaTan19}
(notice that our notion of truncation is simpler, as we cut off at a certain depth instead of counting $R$-nodes).

\begin{enumerate}
\item \textit{Chain splitting}

Let $e_n(m,Y) = \varcount {T_n \cut n \one} {m,Y} \in \Ninf$
be the number of occurrences of $m \in P_Y$ in the tree $T \cut n \one$.
As usual, let $c_{m,Y}$ be the coefficient
of $m$ in $P_Y$ so that
\[
  y_n \coloneqq \yield(T_n \cut n \one) =
  \prod_{\substack{Y \in \XX,\\m \in P_Y}} (c_{m,Y})^{e_n(m,Y)}.
\]
Using the chain splitting lemma from \cite{DannertGraNaaTan19},
we can write the infimum as
\[
  \Inf_{n \in \Nat} y_n =
  \prod_{\substack{Y \in \XX,\\m \in P_Y}} (c_{m,Y})^{e(m,Y)}, \qquad \text{with } e(m,Y) \coloneqq \Sup_{n \in \Nat} e_n(m,Y).
\]

\item \textit{Problematic monomials}

We say that a monomial $m \in P_Y$ is \emph{problematic} if $e(m,Y)$ is finite.
Unproblematic monomials do not impose any restrictions,
as they may appear arbitrarily often (finite or infinite) in the tree $T'$ we construct
(and $T'$ still satisfies the desired inequality $\yield(T') \ge \Inf_n y_n$).
As the polynomial equation system is finite, there are only finitely many problematic monomials.

\item \textit{Decomposition into $k$-layers}

We now decompose each of the trees $T_n \cut n \one$ into $k$-layers,
each of which simply consists of $k$ consecutive levels of the tree.
We choose the layers such that they cover the entire tree and do not overlap.
If $k$ is some constant, we can choose a large enough $n$
such that there is a $k$-layer in $T_n \cut n \one$ that does not contain any problematic monomials
(as there are only finitely many).

\item \textit{Repetition of a $k$-layer}

We now construct $T'$ by first following the tree $T_n$ (for the chosen $n$),
but upon reaching the $k$-layer without problematic monomials,
we continue $T'$ by repeating this layer over and over, so that no further problematic monomials occur in $T'$.
To this end, note that the $k$-layer is a forest consisting of several trees.
We determinize each such tree $S$ by \cref{lemDeterministicConstruction} and obtain a (possibly infinite) deterministic tree $S'$ that we use in $T'$ to replace $S$.
To ensure that \cref{lemDeterministicConstruction} can be applied, we choose $k = \ell+1$, so that each path trough the $k$-layer must contain a repetition of indeterminates
(cf.\ \cref{lemDeterministicExists}).

\item \textit{Conclusion}

This construction ensures that the number of occurrences of problematic monomials in $T'$
is bounded by their occurrences in $T_n \cut n \one$.
In other words, $\varcount {T'} {m,Y} \le e_n(m,Y) \le e(m,Y)$ for all problematic $m \in P_Y$
and thus $\varcount {T'} {m,Y} \le e(m,Y)$ for all (problematic and unproblematic) $m,Y$.
Hence
\[
  \yield(T') =
  \prod_{\substack{Y \in \XX,\\m \in P_Y}} (c_{m,Y})^{\varcount {T'} {m,Y}} \ge
  \prod_{\substack{Y \in \XX,\\m \in P_Y}} (c_{m,Y})^{e(m,Y)} =
  \Inf_{n \in \Nat} y_n. \tag*{\qed}
\]
\end{enumerate}
\end{proof}

\subsection{Proofs of Section 4}

\begingroup
\renewcommand{\thecorollary}{\ref{lemDeterministicExists}}
\addtocounter{theorem}{-1}
\begin{corollary}
For each $T \in \Trees(\EE,X)$, there is a deterministic tree $T' \in \Trees(\EE, X)$ such that $\yield(T \cut \ell \one)^\infty \le \yield(T')^\infty$.
\end{corollary}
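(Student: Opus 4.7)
The plan is to reduce to the second statement of Lemma~\ref{lemYieldComparison} (yield comparison): it suffices to exhibit a deterministic $T' \in \Trees(\EE, X)$ all of whose node-monomials $(\var(v), \mon(v))$ already appear in $T \cut \ell \one$. Here it is important to remember that depth-$\ell$ nodes of the truncation have had their monomial information overwritten (they receive yield $\one_{\var(v)} = 1$), so ``appears in $T \cut \ell \one$'' should really be read as ``appears at some node of depth strictly less than $\ell$''.

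The deterministic tree $T'$ itself will be produced by Lemma~\ref{lemDeterministicConstruction}, which requires a set $\XX_0 \subseteq \XX$ and a monomial assignment $Y \mapsto m_Y \in P_Y$ with two properties: (a) every indeterminate occurring in $m_Y$ belongs to $\XX_0$, and (b) $X \in \XX_0$. I would set $\XX_0$ to contain the indeterminates that label some depth-$<\ell$ node of $T \cut \ell \one$, and for each such $Y$ pick $m_Y = \mon(v)$ for a suitable witness $v$; this automatically ensures that each chosen $m_Y$ is a monomial appearing in $T \cut \ell \one$, which is exactly what Lemma~\ref{lemYieldComparison} demands. Since the root is labeled $X$ at depth $0 < \ell$, (b) is immediate.

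The main obstacle is closure, condition (a). A child of a witness $v$ labeled $Y$ at depth $d < \ell$ sits at depth $d + 1 \le \ell$; if $d + 1 < \ell$ the child's label is in $\XX_0$ trivially, but the delicate case is $d = \ell - 1$. Here the pigeonhole principle saves us: any root-to-depth-$\ell$ path in $T \cut \ell \one$ has $\ell + 1$ labels drawn from a set of only $\ell$ indeterminates, so some indeterminate must repeat on the path, i.e.\ every such path ``contains a repetition or ends in a leaf'' as remarked in the proof sketch. By exploiting such a repetition to reroute the witness selection --- short-circuiting each path so that any indeterminate needed for the closure is witnessed at a strictly smaller depth via its repeated ancestor --- I can refine $\XX_0$ and $Y \mapsto m_Y$ so that (a) also holds.

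Once the assignment is in place, Lemma~\ref{lemDeterministicConstruction} supplies the desired deterministic tree $T' \in \Trees(\EE, X)$, every node-monomial of which already appears at depth $<\ell$ in $T \cut \ell \one$. Applying the second statement of Lemma~\ref{lemYieldComparison} then concludes $\yield(T \cut \ell \one)^\infty \le \yield(T')^\infty$, completing the proof.
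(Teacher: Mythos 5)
Your overall architecture is right---reduce to the second statement of \cref{lemYieldComparison}, build $T'$ via \cref{lemDeterministicConstruction}, and read ``appears in $T \cut \ell \one$'' as ``appears at a node of depth $<\ell$''---and this matches the paper. The gap is in the closure step, which is the entire content of the corollary. Your starting point, taking $\XX_0$ to be all indeterminates labelling some depth-$<\ell$ node, is in general not a workable choice, and the proposed repair (``short-circuiting each path so that any indeterminate needed for the closure is witnessed at a strictly smaller depth via its repeated ancestor'') cannot succeed as stated. Take $\ell = 4$, $\XX=\{X,W,Y,Z\}$ and a tree $T$ that is a single path $X \to W \to X \to Y \to Z \to \cdots$ (so $P_X$ contains the two monomials $W$ and $Y$). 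Then your $\XX_0 = \{X,W,Y\}$; the unique depth-$<4$ occurrence of $Y$ has monomial $Z$, and $Z$ has \emph{no} occurrence at depth $<4$. No rerouting of witnesses can make $Z$ ``witnessed at a strictly smaller depth,'' and adding $Z$ to $\XX_0$ is forbidden because its monomial would have to be read off a node at depth $\ge \ell$, where $T \cut \ell \one$ no longer records it---that would violate the very hypothesis of \cref{lemYieldComparison} you are relying on. The only correct move is to \emph{drop} $Y$ from $\XX_0$ altogether, which your mechanism does not produce.

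The paper resolves this by growing $\XX_0$ greedily from the root instead of pruning the full set of depth-$<\ell$ labels: start with $\{X\}$ and the root's monomial, and at each level add only those indeterminates forced to appear in an already-chosen monomial, picking their witnesses on the next level down. Each level either adds a fresh indeterminate or terminates, and since $|\XX|=\ell$ the process stops before depth $\ell$; closure then holds by construction and every chosen monomial is taken from a node that survives, with its children, in $T\cut\ell\one$. (In the example this yields $\XX_0=\{X,W\}$ with $m_X = W$ and $m_W = X$, i.e.\ the periodic tree $X \to W \to X \to \cdots$.) Your pigeonhole observation is the right intuition---it is exactly why the greedy process cannot reach depth $\ell$---but it has to drive this root-down selection of \emph{which} indeterminates to keep, not a patching of witnesses for a set $\XX_0$ that may simply be too large.
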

\endgroup
\begin{proof}
We use \cref{lemDeterministicConstruction} to construct a deterministic tree $T'$ with $\yield(T \cut \ell \one)^\infty \le \yield(T')^\infty$ as follows.
We define $\XX_0$ and monomials $m_X$ for $X \in \XX_0$ inductively by traversing the tree $T$ level by level, starting with the root.
We always maintain the following invariant: after traversing level $k$, all indeterminates occurring in $m_X$ for $X \in \XX_0^{(k)}$ are contained in $\XX_0^{(k)}$ or occur at level $k+1$ in $T$.

For level $0$ (consisting only of the root with $\var(\root) = X$), we set $\XX_0^{(0)} \coloneqq \{X\}$ and $m_X \coloneqq \mon(\root)$.
The invariant holds due to the children of $\root$.
Assume we have processed level $k$.
Let $\XX_{\text{todo}}^{(k)}$ be the set of indeterminates that occur in $m_X$ for $X \in \XX_0^{(k)}$, but are not contained in $\XX_0^{(k)}$.
If $\XX_{\text{todo}}^{(k)} = \emptyset$, we are done.
Otherwise, for each $Y \in \XX_{\text{todo}}^{(k)}$, there is a node $v$ at level $k+1$ with $\var(v) = Y$ by the invariant.
Choose any such node and set $m_Y \coloneqq \mon(v)$.
By definition of derivation trees, all indeterminates of $m_Y$ occur as children of $v$ on level $k+2$, so the invariant holds.
Then set $\XX_0^{(k+1)} \coloneqq \XX_0^{(k)} \cup \XX_{\text{todo}}^{(k)}$.

This process never enters level $\ell$:
On each level, at least one indeterminate is added to $\XX_0^{(k)}$ (or we stop) and there are only $\ell$ different indeterminates.
Hence all monomials $m_X$ we choose occur in the truncation $T \cut \ell \one$.
Now obtain $T'$ by applying \cref{lemDeterministicConstruction} to $\bigcup_k \XX_0^{(k)}$ and the chosen monomials $m_X$. The inequality $\yield(T \cut \ell \one)^\infty \le \yield(T')^\infty$ follows by \cref{lemYieldComparison}.
\end{proof}

\section{On an Axiomatization of the Infinitary Power}
\label{Appendix:Infpow}
\newcommand{\infpow}{{}^\infty}

This appendix discusses an axiomatization of the infinitary power operation $a^\infty$ in absorptive semirings, similar to Cohen's $\omega$-algebras \cite{Cohen00}.
To justify the new axioms, we derive the common properties of the infinitary power and provide a completeness result.

\begin{Definition}\label{defInfpowAxioms}
An \emph{$\infpow$-algebra} is an algebraic structure $(K,+,\cdot,\infpow,0,1)$ such that $(K, +, \cdot, 0, 1)$ is an absorptive semiring and the following axioms hold:
\begin{align*}
a^\infty &= a a^\infty \tag{$\infty$ abs}\\
b \le c + ab \;&\to\; b \le c + a^\infty b \tag{$\infty$ ind}
\end{align*}
where $\le$ denotes the natural order.
\end{Definition}

We remark that every $\infpow$-algebra is in particular a Kleene algebra (see e.g.\ \cite{Kozen94}) by setting $a^* = 1$ for all $a \in K$.
For a complete axiomatization, we can thus use axioms (3) -- (13) of \cite{Kozen94} together with commutativity $ab=ba$, absorption $1 + a = 1$ and the $\infpow$-axioms.
Moreover, every $\infpow$-algebra is also an $\omega$-algebra by setting $a^\omega = a^\infty$,
since ($\infty$ ind) and absorption imply Cohen's axiom
\[
  b \le c + ab \;\to\; b \le a^*c + a^\infty. \tag{$\star$ ind}
\]

Clearly, every absorptive, fully-continuous semiring is an $\infpow$-algebra (by defining $a^\infty$ as in \cref{defInfpow}), since ($\infty$ ind) holds by \cref{lemInfpowIndProof} and ($\infty$ abs) follows from continuity: $a a^\infty = a \cdot \Inf_n a^n = \Inf_n a^{n+1} = a^\infty$.
Conversely, we show how to derive the main properties of the infinitary power in such semirings from the axioms in \cref{defInfpowAxioms}.
The first lemma provides an alternative proof of \cref{lemChainTaylor} (3) and implies that $\infpow$ is monotone w.r.t.\ $\le$ (using $c=1$).

\begin{lemma}
Every $\infpow$-algebra satisfies the implication
\[
  ac \le bc \;\to\; a^\infty c \le b^\infty c.
\]
\end{lemma}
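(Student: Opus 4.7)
The strategy is to apply the induction axiom $(\infty\,\mathrm{ind})$ to the element $a^\infty c$ itself, using $b$ as the "iterated" element. Concretely, I will show that $a^\infty c \le 0 + b\cdot(a^\infty c)$, then invoke $(\infty\,\mathrm{ind})$ to conclude $a^\infty c \le b^\infty \cdot (a^\infty c)$, and finally use absorption to absorb the extra factor $a^\infty \le 1$ on the right.

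First, I will observe that the natural order is compatible with multiplication in any idempotent semiring: if $x \le y$ then $xz \le yz$, because $xz + yz = (x+y)z = yz$. Applying this to the hypothesis $ac \le bc$ with $z = a^\infty$ and then using commutativity to rearrange the factors, we obtain $a \cdot (a^\infty c) \le b \cdot (a^\infty c)$. Next, the $(\infty\,\mathrm{abs})$ axiom gives $a^\infty = a\,a^\infty$, so $a^\infty c = a \cdot (a^\infty c)$; combining this equation with the previous inequality yields
\[
    a^\infty c \;=\; a \cdot (a^\infty c) \;\le\; b \cdot (a^\infty c) \;=\; 0 + b \cdot (a^\infty c).
\]

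Now I apply $(\infty\,\mathrm{ind})$ with the instantiation $b := a^\infty c$, $c := 0$, $a := b$. The premise is exactly the inequality just derived, and the conclusion gives $a^\infty c \le 0 + b^\infty \cdot (a^\infty c) = b^\infty a^\infty c$. Finally, since the ambient structure is absorptive, we have $a^\infty \le 1$, hence $b^\infty a^\infty c \le b^\infty \cdot 1 \cdot c = b^\infty c$ by monotonicity of multiplication. Chaining the two inequalities gives $a^\infty c \le b^\infty c$, as required.

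The only nontrivial step is identifying the right instantiation of $(\infty\,\mathrm{ind})$; once one sees that $a^\infty c$ is a fixed point of left multiplication by $a$ (via $(\infty\,\mathrm{abs})$), the hypothesis $ac \le bc$ can be "transported" across this fixed point to replace $a$ by $b$, and the induction axiom then promotes this bound from $b$ to $b^\infty$. Monotonicity of $\cdot$ with respect to $\le$ and the inequality $a^\infty \le 1$ are both straightforward consequences of idempotence and absorption, so no further lemmas are required.
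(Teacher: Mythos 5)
Your proof is correct and follows essentially the same route as the paper's: use $(\infty\,\mathrm{abs})$ together with the hypothesis to get $a^\infty c \le b\,(a^\infty c)$, apply $(\infty\,\mathrm{ind})$, and then absorb the leftover factor $a^\infty \le 1$. The only (harmless, slightly cleaner) difference is that you instantiate the additive constant in $(\infty\,\mathrm{ind})$ as $0$ where the paper uses $b^\infty c$.
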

\begin{proof}
Using ($\infty$ abs) and the assumption, we get $a^\infty c = a^\infty a c \le a^\infty bc$.
By absorption, we thus have $a^\infty c \le b(a^\infty c) \le b(a^\infty c) + b^\infty c$.
Applying ($\infty$ ind) and absorption then gives $a^\infty c \le b^\infty (a^\infty c) + b^\infty c = b^\infty c$ as claimed. \qed
\end{proof}

\begin{lemma}\label{lemInfpowProperties}
In every $\infpow$-algebra, $\infpow$ has the following properties:
\begin{enumerate}
\item $a^n \ge a^\infty$ and $a^n a^\infty = a^\infty$, for all $n \in \NN$,
\item $a^\infty a^\infty = a^\infty$,
\item $(a^\infty)^\infty = a^\infty$,
\item $(ab)^\infty = a^\infty b^\infty$,
\item $(a+b)^\infty = a^\infty + b^\infty$.
\end{enumerate}
\end{lemma}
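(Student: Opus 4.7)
The general strategy is to derive each property from the two axioms ($\infty$ abs) and ($\infty$ ind), together with absorption, commutativity, and the monotonicity corollary $a \le b \implies a^\infty \le b^\infty$ established in the preceding lemma. Most items reduce to one well-chosen application of ($\infty$ ind): the axiom lets us replace a factor $a$ on the right of a fixed-point inequality by $a^\infty$, which by absorption ($a^\infty \le 1$) typically simplifies further to $a^\infty$ standing alone.

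For (1), I would prove $a^n a^\infty = a^\infty$ by induction on $n$; the base case is immediate and the step uses ($\infty$ abs) as $a^{n+1}a^\infty = a(a^n a^\infty) = a \cdot a^\infty = a^\infty$. The inequality $a^\infty \le a^n$ then follows because multiplication is decreasing in absorptive semirings: $a^\infty = a^n a^\infty \le a^n$. For (2), one direction is $a^\infty a^\infty \le a^\infty$ by absorption, while for the reverse we read ($\infty$ abs) as $a^\infty \le 0 + a \cdot a^\infty$ and apply ($\infty$ ind) to obtain $a^\infty \le 0 + a^\infty \cdot a^\infty$. Item (3) is similar: (2) gives $a^\infty \le 0 + a^\infty \cdot a^\infty$, so ($\infty$ ind) yields $a^\infty \le (a^\infty)^\infty \cdot a^\infty \le (a^\infty)^\infty$; the reverse direction is (1) with $n=1$ applied to the base $a^\infty$.

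For (4), monotonicity gives $(ab)^\infty \le a^\infty$ and $(ab)^\infty \le b^\infty$ (since $ab \le a$ and $ab \le b$ in an absorptive semiring); combining with (2) applied to $ab$ yields $(ab)^\infty = (ab)^\infty (ab)^\infty \le a^\infty b^\infty$. For the other direction, commutativity and ($\infty$ abs) together give $a^\infty b^\infty = (a a^\infty)(b b^\infty) = ab \cdot a^\infty b^\infty$, so ($\infty$ ind) yields $a^\infty b^\infty \le 0 + (ab)^\infty \cdot a^\infty b^\infty \le (ab)^\infty$.

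The main obstacle is (5). The easy direction $a^\infty + b^\infty \le (a+b)^\infty$ is immediate from monotonicity since $a, b \le a+b$. For the harder direction, abbreviate $x = (a+b)^\infty$. From ($\infty$ abs) and distributivity, $x = (a+b)x = ax + bx$, hence trivially $x \le bx + ax$. A first application of ($\infty$ ind) with $c = bx$ turns this into $x \le bx + a^\infty x$, which absorption simplifies to $x \le a^\infty + bx$. A second application of ($\infty$ ind), now with $c = a^\infty$ and the absorbed variable being $b$, then gives $x \le a^\infty + b^\infty x \le a^\infty + b^\infty$. The crux is the observation that the single equation $x = ax + bx$ admits two independent induction steps, one for each summand, each of which absorbs one variable into its infinitary power.
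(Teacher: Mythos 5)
Your proof is correct and follows essentially the same route as the paper's: each item is derived from ($\infty$~abs), one or two applications of ($\infty$~ind), and absorption, with the two-step induction argument for (5) matching the paper's exactly. The only small divergence is in (4), where you obtain $(ab)^\infty \le a^\infty b^\infty$ from the monotonicity corollary of the preceding lemma combined with (2), whereas the paper derives it via two further applications of ($\infty$~ind); both are valid.
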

\begin{proof}
\begin{enumerate}
\setlength{\itemsep}{.5em}
\item Follows inductively from ($\infty$ abs) and absorption.

\item Trivially $a^\infty a^\infty \le a^\infty$ by absorption.
For the other direction, we have $a^\infty \le aa^\infty$ by ($\infty$ abs) and thus $a^\infty \le a^\infty a^\infty$ by ($\infty$ ind).

\item By ($\infty$ abs) and absorption, $(a^\infty)^\infty = a^\infty \cdot (a^\infty)^\infty \le a^\infty$.
Conversely, $a^\infty \le a^\infty a^\infty$ by (2) and thus $a^\infty \le (a^\infty)^\infty a^\infty \le (a^\infty)^\infty$ by ($\infty$ ind) and absorption.

\item We have $a^\infty b^\infty \le (ab) a^\infty b^\infty$ by ($\infty$ abs) and thus $a^\infty b^\infty \le (ab)^\infty a^\infty b^\infty \le (ab)^\infty$ by ($\infty$ ind) and absorption.
For the other direction, we observe that $(ab)^\infty = ab(ab)^\infty \le a(ab)^\infty$ by ($\infty$ abs) and absorption.
Hence $(ab)^\infty \le a^\infty (ab)^\infty$ by ($\infty$ ind) and symmetrically, $(ab)^\infty \le b^\infty (ab)^\infty$.
Combining both, we get $(ab)^\infty \le a^\infty b^\infty (ab)^\infty \le a^\infty b^\infty$, with absorption in the last step.

\item By monotonicity, $a+b \ge a,b$ implies $(a+b)^\infty \ge a^\infty, b^\infty$ and thus $(a+b)^\infty \ge a^\infty + b^\infty$ by idempotence (addition is supremum).
Conversely, we have $(a+b)^\infty = a(a+b)^\infty + b(a+b)^\infty$ by ($\infty$ abs).
Thus $(a+b)^\infty \le a^\infty (a+b)^\infty + b^\infty(a+b)^\infty \le a^\infty + b^\infty$ by applying ($\infty$ ind) twice, with absorption in the last step. \qed
\end{enumerate}
\end{proof}

The above lemmas provide a sanity check for our axiomatization.
It is further easy to see that we cannot omit ($\infty$ abs) or ($\infty$ ind), as the remaining axiom would be satisfied by either $a^\infty = a$ or $a^\infty = 0$, which, in general, violates the property $a^\infty = \Inf_n a^n$ we want to axiomatize.
To further justify our axioms, we observe the following straightforward completeness result.

\begin{proposition}
Let $\alpha,\beta$ be two terms constructed from variables, the constants $0$ and $1$, and the operations $+$, $\cdot$ and $\infpow$.
Then, $\alpha = \beta$ holds in every absorptive, fully-continuous semiring if, and only if, it holds in every $\infpow$-algebra.
\end{proposition}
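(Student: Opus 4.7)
The plan is to split the biconditional into an easy and a substantive direction. The "if" direction is immediate from the observation made just after \cref{defInfpowAxioms} that every absorptive, fully-continuous semiring is an $\infpow$-algebra; any equation universally valid in $\infpow$-algebras thus specializes to every such semiring.

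For the "only if" direction, let $\XX$ denote the finite set of variables occurring in $\alpha$ or $\beta$. Evaluating in $\Sinf[\XX]$ under the assignment sending each variable to itself, the universal property (\cref{thmSinfUniversal}) together with the assumption forces $\alpha$ and $\beta$ to evaluate to the same absorptive polynomial $P \in \Sinf[\XX]$, that is, the same antichain of generalized monomials $\{m_1,\dots,m_k\}$. I will then show that in any $\infpow$-algebra $K$, both terms can be transformed by purely axiomatic manipulations into the canonical term $\sum_{i=1}^{k} m_i^\sharp$, where $m_i^\sharp$ spells out finite exponents as iterated products and renders the exponent $\infty$ via the $\infpow$ operation applied to the corresponding variable. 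This will imply $\alpha = \sum_{i} m_i^\sharp = \beta$ in every $\infpow$-algebra.

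The normalization proceeds in four steps, each relying solely on the axioms of \cref{defInfpowAxioms} and the consequences already established in \cref{lemInfpowProperties}. First, push $\infpow$ inwards using the identities $(ab)\infpow = a\infpow b\infpow$, $(a+b)\infpow = a\infpow + b\infpow$ and $(a\infpow)\infpow = a\infpow$ until every occurrence of $\infpow$ applies directly to a variable. Second, apply distributivity, commutativity and associativity to expand the term into a sum of products whose factors are variables and terms of the form $X\infpow$. Third, within each summand, collapse repeated factors using $X \cdot X\infpow = X\infpow$ (a direct consequence of ($\infty$ abs)) and ordinary exponent arithmetic, so that each variable carries a single exponent in $\Ninf$. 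Fourth, whenever two monomials satisfy $m \absorbs m'$ (pointwise $\le$ on exponents), write $m' = m \cdot r$ for some monomial term $r$ and apply semiring absorption to conclude $m + m' = m(1+r) = m$; iterating this, the sum collapses to the antichain $P$.

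The main obstacle will be the bookkeeping in the last step, specifically when $m(X) = \infty$ for some variable $X$. To express $m' = m \cdot r$ syntactically, I will rely on $X\infpow \cdot X^n = X\infpow$ (obtained by iterating ($\infty$ abs)) and $X\infpow \cdot X\infpow = X\infpow$ from \cref{lemInfpowProperties}, so that $r$ is always a well-formed monomial term in the language of $\infpow$-algebras. Once these reductions are in place, both $\alpha$ and $\beta$ reduce axiomatically to the identical canonical term, and since every rewriting step is an identity in the theory of $\infpow$-algebras, the equation $\alpha = \beta$ holds in every such algebra.
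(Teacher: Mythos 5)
Your proposal is correct and follows essentially the same route as the paper's proof: the trivial direction via the observation that every absorptive, fully-continuous semiring is an $\infpow$-algebra, then normalizing both terms to sums of monomials using the derived identities for $\infpow$, identifying the resulting antichains via evaluation in $\Sinf[\XX]$, and finally deriving the absorption $m + m' = m$ for $m \absorbs m'$ axiomatically (your factorization $m' = m\cdot r$ with $m(1+r)=m$ is just a repackaging of the paper's per-variable monotonicity argument). No gaps.
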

\begin{proof}
One implication is trivial, since every absorptive, fully-continuous semiring is an $\infpow$-algebra.
For the converse, assume that $\alpha = \beta$ holds in every absorptive, fully-continuous semiring.
We show that we can derive $\alpha = \beta$ from the $\infpow$-axioms and the properties of absorptive semirings.

Using \cref{lemInfpowProperties} and distributivity, we can rewrite each term into a sum of monomials (where we allow $\infty$ as exponent).
We can thus derive equalities of the form $\alpha = m_1 + \dots + m_k$ and $\beta = m_1' + \dots + m'_l$.
Let $\XX$ be the set of variables occurring in $\alpha$ and $\beta$.
We can then view $P_\alpha \coloneqq m_1 + \dots + m_k$ and $P_\beta \coloneqq m_1' + \dots + m'_l$ as absorptive polynomials in $\Sinf[\XX]$.
Since $\alpha = \beta$ and hence $P_\alpha = P_\beta$ holds in every absorptive, fully-continuous semiring, it also holds in $\Sinf[\XX]$.
That is, $\Max\{m_1,\dots,m_k\} = \Max\{m_1',\dots,m'_l\}$.

To prove that we can derive $P_\alpha = \sum \Max\{m_1,\dots,m_k\}$, and analogously for $P_\beta$, it suffices to show that whenever $m \absorbs m'$ holds for two monomials in $\Sinf[\XX]$, we can derive $m \ge m'$ from the axioms.
Since $\cdot$ is monotone and $\XX$ finite, we can reason about each variable in $m,m'$ separately.
Now assume that $X^a \absorbs X^b$, so $a \le b$.
If $a,b < \infty$, we can derive $X^a \ge X^b$ by absorption.
If $b = \infty$, then we can derive $X^a \ge X^b$ using \cref{lemInfpowProperties}.
It follows that $P_\alpha = P_\beta$  and hence $\alpha = \beta$ holds in every $\infpow$-algebra. \qed
\end{proof}

\end{document}